\newtheorem{theorem}{Theorem}[section]
\newtheorem{lemma}[theorem]{Lemma}
\newtheorem{claim}[theorem]{Claim}
\newtheorem{definition}[theorem]{Definition}
\newcommand{\E}{\mathop{\mathbf{E}}}
\newenvironment{proofof}[1]{\begin{trivlist} \item {\bf Proof
#1:~~}}
  {\qed\end{trivlist}}
\title{Sensitivity Analysis of the Maximum Matching Problem}
\author{Yuichi Yoshida\thanks{Supported by JST, PRESTO Grant Number JPMJPR192B, Japan.}\\
National Institute of Informatics\\
JST, PRESTO\\
\texttt{yyoshida@nii.ac.jp}
\and
Samson Zhou\thanks{Supported by a Simons Investigator Award of David P. Woodruff.}\\
Carnegie Mellon University\\
\texttt{samsonzhou@gmail.com}
}
\date{\today}
\begin{document}

\maketitle
\begin{abstract}
  We consider the \emph{sensitivity} of algorithms for the maximum matching problem against edge and vertex modifications.
  When an algorithm $A$ for the maximum matching problem is deterministic, the sensitivity of $A$ on $G$ is defined as $\max_{e \in E(G)}|A(G) \triangle A(G - e)|$, where $G-e$ is the graph obtained from $G$ by removing an edge $e \in E(G)$ and $\triangle$ denotes the symmetric difference.
  When $A$ is randomized, the sensitivity is defined as $\max_{e \in E(G)}d_{\mathrm{EM}}(A(G),A(G-e))$, where $d_{\mathrm{EM}}(\cdot,\cdot)$ denotes the earth mover's distance between two distributions.
	Thus the sensitivity measures the difference between the output of an algorithm after the input is slightly perturbed.
  Algorithms with low sensitivity, or \emph{stable} algorithms are desirable because they are robust to edge failure or attack.

  In this work, we show a randomized $(1-\epsilon)$-approximation algorithm with \emph{worst-case} sensitivity $O_{\epsilon}(1)$, which substantially improves upon the $(1-\epsilon)$-approximation algorithm of Varma and Yoshida (arXiv 2020) that obtains \emph{average} sensitivity $n^{O(1/(1+\epsilon^2))}$ sensitivity algorithm, and show a deterministic $1/2$-approximation algorithm with sensitivity $\exp(O(\log^*n))$ for bounded-degree graphs.
  We then show that any deterministic constant-factor approximation algorithm must have sensitivity $\Omega(\log^* n)$.
  Our results imply that randomized algorithms are strictly more powerful than deterministic ones in that the former can achieve sensitivity independent of $n$ whereas the latter cannot.
  We also show analogous results for vertex sensitivity, where we remove a vertex instead of an edge.
	As an application of our results, we give an algorithm for the online maximum matching with $O_{\epsilon}(n)$ total replacements in the vertex-arrival model.
	By comparison, Bernstein~et~al.~(J. ACM 2019) gave an online algorithm that always outputs the maximum matching, but only for bipartite graphs and with $O(n\log n)$ total replacements.

	Finally, we introduce the notion of normalized weighted sensitivity, a natural generalization of sensitivity that accounts for the weights of deleted edges.
	For a graph with weight function $w$, the normalized weighted sensitivity is defined to be the sum of the weighted edges in the symmetric difference of the algorithm normalized by the altered edge, i.e., $\max_{e \in E(G)}\frac{1}{w(e)}w\left(A(G) \triangle A(G - e)\right)$.
	Hence the normalized weighted sensitivity measures the weighted difference between the output of an algorithm after the input is slightly perturbed, normalized by the weight of the perturbation.
	We show that if all edges in a graph have polynomially bounded weight, then given a trade-off parameter $\alpha>2$, there exists an algorithm that outputs a $\frac{1}{4\alpha}$-approximation to the maximum weighted matching in $O(m\log_{\alpha} n)$ time, with normalized weighted sensitivity $O(1)$.
\end{abstract}

\thispagestyle{empty}
\setcounter{page}{0}
\newpage


\section{Introduction}
The problem of finding the maximum matching in a graph is a fundamental problem in graph theory with a wide range of applications in computer science.
For example, the maximum matching problem on a bipartite graph $G$ captures a typical example where a number of possible clients want to access content distributed across multiple providers.
Each client can download their specific content from a specific subset of the possible providers, but each provider can only connect to a limited number of clients.
A maximum matching between clients and providers would ensure that the largest possible number of clients receive their content.

However in many modern applications, the underlying graph $G$ represents some large dataset that is often dynamic or incomplete.
In the above example, the content preference of clients may change, which alters the set of suppliers that provide their desired content.
Connections between specific providers and clients may become online or offline, effectively adding or removing edges in the underlying graph.
Providers and clients may themselves join or leave the network, adding or removing entire vertices from the graph.
Thus, it is reasonable to assume that our knowledge of important properties of $G$ may also change or be incomplete.
Nevertheless, we must extract information from our current knowledge of $G$ either for pre-processing or to perform tasks on the current infrastructure.
At the same time, we would like to maintain as much consistency as possible when updates to $G$ are revealed.

Motivated by a formal definition of consistency of algorithms across graph updates, Varma and Yoshida~\cite{VY19:sensitivity} first defined the \emph{average sensitivity} of a deterministic algorithm $A$ to be the Hamming distance\footnote{Here we regard the output as a binary string so we can think of the Hamming distance between outputs.} between the output of $A$ on graphs $G$ and $G-e$, where $G'$ is the graph formed by deleting a random edge of $G$.
Then, they defined \emph{average sensitivity} for randomized algorithms as
\[\underset{e\sim E(G)}{\mathbb{E}}\left[d_{\mathrm{EM}}(A(G),A(G-e))\right],\]
where $d_{\mathrm{EM}}(\cdot,\cdot)$ denotes the earth mover's distance and $G-e$ is the graph obtained from $G$ by deleting an edge $e \in E(G)$.
For the maximum matching problem, they showed a randomized $1/2$-approximation algorithm with average sensitivity $O(1)$ and a randomized $(1-\epsilon)$-approximation algorithm with average sensitivity $O(n^{1/(1+\epsilon^2)})$.

\paragraph{Worst case sensitivity.}
In this work, we continue the study of sensitivity for the maximum matching problem.
Instead of average sensitivity as in~\cite{VY19:sensitivity}, we consider a stronger notion of (worst-case) sensitivity.
Specifically, the \emph{sensitivity} of a deterministic algorithm $A$ is the maximum Hamming distance between the output of $A$ on graphs $G$ and $G'$, where $G'$ is the graph formed by deleting an edge of $G$.
Then, the \emph{sensitivity} of a randomized algorithm $A$ is
\[
  \max_{e \in E(G)}d_{\mathrm{EM}}(A(G),A(G-e)).
\]
Clearly, the sensitivity of an algorithm is no smaller than its average sensitivity.
As a natural variant, we also consider \emph{vertex sensitivity}, where we delete a vertex instead of an edge.
To avoid confusion, sensitivity with respect edge deletion will be sometimes called \emph{edge sensitivity}.

\subsection{Our Contributions}
We first show that, for any $\epsilon>0$, there exists a randomized $(1-\epsilon)$-approximation algorithm whose sensitivity solely depends on $\epsilon$ (Section~\ref{sec:randomized}).
\begin{restatable}{theorem}{thmrndmatching}\label{thm:matching}
  For any $\epsilon > 0$, there exists an algorithm that outputs a $(1-\epsilon)$-approximation to the maximum matching problem with probability at least $0.99$, using time complexity $O((n+m)\cdot K)$ and edge/vertex sensitivity $O(3^K)$, where $K = {(1/\epsilon)}^{2^{O(1/\epsilon)}}$.
\end{restatable}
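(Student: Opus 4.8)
The plan is to build a randomized algorithm by combining two standard ingredients from the sublinear/local-algorithms literature: (i) the augmenting-path characterization of $(1-\epsilon)$-approximate matchings, and (ii) a randomized greedy procedure for constructing maximal matchings (or maximal sets of short augmenting paths) whose local nature makes it stable under edge deletion. Recall the classical fact that a matching $M$ is a $(1-\epsilon)$-approximation to the maximum matching whenever $M$ admits no augmenting path of length at most $2k+1$ for $k = \Theta(1/\epsilon)$. So the algorithm proceeds in $k$ phases: in phase $i$ it holds a current matching $M_{i-1}$, considers the (hyper)graph $H_i$ whose vertices are the augmenting paths of length $2i-1$ with respect to $M_{i-1}$ and whose edges connect two such paths if they share a vertex, computes a maximal independent set $I_i$ in $H_i$ by the randomized greedy rule (assign i.i.d.\ uniform priorities in $[0,1]$ to the augmenting paths and greedily include a path if it beats all already-decided neighbors), and augments along the paths in $I_i$ to obtain $M_i$. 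After $k$ phases, $M_k$ has no short augmenting path and is a $(1-\epsilon)$-approximation. The number $K = (1/\epsilon)^{2^{O(1/\epsilon)}}$ is exactly the kind of tower that appears when one bounds the size of the local neighborhood that must be explored to decide whether a single augmenting path is selected across all $k$ phases: each phase's decision for a path depends on a bounded-radius neighborhood in $H_i$, but $H_i$ itself is defined in terms of $M_{i-1}$, which was produced by the previous phase, so the dependency radii compose and blow up phase over phase.

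The key structural claim — and the heart of the sensitivity bound — is that the decision ``is augmenting path $P$ selected in phase $i$?'' is a function only of the priorities and graph structure within a neighborhood of $P$ of bounded size $f(i,1/\epsilon)$. This is the standard ``query tree'' or ``local simulation'' argument à la Nguyen–Onak / Yoshida–Yamamoto–Ito: to resolve $P$ in phase $i$ we must resolve its lower-priority... higher-priority neighbors in $H_i$, which requires knowing $M_{i-1}$ restricted to those paths, which in turn requires resolving a bounded set of phase-$(i-1)$ decisions, and so on down to phase $1$. Because the maximum degree is bounded (this is where the bounded-degree hypothesis, implicit in the $O(3^K)$ form, enters — or alternatively one truncates long augmenting paths/high-degree explorations and absorbs the error into $\epsilon$), each level of this recursion multiplies the neighborhood size by a constant depending on the degree bound and on $i$, giving the claimed $K = (1/\epsilon)^{2^{O(1/\epsilon)}}$ after unrolling $k = \Theta(1/\epsilon)$ levels, and a per-query exploration cost that yields the $O((n+m)K)$ running time once one observes each edge/vertex is visited $O(K)$ times in total.

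Given the locality claim, the sensitivity bound follows by a coupling argument. When we delete an edge $e$, we keep all priorities on surviving augmenting paths unchanged (this is the natural coupling realizing the earth mover's distance). A path $P$'s selection status can change only if $e$ lies in the bounded neighborhood that determines $P$; equivalently, only paths whose determining-neighborhood contains $e$ can flip. Since every augmenting path containing a fixed vertex/edge has the determining neighborhood of size $\le K$, and (in bounded degree, or after truncation) only $O(K)$ augmenting paths have $e$ in their neighborhood, at most $O(K)$ path-decisions flip; each flipped path touches $O(1/\epsilon)$ edges of the matching, and a flip can cascade only within the already-bounded neighborhood, so the total symmetric difference $|A(G)\,\triangle\,A(G-e)|$ is $O(3^K)$ — the $3^K$ (rather than $\mathrm{poly}(K)$) accounting for the worst-case branching of the cascade across the $k$ phases. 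The same argument handles vertex deletion, treating a deleted vertex as deleting its (bounded number of) incident edges simultaneously.

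The main obstacle I expect is making the locality claim fully rigorous \emph{across phases} with an explicit, finite bound: the subtlety is that $H_i$ is not a fixed graph but depends on the random outcome of phases $1,\dots,i-1$, so one must argue that the query recursion terminates and that its size is bounded \emph{deterministically} (or with probability $1$), not merely in expectation — otherwise one only recovers an average-sensitivity statement like~\cite{VY19:sensitivity} rather than worst-case. Handling this typically requires either restricting to bounded-degree graphs outright (consistent with the theorem's $O(3^K)$ form) or introducing a truncation/pruning step that caps exploration depth and path length, and then separately arguing the truncation changes the matching size by at most $\epsilon n$; getting the two error budgets ($1-\epsilon$ from augmenting-path depth, and the truncation loss) to compose cleanly while keeping the success probability $\ge 0.99$ via a concentration argument is the fiddly part.
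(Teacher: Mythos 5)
There is a genuine gap, and it is the one you half-identify yourself in your final paragraph: your entire sensitivity argument rests on a worst-case locality claim (``only $O(K)$ augmenting paths have $e$ in their determining neighborhood'') that is simply false for general graphs. Theorem~\ref{thm:matching} carries no bounded-degree hypothesis --- the bounded-degree result is the separate deterministic Theorem~\ref{thm:det:matching} --- and in a graph of unbounded degree the number of length-$(2i-1)$ augmenting paths through a fixed edge can be $n^{\Theta(i)}$, so the conflict graph $H_i$ has unbounded degree and the Nguyen--Onak-style query tree has no deterministic size bound. Your proposed fallback (truncation of high-degree explorations, absorbing the loss into $\epsilon$) is not worked out, and it is not clear it can be: truncation controls expected query complexity, which would only recover an average-sensitivity statement of the type in~\cite{VY19:sensitivity}, whereas the theorem demands a worst-case bound over the choice of deleted edge (the bound is on $d_{\mathrm{EM}}$, i.e., an expectation over the algorithm's randomness, but a maximum over $e$).

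The paper takes a different route that sidesteps locality entirely. The key primitive is that the randomized greedy matching has earth mover's sensitivity at most $1$ under vertex deletion for \emph{arbitrary} graphs (Theorem~\ref{thm:randomized-greedy}, proved by the coupling argument of~\cite{censor2016optimal}: in expectation over the random permutation, at most one edge of the greedy output changes). Augmenting paths are then found not via a maximal independent set in a conflict graph but via McGregor's layered graph, in which each matched edge is randomly assigned to one internal layer; an edge deletion perturbs at most three vertices of the layered graph (Lemma~\ref{lem:sensitivity-active-vertex-set}), and each of the $K$ successive calls to randomized greedy inside \textsc{FindPaths} amplifies the accumulated earth mover's distance by at most a constant factor, yielding the $3^K$ recursion of Lemma~\ref{lem:augmenting-paths-sensitivity}. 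If you want to salvage your approach, the piece you would need --- and do not have --- is a worst-case-sensitivity analogue of the greedy MIS bound on the (unbounded-degree) conflict graph of augmenting paths; proving that is essentially as hard as the theorem itself.
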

This result improves upon the previous $(1-\epsilon)$-approximation algorithm~\cite{VY19:sensitivity} in that (1) the sensitivity is constant instead of $O(n^{1/(1+\epsilon^2)})$ and (2) it bounds worst-case sensitivity instead of average sensitivity.

We observe that approximation is necessary to achieve a small sensitivity.
For example, consider an $n$-cycle for an even $n$, and let $M_1$ and $M_2$ be the two maximum matchings of size $n/2$ in the graph.
Consider a deterministic algorithm that always outputs a maximum matching, say, $M_1$ for the $n$-cycle.
Then, it must output $M_2$ after removing an edge in $M_1$, and hence the sensitivity is $\Omega(n)$.
With a similar reasoning, we can show a lower bound of $\Omega(n)$ for randomized algorithms.
Also as we show in Section~\ref{subsec:lb-randomized}, the dependency on $\epsilon$ in Theorem~\ref{thm:matching} is necessary.

One application of our low-sensitivity maximum matching algorithm is the online maximum matching problem with replacements, where updates to the graph $G$ arrive sequentially as a data stream and at all times over the stream, the algorithm must output a matching that is a ``good'' approximation to the maximum matching.
The number of replacements at each time is informally the number of edges in the output matching that differ from the previous output matching, and the goal is to minimize the total number of replacements across the duration of the algorithm.
\begin{restatable}{theorem}{thmmmrecourse}\label{thm:mm:recourse}
There exists an online algorithm that outputs a $(1-\epsilon)$-approximation to the online maximum matching problem with probability $0.99$ and has $O_{\epsilon}(n)$ total replacements.
\end{restatable}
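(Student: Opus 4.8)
The plan is to maintain, at every time step, the output of the algorithm $A$ from Theorem~\ref{thm:matching} on the current graph, while coupling the internal randomness of successive invocations so that the matching barely changes between steps. Write $G_t$ for the graph after the arrival of the $t$-th vertex $v_t$ together with all of its edges to previously-arrived vertices. The key structural observation in the vertex-arrival model is that $G_t - v_t = G_{t-1}$, so a single arrival is exactly the reverse of a single vertex deletion. Hence the \emph{vertex} sensitivity bound of Theorem~\ref{thm:matching} applies directly: $d_{\mathrm{EM}}(A(G_{t-1}), A(G_t)) = d_{\mathrm{EM}}(A(G_t - v_t), A(G_t)) \le O(3^K) = O_\epsilon(1)$, and, crucially, this bound does not grow with $t$ or with the number of vertices currently present.

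First I would convert these earth-mover bounds into a single coupled process. Because the ground metric underlying $d_{\mathrm{EM}}$ here is Hamming distance on the edge set, the bound at step $t$ furnishes a coupling $\pi_t$ of the pair $(A(G_{t-1}), A(G_t))$ under which the expected number of edges in the symmetric difference is at most some $c(\epsilon) = O_\epsilon(1)$. Applying the gluing lemma for couplings, I would stitch $\pi_2, \pi_3, \dots$ into a joint law of the whole output sequence $(M_1, M_2, \dots)$ that realizes every prescribed pairwise coupling and makes $M_1 - M_2 - \cdots$ a Markov chain; this renders the process online-implementable, since at step $t$ the new matching $M_t$ is sampled from the conditional $\pi_t(\cdot \mid M_{t-1})$, which depends only on $G_{t-1}$, $G_t$, and $M_{t-1}$. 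By linearity of expectation the expected total number of replacements is $\sum_{t\ge 2}\E\bigl[|M_{t-1}\triangle M_t|\bigr] \le (n-1)\,c(\epsilon) = O_\epsilon(n)$. Feasibility is automatic, since $M_t = A(G_t)$ is a matching of $G_t$, and because $\mathrm{OPT}$ is non-decreasing under vertex arrivals, $M_t$ is a $(1-\epsilon)$-approximation of the current maximum matching with probability at least $0.99$ by Theorem~\ref{thm:matching}; if an approximation guarantee holding simultaneously at all steps is desired, one amplifies the per-step failure probability and takes a union bound.

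The main obstacle I anticipate is realizing the coupling by a genuine online algorithm rather than merely asserting its existence: Theorem~\ref{thm:matching} guarantees a good coupling for each single modification, but an online algorithm must produce $M_t$ from $M_{t-1}$ with no knowledge of future arrivals. The gluing construction settles the information-theoretic part, and I expect the algorithm behind Theorem~\ref{thm:matching} to be structured so that the per-step coupling is explicit --- for instance, realized simply by fixing one random seed and running $A$ on $G_t$ with that seed, since worst-case sensitivity of that form already produces outputs differing in $O_\epsilon(1)$ edges in expectation --- in which case the online algorithm just fixes a seed at the outset and recomputes at each arrival. Bounding the per-step running time by $O((n+m)K)$ is then immediate from Theorem~\ref{thm:matching}, and the remaining details are routine.
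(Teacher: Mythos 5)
Your proposal is, in its operative form, the same as the paper's proof: the paper fixes the random permutation $\pi$ of $\binom{n}{2}$ once at the start, reruns the resulting deterministic algorithm $A_\pi$ on $G_t$ at each arrival, observes that the expected number of replacements per step is $O_\epsilon(1)$ because the sensitivity bound of Theorem~\ref{thm:matching} is witnessed precisely by this seed-sharing coupling (that is how Theorem~\ref{thm:randomized-greedy} and Lemma~\ref{lem:augmenting-paths-sensitivity} are proved), and sums over the $n$ arrivals, finishing with Markov's inequality to get the $0.99$ guarantee on the \emph{total} replacement count --- a last step you omit but which is immediate. Your preliminary gluing-lemma construction is information-theoretically sound (the optimal couplings exist and can be stitched into a Markov chain with the right marginals), but on its own it does not yield an implementable online algorithm, since sampling from $\pi_t(\cdot \mid M_{t-1})$ requires computing an optimal coupling of two distributions supported on exponentially many matchings; you correctly anticipate this and fall back on the explicit seed-fixing realization, which is exactly what the paper does.
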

By comparison, Bernstein~et~al.~\cite{BernsteinHR19} gave an online algorithm that always outputs the maximum matching, but has $O(n\log n)$ total replacements and is only restricted to bipartite graphs.
Thus our algorithm achieves worse approximation guarantees than the algorithm of~\cite{BernsteinHR19}, but better total number of replacements and applies for general graphs, rather than only for bipartite graphs,

Next, we show a deterministic algorithm for finding a maximal matching on bounded degree graphs that has low sensitivity (Section~\ref{sec:deterministic}).
Note that it has approximation ratio $1/2$ because the size of any maximal matching is a $1/2$-approximation to the maximum matching.
\begin{restatable}{theorem}{thmdetmatching}\label{thm:det:matching}
There exists a deterministic algorithm that finds a maximal matching with edge/vertex sensitivity $\Delta^{O\left(6^\Delta+\log^* n\right)}$, where $\Delta$ is the maximum degree of a vertex in the graph.
\end{restatable}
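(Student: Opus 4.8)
The plan is to prove the theorem by combining a deterministic \emph{local} maximal-matching algorithm with a general principle that turns locality into stability. Throughout, the input graph is assumed to carry distinct vertex identifiers — this is unavoidable, since without them no deterministic algorithm can break symmetry on a cycle, which is exactly what drives the matching $\Omega(\log^* n)$ lower bound — and deleting an edge or a vertex leaves the identifiers of all surviving vertices unchanged. Say that a deterministic algorithm $A$ is \emph{$r$-local} if, for every vertex $w$, whether and to whom $w$ is matched in $A(G)$ is a function only of the labeled isomorphism type of the ball of radius $r$ around $w$ in $G$.

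The first step is the reduction: an $r$-local deterministic algorithm has edge and vertex sensitivity $\Delta^{O(r)}$. For an edge $e=\{x,y\}$, if $\mathrm{dist}_G(w,\{x,y\})>r$ then $e$ does not lie in the radius-$r$ ball around $w$ and no shortest path out of $w$ of length at most $r$ uses $e$, so that ball is the same labeled graph in $G$ and in $G-e$ and $w$'s output is unchanged; hence every edge of $A(G)\triangle A(G-e)$ has an endpoint within distance $r$ of $\{x,y\}$. There are at most $O(\Delta^{r})$ such vertices and each is incident to at most $\Delta$ edges, so $|A(G)\triangle A(G-e)|\le\Delta^{O(r)}$. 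The vertex case is identical, using that removing $v_0$ does not affect the radius-$r$ ball around any $w$ with $\mathrm{dist}_G(w,v_0)>r$.

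The second step is to construct a deterministic maximal-matching algorithm that is $r$-local with $r=O(6^\Delta+\log^* n)$, which I would do in two phases. In the symmetry-breaking phase, run a Linial-type $O(\log^* n)$-round iterated-coloring procedure to compute a proper coloring with $O(\Delta^2)$ colors of the line graph of $G$ (whose maximum degree is at most $2\Delta-2$), equivalently a proper edge coloring of $G$ with $O(\Delta^2)$ color classes; this phase is the only source of the $\log^* n$ term. In the matching phase, use the color classes as a fixed priority schedule and let each edge join the matching as soon as both of its endpoints are free, breaking ties by color and identifier. Implemented so that every commitment decision consults only a bounded-radius neighborhood, this phase terminates after a number of rounds depending only on $\Delta$ — one natural bound being $2^{O(\Delta)}$, which is the source of the $6^\Delta$ term — and the resulting matching is maximal because the schedule halts only when no two free vertices are adjacent. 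The composition is $O(6^\Delta+\log^* n)$-local, and plugging this $r$ into the reduction gives edge/vertex sensitivity $\Delta^{O(6^\Delta+\log^* n)}$.

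The main obstacle is the matching phase. A plain lexicographically greedy matching is not local at all: deleting a single edge can set off an arbitrarily long alternating chain of rematchings, so its sensitivity would be $\Omega(n)$. The whole point is to schedule the greedy commitments against the precomputed coloring so that each vertex's final matched status is forced within a number of hops depending only on $\Delta$, while still certifying that the output is maximal; bounding that round count by a function of $\Delta$ alone, rather than one that grows with $n$, is where essentially all the effort goes. The reduction from locality to sensitivity, and the vertex-deletion variant, are then routine.
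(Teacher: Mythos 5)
Your proposal is correct and follows essentially the same route as the paper: a deterministic $O(\log^* n)$-local coloring (the paper uses the Cole--Vishkin $6^\Delta$ vertex coloring via forest decomposition, phrased as an LCA), followed by a coloring-scheduled greedy matching that runs in $f(\Delta)$ rounds, together with the observation that an $r$-local deterministic algorithm has sensitivity $\Delta^{O(r)}$ --- the paper packages this last step via the Parnas--Ron LCA simulation and counts the queries whose probe neighborhood contains the deleted edge or vertex. One remark: with your $O(\Delta^2)$-color Linial edge coloring, the round-per-color-class schedule already terminates in $O(\Delta^2)$ rounds and is trivially maximal, so you would in fact obtain the stronger bound $\Delta^{O(\Delta^2+\log^* n)}$; there is no need to pad the matching phase to $2^{O(\Delta)}$ rounds, since the paper's $6^\Delta$ term arises only because its coloring uses $6^\Delta$ color classes.
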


Then, we show that randomness is necessary to achieve sensitivity independent of $n$ (Section~\ref{sec:lb}):
\begin{restatable}{theorem}{thmlbdeterministic}\label{thm:lb-deterministic}
    Any deterministic constant-factor approximation algorithm for the maximum matching problem has edge sensitivity $\Omega(\log^* n)$.
\end{restatable}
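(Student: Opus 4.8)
\medskip
\noindent\textbf{Proof plan.} The plan is to reduce the statement to a lower bound in the deterministic \textsc{local} model of distributed computing, using cycle inputs. For a sensitivity lower bound it suffices to exhibit a single bad instance, and I would look among labeled $n$-cycles: on $C_n$ any constant-factor approximation returns a matching of size $\Omega(n)$, hence matches a constant fraction of the edges. The target is to show that a deterministic algorithm $A$ with edge sensitivity $s$, restricted to cycle inputs, can be simulated by a deterministic \textsc{local} algorithm that runs in $O(s)$ rounds and still outputs a matching covering a constant fraction of the edges. Since, by Linial's technique (extended to approximate matching on rings by Czygrinow et al.), every such \textsc{local} algorithm requires $\Omega(\log^* n)$ rounds, this would give $s = \Omega(\log^* n)$.

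The technical heart --- and the step I expect to be the main obstacle --- is the simulation: one must show that edge sensitivity $s$ forces $A$'s decision on an edge of a long cycle or path to be (essentially) a function of that edge's radius-$O(s)$ neighborhood. This is not immediate, and is false in the naive sense: truncating a path far from an edge $e$ can still flip whether $e$ is matched, because a single edge modification may move $O(s)$ edges of the output and, over $\Theta(n)$ modifications, such moves can accumulate --- so the telescoped inequality $|A(G)\triangle A(G')|\le k\cdot s$ is informative only for $k=O(1)$. My approach is to have each vertex $v$ simulate $A$ not on the true input but on a fixed canonical completion $\widehat B_v$ of its radius-$O(s)$ ball to a full path instance, chosen so that for adjacent $u,v$ the graphs $\widehat B_u$ and $\widehat B_v$ are reachable from a common graph by $O(1)$ edge modifications, all localized away from the shared edge $(u,v)$; the sensitivity bound then gives $|A(\widehat B_u)\triangle A(\widehat B_v)| = O(s)$, so neighboring simulations disagree on $(u,v)$ only inside an $O(s)$-sized set of defects, which $O(s)$ further rounds of local repair turn into a valid matching that is still large. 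Reconciling ``$\widehat B_v$ is a legal path instance of the intended size'' with ``$\widehat B_u,\widehat B_v$ are $O(1)$-close away from their overlap'' is the delicate point; if only a weaker, $n$-independent locality bound is obtainable, it should still suffice for the Ramsey-type counting argument underlying the $\Omega(\log^* n)$ bound.

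As a fallback I would aim for the same conclusion via a Linial-style speedup stated directly in terms of sensitivity: from a deterministic $\beta$-approximation algorithm with sensitivity $s$ on $n$-vertex paths, build one with sensitivity $s-1$ (and slightly worse approximation) on paths of size $O(\log n)$; iterating $s$ times produces a deterministic sensitivity-$0$ constant-factor matching algorithm on paths with $\log^{(s)} n$ vertices. But a sensitivity-$0$ algorithm outputs the same edge set on a graph and on that graph with any one edge deleted, so on a single edge it must output the empty matching, contradicting any constant-factor guarantee; hence $\log^{(s)} n = O(1)$, i.e.\ $s = \Omega(\log^* n)$. Carrying out the single-step sensitivity reduction --- the analogue of round elimination, now trading one unit of sensitivity for an exponential shrink of the instance --- is, once again, the crux.
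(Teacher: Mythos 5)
Your plan is a genuinely different route from the paper's, but it rests on an unproven central lemma, and you correctly identify it yourself: the claim that edge sensitivity $s$ forces the algorithm's decision on an edge to be (approximately) a function of that edge's radius-$O(s)$ neighborhood. Nothing in the definition of sensitivity supports this. A sensitivity-$s$ algorithm is a global map whose output may depend on arbitrarily distant parts of the input --- e.g.\ on global functions of the vertex labels --- subject only to the constraint that each single edge modification moves at most $s$ output edges. Your proposed fix (canonical completions $\widehat B_v$ of local balls, pairwise reachable by $O(1)$ modifications) does not resolve this: the sensitivity bound gives $|A(\widehat B_u)\triangle A(\widehat B_v)|=O(s)$, but says nothing about \emph{where} in the instance those $O(s)$ defects lie, so adjacent simulations can disagree precisely on the shared edge $(u,v)$, and the ``$O(s)$ rounds of local repair'' step has no foundation. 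The fallback ``sensitivity elimination'' is likewise only a slogan; the analogue of round elimination trades a quantifier over neighborhoods, and there is no corresponding quantifier structure in the sensitivity definition to eliminate. Both of your routes therefore stop exactly at the step you flag as the crux, so the argument as written does not establish the theorem.

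The paper's proof avoids locality entirely and is much more direct. It encodes the algorithm's output on the $t$-cycle through the increasing sequence $v_1<\cdots<v_t$ of its vertices as a $2^t$-coloring of $\binom{[n]}{t}$, applies Ramsey's theorem ($R_{2^t}(t+1;t)\le \mathrm{twr}(O(t))$, so one may take $t=\Theta(\log^* n)$) to obtain a monochromatic $(t+1)$-set $\{s_0,\dots,s_t\}$, and compares the two cycles on $\{s_0,\dots,s_{t-1}\}$ and $\{s_1,\dots,s_t\}$. These graphs are at constant edge-edit distance, yet equality of the encodings forces the two output matchings (each of size $\Omega(t)$ by the approximation guarantee) to be index-shifted copies of one another, hence to have symmetric difference $\Omega(t)$. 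This is the Ramsey-type counting you allude to at the end of your second paragraph --- but applied directly to two concrete nearby inputs, so no simulation in the \textsc{local} model is ever needed. If you want to pursue your reduction, you would essentially have to reprove a Naor--Stockmeyer-style order-invariance statement for sensitivity-bounded algorithms, which is at least as hard as the direct argument.
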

Namely, we show in Section~\ref{subsec:lb-deterministic-greedy} that we cannot obtain sublinear sensitivity just by derandomizing the randomized greedy algorithm.
Theorems~\ref{thm:matching} and~\ref{thm:lb-deterministic} imply that randomized algorithms are strictly more powerful than deterministic ones in that the former can achieve sensitivity independent of $n$ whereas the latter cannot for the maximum matching problem.

We then introduce the idea of \emph{weighted sensitivity}, which is a natural generalization of sensitivity for both the average and worst cases.
For the problems that we consider, the sensitivity of a deterministic graph algorithm is the number of edges that changes in the output induced by the alteration of a single vertex/edge in the input.
Thus for a weighted graph, the weighted sensitivity is the total weight of the edges that are changed in the output, following the deletion of a vertex/edge.
For randomized algorithms, the definition extends naturally to the earth mover's distance between the distributions with the corresponding weighted loss function.
Finally, we can also normalize by the weight of the edge that is deleted.

The motivation for studying weighted sensitivity is natural; in many applications with evolving data, the notion of sensitivity arises in the context of recourse, a quantity that measures the change in the underlying topology of the optimal solution.
For example in the facility location problem, the goal is to construct a set of facilities to minimize the sum of the costs of construction and service to a set of consumers.
As the information about the set of consumers evolves, it would be ideal to minimize the number of relocations for the facilities, due to the construction costs, which is measured by the sensitivity of the algorithm.
However, as construction costs may not be uniform, a more appropriate quantity to minimize would be the total cost of the relocations for the facilities, which is measured by the weighted sensitivity.

Similarly, matchings are often used to maximize flow across a bipartite graph, but the physical structures that support the flow may incur varying costs to construct or demolish, corresponding to the amount of flow that the structures support.
In this case, we note that it may not be possible for the worst case weighted sensitivity to be small.
For example, if a single edge has weight $n^C$ for some large constant $C$ and the remaining edges have weight $1$, any constant factor approximation to the maximum weighted matching must include the heavy edge.
But if the heavy edge is then removed from the graph, the weighted sensitivity of any constant factor approximation algorithm is $\Omega(n^C)$.
This issue is circumvented by the normalized weighted sensitivity, which scales the sensitivity by the weight of the deleted edge.
We give approximation algorithms for maximum weighted matching with low normalized weighted worst-case sensitivity.

\begin{theorem}
Let $G=(V,E)$ be a weighted graph with $\frac{1}{n^c}\le w(e)\le n^c$ for some constant $c>0$ and all $e\in E$.
For a trade-off parameter $\alpha>2$, there exists an algorithm that outputs a $\frac{1}{4\alpha}$-approximation to the maximum weighted matching in $O(m\log_{\alpha} n)$ time and has normalized weighted sensitivity $O(1)$.
\end{theorem}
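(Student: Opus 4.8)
The plan is to reduce the weighted problem to a bucketed collection of essentially unweighted instances, run a low-sensitivity maximal matching procedure inside each bucket, and then combine the buckets greedily from heaviest to lightest. First I would partition the edges by weight class: for $i \in \mathbb{Z}$ let $E_i = \{e \in E : \alpha^i \le w(e) < \alpha^{i+1}\}$. Because weights lie in $[n^{-c}, n^c]$, there are only $O(\log_\alpha n)$ nonempty classes, which is where the $O(m \log_\alpha n)$ running time and the $\log_\alpha$ factor come from. Within a single class all weights agree up to a factor $\alpha$, so any maximal matching $M_i$ in the graph $(V, E_i)$ is a $\frac{1}{2\alpha}$-approximation (in weight) to the maximum weighted matching restricted to $E_i$. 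The key point is to compute each $M_i$ with a deterministic maximal-matching routine of bounded sensitivity; I would invoke the algorithm of Theorem~\ref{thm:det:matching} on each class (or, if the degrees are not bounded, an appropriate low-sensitivity maximal matching primitive), so that deleting one edge changes each $M_i$ in only a controlled number of places.

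Next I would combine the per-class matchings into a single global matching $M$ by processing classes in decreasing order of weight: start with $M = \emptyset$, and for $i$ from largest down to smallest, add every edge of $M_i$ that is not incident to a vertex already matched by $M$. This is exactly the standard ``greedy over sorted buckets'' reduction, and a routine charging argument shows that the resulting $M$ has weight at least a constant fraction — here $\frac{1}{2}$, for a total of $\frac{1}{4\alpha}$ — of $\sum_i w(M_i)$, hence of the maximum weighted matching. I would spell out the approximation bound by the usual argument: an edge $e^* = (u,v)$ in the optimum either has an endpoint matched in $M$ by an edge of weight at least $w(e^*)/\alpha$, or $e^*$ itself would have been added; charging $w(e^*)$ to those blocking edges and accounting for the at-most-two endpoints and the $\alpha$ slack yields the claimed factor.

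The main obstacle is controlling the worst-case \emph{normalized} sensitivity of the combined matching, i.e. bounding $\frac{1}{w(e)} w(M(G) \triangle M(G-e))$. Deleting an edge $e$ with $w(e) \in [\alpha^j, \alpha^{j+1})$ only changes the class-$j$ matching $M_j$ directly; by the sensitivity guarantee of the inner routine, $|M_j(G) \triangle M_j(G-e)|$ is bounded by some parameter $s$, and each such edge has weight $< \alpha^{j+1} \le \alpha \cdot w(e)$, so the direct contribution to the normalized sensitivity is $O(\alpha s)$. The delicate part is the \emph{cascade}: a change in $M_j$ can free or occupy a vertex, which changes which edges of the lighter classes $M_{j-1}, M_{j-2}, \dots$ get selected into $M$ during the greedy combine. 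I would argue that this cascade is geometrically damped — a unit of change entering at class $j$ can propagate to lighter classes, but each lighter edge it toggles has weight at most $\alpha^{j}, \alpha^{j-1}, \dots$ times a constant, and the number of newly toggled vertices does not grow (each freed vertex can match at most one lighter edge, each newly matched vertex blocks at most one), so the total weight moved across all classes is a geometric series summing to $O(\alpha^{j+1}) = O(\alpha \cdot w(e))$, times the per-class sensitivity $s$. Combining, the normalized weighted sensitivity is $O(\alpha s / (1 - 1/\alpha)) = O(1)$ once $\alpha > 2$ is fixed and $s$ is the (constant, for bounded degree, or otherwise absorbed) sensitivity of the inner maximal-matching routine. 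I would need to be careful that the damping argument is stated in terms of a monotone potential (total weight of symmetric difference restricted to classes $\le i$) and that the ``at most one lighter edge per freed vertex'' claim is justified by the greedy tie-breaking being itself low-sensitivity; making that cascade bound fully rigorous, rather than just geometrically plausible, is the technical heart of the proof.
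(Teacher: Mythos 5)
Your overall architecture (bucket edges by weight, compute a maximal matching per bucket, combine greedily from heaviest to lightest, and argue the cascade is geometrically damped) is the same as the paper's, but two of your key technical choices do not work. First, the inner routine: you propose to compute each $M_i$ with the deterministic LCA of Theorem~\ref{thm:det:matching}, or ``an appropriate low-sensitivity maximal matching primitive'' if degrees are unbounded. The LCA has sensitivity $\Delta^{O(6^\Delta+\log^* n)}$, which is neither $O(1)$ nor applicable to general graphs, and no deterministic substitute can exist: Theorem~\ref{thm:lb-deterministic} of this very paper shows any deterministic constant-factor approximation has sensitivity $\Omega(\log^* n)$. The paper instead draws a \emph{single random permutation} $\pi$ shared across all levels and takes each $M_i$ to be the greedy matching on $E_i$ induced by $\pi$; Theorem~\ref{thm:randomized-greedy} (via Lemma~\ref{lem:expected:change}) then gives \emph{expected} sensitivity at most $1$ per level, and the $O(1)$ bound in the theorem is an earth mover's distance bound, i.e., it is inherently a statement about a randomized algorithm. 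Without this, your parameter $s$ is not a constant.

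Second, the cascade bound. Your claim that ``the number of newly toggled vertices does not grow (each freed vertex can match at most one lighter edge, each newly matched vertex blocks at most one)'' is false: removing one edge $(u,v)$ from the combined matching frees \emph{two} vertices, each of which can pick up a distinct edge at the next level, and each such change propagates further within that level's maximal matching. The correct statement, which the paper proves by induction using Lemma~\ref{lem:expected:change}, is that the expected number of affected edges can \emph{double} per level: at most $2^{i-j}$ changes at level $j$ when the deleted edge lives at level $i$. The total weight moved is then $\sum_{j\le i} 2^{i-j}\alpha^j$, and this geometric series is $O(\alpha^i)=O(w(e))$ precisely because $\alpha>2$ beats the factor-$2$ growth. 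Your own formula $O(\alpha s/(1-1/\alpha))$ gives this away: it converges for every $\alpha>1$, so under your (unjustified) non-growth assumption the hypothesis $\alpha>2$ would be superfluous, whereas it is in fact the crux of the sensitivity analysis.
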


Our results also extend to $\alpha=2$ and general worst-case weighted sensitivity, i.e., weighted sensitivity that is not normalized.
We detail these algorithms in Section~\ref{sec:weighted}.

\subsection{Proof Sketch}
We explain the idea behind the algorithm of Theorem~\ref{thm:matching}.
For simplicity, we focus on edge sensitivity.
We note that if we only sought a $1/2$-approximation to the maximum matching, then it would suffice to find any maximal matching.
Although the well-known greedy algorithm produces a maximal matching, the output of the algorithm is highly sensitive to the ordering of the edges in the input.
One may hope that, if we choose an ordering of the edges uniformly at random, then the resulting output will be stable against edge deletions to the underlying graph.
This is not immediately obvious because the deleted edge will appear about halfway through the ordering (of the edges in the original graph) in expectation, so it seems possible that it can impact about the remaining half of the edges.
Luckily, we show that the edges at the beginning of the ordering are significantly more important, so that even if the deleted edge appears about halfway through the ordering, the sensitivity of the maximal matching is $O(1)$ (Section~\ref{sec:greedy}).
Our analysis is similar to~\cite{censor2016optimal}, who show that the vertices at the beginning of an ordering are significantly more important in maintaining a maximal independent set in the dynamic distributed model.

Adapting this idea to a $(1-\epsilon)$-approximation is more challenging.
The natural approach is to take a maximal matching and repeatedly find a large number of augmenting paths, but the change of even a single edge in a maximal matching can potentially impact a large number of edges if the augmenting paths are found in a sequential manner.
We instead adapt a layered graph of~\cite{mcgregor2005finding} that is used to randomly find a large number of augmenting paths in a small number of passes in the streaming model.
Crucially, we instead find a large number of disjoint augmenting paths in a small number of parallel rounds, which results in low sensitivity.

Now we turn to explaining the idea behind Theorem~\ref{thm:det:matching}.
Again we focus on edge sensitivity.
Our algorithm first uses a deterministic local computation algorithm (LCA) of~\cite{ColeV86} for $6^{\Delta}$-coloring a graph $G$ with maximum degree $\Delta$, using $O(\Delta\log^*n)$ probes to an adjacency list oracle.
Here we want to design an algorithm that answer queries about the colors of vertices by making a series of probes to the oracle.
The answers of the algorithm must be consistent so that there exists at least one proper coloring that is consistent with the answers.
In our case, each probe to the oracle is a query $(v,i)$ with $v\in V$ and a positive integer $i$.
If the degree of $v$ is at least $i$, the oracle responds with the $i$-th neighbor of $v$ to the probe.
Otherwise, the oracle outputs a special symbol $\bot$.
In particular, the deterministic $6^{\Delta}$-coloring LCA only probes vertices that are within a ``small'' neighborhood of the query.

Given a coloring for $G$, we then give a local distributed algorithm that takes a coloring of a graph and outputs a maximal matching.
It follows from a framework of~\cite{ParnasR07} that our local distributed algorithm can actually be simulated by a deterministic LCA that again only probes a ``small'' neighborhood of the query.
Thus to bound the sensitivity of the algorithm, we bound the number of queries for which a deleted edge would be probed.
Since only a small number of queries probes the deleted edge, then the output of the algorithm only has a small number of changes and thus low worst-case sensitivity.

Our lower bound of Theorem~\ref{thm:lb-deterministic} considers the set of length-$t$ cycles on a graph with $n$ vertices.
Any matching on length-$t$ cycles can be represented as a series of indicator variables denoting whether edge $i\in[t]$ is in the matching.
We can then interpret the indicator variables as an integer encoding from $0$ to $2^t-1$ through the natural binary representation.
Ramsey theory claims that for $t=O(\log^* n)$, there exists a set $S$ of $t+1$ nodes of $n$ so that any subset of $t$ nodes has the same encoding.
We then choose $G$ and $G'$ to be the cycle graphs consisting of the first $t$ nodes of $S$ and the last $t$ nodes of $S$, respectively.
Since the encodings of the matchings of $G$ and $G'$ are the same, but the edge indices are shifted by one, it follows that $\Omega(t)$ edges must be in the symmetric difference between $G$ and $G'$, which implies from $t=O(\log^* n)$ that the worst-case sensitivity of the algorithm must be $\Omega(\log^* n)$.

\subsection{Related Work}
Varma and Yoshida~\cite{VY19:sensitivity} introduced the notion of sensitivity and performed a systematic study of average sensitivity on many graph problems.
Namely, they gave efficient approximation algorithms with low average sensitivities for the minimum spanning forest problem, the global minimum cut problem, the minimum $s$-$t$ cut problem, and the maximum matching problem.
They also introduced a low-sensitivity algorithm for linear programming, and proved many fundamental properties of average sensitivity, such as sequential or parallel composition.
Peng and Yoshida~\cite{PengY20} gave an algorithm for the problem of spectral clustering with average sensitivity $\frac{\lambda_2}{\lambda_3^2}$, where $\lambda_i$ is the $i$-th smallest eigenvalue of the normalized Laplacian, which is small when there are exactly two clusters in the graph.

The effects of graph updates have also been studied significantly in the dynamic/online model, where updates to the graph arrive in a stream, and the goal is to maintain some data structure to answer queries on the underlying graph so that both the update time and query time are efficient.
Consequently, most of the literature for dynamic algorithms focuses on optimizing these quantities, rather than the changes in the output as the data evolves.
Sensitivity analysis is more relevant when the goal of the dynamic/offline model is to minimize the number of changes between successive outputs of the algorithm over the stream.

Lattanzi and Vassilvitski~\cite{LattanziV17} studied the problem of consistent $k$-clustering, where the goal is to maintain a constant-factor approximation to some underlying $k$-clustering problem, such as $k$-center, $k$-median, or $k$-means, while minimizing the total number of changes to the set of centers as the stream evolves.
In this setting, each change to the set of center is known as a \emph{recourse}.
Whereas the model of~\cite{LattanziV17} allows only insertions of new points, algorithms with low sensitivity are robust against both insertions and deletions.
Cohen-Addad~\emph{et.~al.}~\cite{Cohen-AddadHPSS19} further considered the facility location problem in this model of maintaining a constant-factor approximation while minimizing the total recourse.
Although the algorithm of~\cite{Cohen-AddadHPSS19} addresses both the insertions and deletions of points, their total recourse across the stream is $O(n)$, where $n$ is the length of the stream; this is inherent to the difficulty of their problem in the model.
Whereas their work already provides an amortized $O(1)$ recourse per update, we also study the worst-case sensitivity in our work.

Consistency for maximum matching has also been thoroughly studied, called the online matching problem with replacements.
The problem was introduced by Grove~et~al.~\cite{GroveKKV95} for bipartite graphs, who gave matching upper and lower bounds of $\Theta(n\log n)$ total replacements when all vertices on one side of the partition have degree two.
Chaudhuri~et~al.~\cite{ChaudhuriDKL09} showed that the greedy algorithm that repeatedly adds the shortest augmenting path from the newest arrived vertex has $\Theta(n\log n)$ total replacements in expectation for any arbitrary underlying bipartite graph, provided that the vertices on one side of the partition arrive in a random order.
They also gave an algorithm with $O(n\log n)$ total replacements for acyclic bipartite graphs, as well as a tight asymptotic lower bound.
For general bipartite graphs, Bosek~et~al.~\cite{BosekLSZ14} showed an algorithm with $O(n\sqrt{n})$ total replacements, using total time $O(m\sqrt{n})$, matching the best offline maximum matching algorithm for static bipartite graphs.
Recently, Bernstein~et~al.~\cite{BernsteinHR19} gave an algorithm for online maximum bipartite matching with $O(n\log^2 n)$ total replacements, substantially progressing toward the strongest known lower bound, which is $\Omega(n\log n)$~\cite{GroveKKV95}.

\subsection{Preliminaries}
For a positive integer $n$, let $[n]$ denote the set $\{1,2,\ldots,n\}$.
For a positive integer $n$ and $p \in [0,1]$, let $\mathcal{B}(n,p)$ be the binomial distribution with $n$ trials and success probability $p$.
We use the notation $O_\epsilon(\cdot)$ to omit dependencies on $\epsilon$. 

Let $G=(V,E)$ be a graph.
For an edge $e \in E$, let $N_G(e)$ be the ``neighboring'' edges of $e$ in $G$, that is, $N_G(e) = \{e' \in E \mid e' \neq e, |e' \cap e| \geq 1 \}$.
We omit the subscript if it is clear from the context.

For two (vertex or edge) sets $S$ and $S'$, let $d_{\mathrm{H}}(S,S') = |S \triangle S'|$, where $\triangle$ denotes the symmetric difference.
Abusing the notation, for set of paths $\mathcal{P}$ and $\mathcal{P}'$, we write $d_{\mathrm{H}}(\mathcal{P},\mathcal{P}')$ to denote $d_{\mathrm{H}}(\cup_{P \in \mathcal{P}} V(P), \cup_{P \in \mathcal{P}'}V(P))$.
For two random sets $X$ and $X'$, let $d_{\mathrm{EM}}(X,X')$ be the earth mover's distance between $X$ and $X$, where the distance between two sets is measured by $d_{\mathrm{H}}$, that is,
\[
  d_{\mathrm{EM}}(X,X') = \min_{\mathcal{D}} \E_{(S,S') \sim \mathcal{D}}d_{\mathrm{H}}(S,S'),
\]
where $\mathcal{D}$ is a distribution such that its marginal on the first and second coordinates are $X$ and $X'$, respectively.
For a real-valued function $\beta$ on graphs, we say that the \emph{sensitivity} of a (randomized) algorithm $A$ that outputs a set of edges is at most $\beta$ if
\[
  d_{\mathrm{EM}}(A(G),A(G-e)) \leq \beta(G)
\]
holds for every $e \in E(G)$.

Given a matching $M$ in a graph $G = (V, E)$, we call a vertex \emph{free} if it does not appear as the endpoint of any edge in $M$.
A path $(v_1, v_2 ,\ldots,v_{2\ell+2})$ of length $2\ell+1$ is an \emph{augmenting path} if $v_1$ and $v_{2\ell+2}$ are free vertices and $(v_i,v_{i+1}) \in M$ for even $i$ and $(v_i, v_{i+1}) \in E \setminus M$ for odd $i$.

\section{Randomized \texorpdfstring{$(1-\epsilon)$}{(1-epsilon)}-Approximation}\label{sec:randomized}
In this section, we prove Theorem~\ref{thm:matching}.
Our algorithm, which we describe in Section~\ref{subsec:algorithm-description}, is a slight modification of the multi-pass streaming algorithm due to McGregor~\cite{mcgregor2005finding}.
We discuss its approximation guarantee and sensitivity in Sections~\ref{subsec:randomized-approximation-ratio} and~\ref{subsec:randomized-sensitivity}, respectively.
Finally, we discuss applications to online matching with replacements in Section~\ref{subsec:online-matching}.

\subsection{Algorithm Description}\label{subsec:algorithm-description}

A key step of McGregor's algorithm is to find a large set of augmenting paths of a specified length in a batch manner using the \emph{layered graph}, given below.
Given a graph $G=(V,E)$, a matching $M \subseteq E$, and a positive integer $\ell$, the layered graph $H = H(G)$ consists of $\ell+2$ layers $L_0,L_1,\ldots,L_{\ell+1}$, where $L_0 = L_{\ell+1} = V$ and $L_1 = L_2 = \cdots = L_\ell = V \times V$.

For each vertex $v \in V$, we sample $i_v \in \{0,\ell+1\}$ uniformly at random independently from others.
We say that the copy of $v$ in the $L_{i_v}$-th layer is \emph{active} and that the other copy is \emph{inactive}.
For each edge $\{u,v\}\in M$, with probability half, we sample a value $i_{(u,v)} \in \{1,\ldots,\ell\}$ uniformly at random and set $i_{(v,u)} = \bot$, where $\bot$ is a special symbol, and with the remaining probability half, we sample a value $i_{(v,u)} \in \{1,\ldots,\ell\}$ uniformly at random and set $i_{(u,v)} = \bot$.
For each edge $\{u,v\} \in E \setminus M$, we set $i_{(u,v)} = i_{(v,u)} = \bot$.
We say that the copy of $(u,v)$ in the $L_{i_{(u,v)}}$-th is \emph{active} if $i_{(u,v)} \neq \bot$ and is \emph{inactive} otherwise.
Intuitively, some orientation of each edge $\{u,v\}$ in the matching $M$ is assigned to a random internal layer in $H$ and edges of $G$ that are not in the matching are not initially assigned to any layer in $H$.
For $i = 0,\ldots,\ell+1$, we denote by $\tilde{L}_i$ the set of active vertices in $L_i$.
Let $L = \bigcup_{i=0}^{\ell+1}L_i$ be the vertex set of $H$, and let $\tilde{L} = \bigcup_{i=0}^{\ell+1}\tilde{L}_i$ be the set of active vertices in $H$.

The edges in the layered graph $H$ are those between active vertices that can be a part of an augmenting path in $G$.
More specifically,
\begin{itemize}
  \itemsep=0pt
\item We add an edge between $t \in \tilde{L}_0$ and $(u,v) \in \tilde{L}_1$ if $t$ is free in $M$ and $t$ is adjacent to $v$.
\item We add an edge between $(u,v) \in \tilde{L}_\ell$ and $s \in \tilde{L}_{\ell+1}$ if $s$ is free in $M$ and $s$ is adjacent to $u$.
\item We add an edge between $(u,v) \in \tilde{L}_i$ and $(u',v') \in \tilde{L}_{i+1}$ for $i \in [\ell-1]$ if $v$ is adjacent to $u'$.
\end{itemize}
Note that inactive vertices are isolated in $H$.

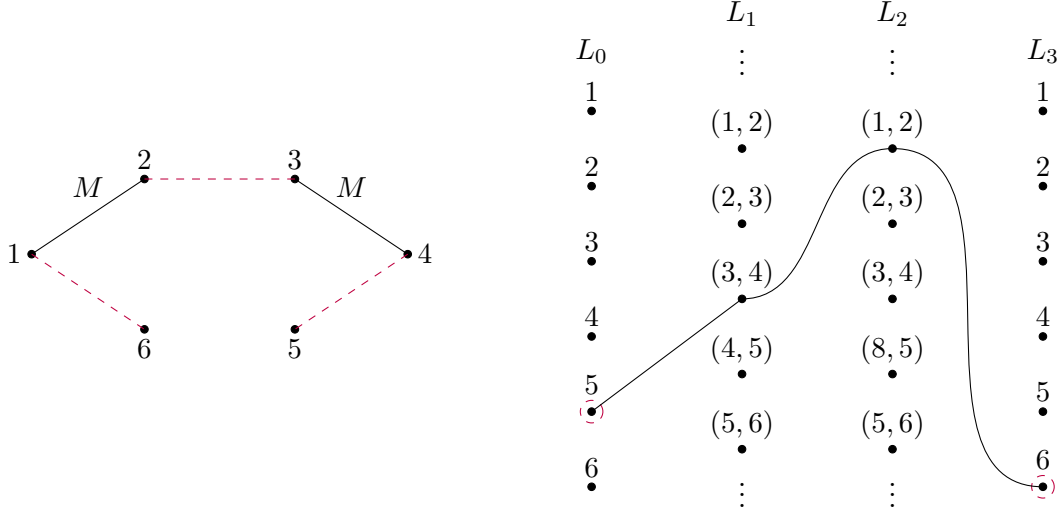
\begin{figure*}[!htb]
\centering
\begin{subfigure}{0.45\textwidth}
\begin{tikzpicture}[scale=2.5]
\node[above] at (0.3,0.25){$M$};
\node[above] at (1.7,0.25){$M$};

\draw[fill] (0,0) circle (0.02);
\node[left] at (0,0){$1$};
\draw[fill] (0.6,0.4) circle (0.02);
\node[above] at (0.6,0.4){$2$};
\draw[fill] (1.4,0.4) circle (0.02);
\node[above] at (1.4,0.4){$3$};
\draw[fill] (2,0) circle (0.02);
\node[right] at (2,0){$4$};
\draw[fill] (1.4,-0.4) circle (0.02);
\node[below] at (1.4,-0.4){$5$};
\draw[fill] (0.6,-0.4) circle (0.02);
\node[below] at (0.6,-0.4){$6$};

\draw (0,0)--(0.6,0.4);
\draw (1.4,0.4)--(2,0);
\draw [purple, dashed] (0,0)--(0.6,-0.4);
\draw [purple, dashed] (0.6,0.4)--(1.4,0.4);
\draw [purple, dashed] (2,0)--(1.4,-0.4);

\end{tikzpicture}
\end{subfigure}
\begin{subfigure}{0.45\textwidth}
\begin{tikzpicture}[scale=0.5]
\node[above] at (0,11){$L_0$};
\draw[fill] (0,0) circle (0.1);
\draw[fill] (0,2) circle (0.1);
\draw[fill] (0,4) circle (0.1);
\draw[fill] (0,6) circle (0.1);
\draw[fill] (0,8) circle (0.1);
\draw[fill] (0,10) circle (0.1);
\node[above] at (0,10){$1$};
\node[above] at (0,8){$2$};
\node[above] at (0,6){$3$};
\node[above] at (0,4){$4$};
\node[above] at (0,2.2){$5$};
\node[above] at (0,0){$6$};
\draw[purple, dashed] (0,2) circle (0.3);

\node[above] at (4,12){$L_1$};
\node at (4,11.5){$\vdots$};
\draw[fill] (4,1) circle (0.1);
\node[above] at (4,1){$(5,6)$};
\draw[fill] (4,3) circle (0.1);
\node[above] at (4,3){$(4,5)$};
\draw[fill] (4,5) circle (0.1);
\node[above] at (4,5){$(3,4)$};
\draw[fill] (4,7) circle (0.1);
\node[above] at (4,7){$(2,3)$};
\draw[fill] (4,9) circle (0.1);
\node[above] at (4,9){$(1,2)$};
\node at (4,0){$\vdots$};

\node[above] at (8,12){$L_2$};
\node at (8,11.5){$\vdots$};
\draw[fill] (8,1) circle (0.1);
\node[above] at (8,1){$(5,6)$};
\draw[fill] (8,3) circle (0.1);
\node[above] at (8,3){$(8,5)$};
\draw[fill] (8,5) circle (0.1);
\node[above] at (8,5){$(3,4)$};
\draw[fill] (8,7) circle (0.1);
\node[above] at (8,7){$(2,3)$};
\draw[fill] (8,9) circle (0.1);
\node[above] at (8,9){$(1,2)$};
\node at (8,0){$\vdots$};

\node[above] at (12,11){$L_3$};
\draw[fill] (12,0) circle (0.1);
\draw[fill] (12,2) circle (0.1);
\draw[fill] (12,4) circle (0.1);
\draw[fill] (12,6) circle (0.1);
\draw[fill] (12,8) circle (0.1);
\draw[fill] (12,10) circle (0.1);
\node[above] at (12,10){$1$};
\node[above] at (12,8){$2$};
\node[above] at (12,6){$3$};
\node[above] at (12,4){$4$};
\node[above] at (12,2){$5$};
\node[above] at (12,0.2){$6$};
\draw[purple, dashed] (12,0) circle (0.3);

\draw (0,2)--(4,5);
\draw (4,5)to[out=0,in=180](8,9);
\draw (8,9)to[out=0,in=180](12,0);
\end{tikzpicture}
\end{subfigure}
\caption{Example of an active layered graph with respect to a matching $M$, in solid lines.
The free vertex $5$ appears in $L_0$ and the free vertex $6$ appears in $L_3$.
The augmenting path found by the layered graph is represented by a dashed purple line.}\label{fig:layered}
\end{figure*}

We introduce the following definition to handle augmenting paths for a matching $M$ in a graph $G$ via paths in its corresponding layered graph.
\begin{definition}
We say that a path $v_i,v_{i-1},\ldots,v_0$ with $v_j \in \tilde{L}_j\;(j \in \{0,1,\ldots,i\})$ is an \emph{$i$-path}.
Note that an $(\ell + 1)$-path in $H$ corresponds to an augmenting path of length $2\ell + 1$ in $G$.
\end{definition}

The layered graph defined above is slightly different from the original one due to McGregor~\cite{mcgregor2005finding} in that he did not include inactive vertices in $H$, as they are irrelevant to find augmenting paths.
However as we consider sensitivity of algorithms, it is convenient to fix the vertex set so that it is independent of the current matching $M$.

We briefly define the randomized greedy subroutine $\textsc{RandomizedGreedy}$ on a graph $G=(V,E)$ as follows.
The subroutine first chooses a random ordering $\pi$ over edges and then starting with an empty matching $M$, the procedure iteratively adds the $i$-th edge in the ordering $\pi$ to $M$ if the edge is not adjacent to any edge in $M$, until it has processed all edges.
See Algorithm~\ref{alg:rand:greedy} for the full details.

\begin{algorithm}[htb!]
  \caption{Randomized Greedy Algorithm}\label{alg:rand:greedy}
  \Procedure{\emph{\Call{RandomizedGreedy}{Graph $G=(V,E)$}}}{
	Generate a permutation $\pi$ of $E$ uniformly at random\;
	Greedily add edges to a maximal matching $M$, in the order of $\pi$\;
	Output $M$\;
	}
\end{algorithm}

Algorithm~\ref{alg:augmenting-paths} shows our algorithm for finding a large set of augmenting paths of length $2\ell+1$ given a matching $M$ in a graph $G$.
For a matching $M$ and a vertex $v$ belonging to an edge $e$ in $M$, let $\Gamma_M(v)$ denote the other endpoint of $e$.
Similarly, for a vertex set $S$ such that each edge in $M$ uses at most one vertex in $S$, let $\Gamma_M(S)$ denote the set of other endpoints.
For subsets $L$ and $R$ of adjacent layers in $H$, let $\Call{RandomizedGreedy}{L,R}$ denote the randomized greedy on the induced bipartite graph $H[L \cup R]$.
\Call{FindPaths}{} tries to find a large set of vertex disjoint $i$-paths from $S \subseteq L_i$ to $L_0$.
The result is stored as a tag function $t: V(H) \to V(H) \cup \{\mathsf{untagged}, \mathsf{dead\ end}\}$.
Here, $t(v)$ is initialized to $\mathsf{untagged}$, and it will represent the next vertex in the $i$-path found.
If we could not find any $i$-path starting from $v$, $t(v)$ is set to $\mathsf{dead end}$.

The difference from McGregor's algorithm is that we run the loop in \Call{FindPaths}{} $1/\delta$ times instead of running it until $|M'| \leq \delta |M|$.
This makes sure that we compute a maximal matching the same number of times no matter what $G$ and $M$ are, and it is more convenient when analyzing the sensitivity.

Our algorithm for the maximum matching problem (Algorithm~\ref{alg:randomized-matching}) simply runs \Call{AugmentingPaths}{} sufficiently many times for various choice of $\ell$ and then keep applying the obtained augmenting paths.
Before analyzing its approximation ratio and sensitivity, we analyze its running time.

\begin{algorithm}[t!]
  \caption{Augmentation Algorithm}\label{alg:augmenting-paths}
  \Procedure{\emph{\Call{AugmentingPaths}{$G,M,\ell,\delta$}}}{
    Construct a layered graph $H$ using $G$, $M$, and $\ell$\;
    $t(v) \leftarrow \mathsf{untagged}$ for every vertex $v$ in $H$\;
    $\Call{FindPaths}{H, L_{\ell+1}, \ell, \delta, t}$\;
    Convert $t$ to a set of augmenting paths in $G$\;
    Apply the augmenting paths to $M$.
  }

  \Procedure{\emph{\Call{FindPaths}{$H,S,i,\delta, t$}}}{
    $M' \leftarrow \Call{RandomizedGreedy}{S, L_{i-1} \cap t^{-1}(\mathsf{untagged})}$\;
    $S' \leftarrow \Gamma_{M'}(S)$\;
    \If{$i=1$}{
      \For{$u \in S$}{
        \If{$u \in \Gamma_{M'}(L_0)$}{$t(u) \leftarrow \Gamma_{M'}(u)$}
        \If{$u \in S \setminus \Gamma_{M'}(L_0)$}{$t(u) \leftarrow \mathsf{dead\; end}$.}
      }
      \Return $t$.
    }
    \For{$\lceil 1/\delta \rceil$ times}{
      \Call{FindPaths}{$H,S',i-1,\delta^2,t$}\;
      \For{$v \in S' \setminus t^{-1}(\mathsf{dead\;end})$}{
        $t(\Gamma_{M'}(v)) \leftarrow v$.
      }
      $M' \leftarrow \Call{RandomizedGreedy}{S \cap t^{-1}(\mathsf{untagged}), L_{i-1} \cap t^{-1}(\mathsf{untagged}) }$\;
      $S' \leftarrow \Gamma_{M'}(S \cap t^{-1}(\mathsf{untagged}))$.
    }
    \For{$v \in S \cap t^{-1}(\mathsf{untagged})$}{
      $t(v) \leftarrow \mathsf{dead\; end}$.
    }
    \Return
  }
\end{algorithm}

\begin{algorithm}[t!]
  \caption{Algorithm for maximum matching}\label{alg:randomized-matching}
  \Procedure{\emph{\Call{Matching}{$G,\epsilon$}}}{
    Let $\pi$ be a random ordering of the edges of $G$\;
    Let $M$ be the greedy maximal matching of $G$ induced by $\pi$\label{line:randomized-matching-0}\;
    $k \gets\lceil \epsilon^{-1}+1\rceil$\;
    $r\gets 4k^2(16k+20)(k-1){(2k)}^k$\;
    \For{$\ell=1$ to $k$}{
      \For{$i=1$ to $r$}{
        $M'_{\ell,i} \leftarrow \Call{AugmentingPaths}{G,M,\ell,\frac{1}{r(2k+2)}}$\;
        $M \leftarrow M \oplus M'_{\ell,i}$.\label{line:randomized-matching-update}
      }
    }
    \Return $M$.
  }
\end{algorithm}

\begin{lemma}\label{lem:time-complexity}
  The total running time of Algorithm~\ref{alg:randomized-matching} is $O(n+m)K$, where $K={(1/\epsilon)}^{2^{O(1/\epsilon)}}$.
\end{lemma}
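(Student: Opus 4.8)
The plan is to bound the running time by peeling off the cost layer by layer: first the cost of a single \Call{RandomizedGreedy}{} call, then the cost of a single \Call{FindPaths}{} invocation at level $i$, then \Call{AugmentingPaths}{}, and finally the outer loops of Algorithm~\ref{alg:randomized-matching}. I would first observe that the layered graph $H$ has $O(n^2)$ vertices but only $O(m)$ relevant (non-isolated) edges, since every edge of $H$ either corresponds to an endpoint-adjacency witnessed by an edge of $G$ incident to a matched vertex, or to one of the $O(n)$ boundary edges touching free vertices; more carefully, the active vertices in internal layers are in bijection with oriented matching edges, so $|\tilde L| = O(n)$ and the number of edges of $H$ among active vertices is $O(m)$. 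Constructing $H$ (i.e.\ sampling all the $i_v$ and $i_{\{u,v\}}$ and recording the induced adjacencies) thus takes $O(n+m)$ time, and a single \Call{RandomizedGreedy}{} on any induced bipartite subgraph of $H$ runs in $O(n+m)$ time as well, since randomized greedy is linear in the size of the graph it is run on.

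Next I would set up the recursion for \Call{FindPaths}{$H,S,i,\delta,t$}. Let $T(i,\delta)$ denote its running time. The base case $i=1$ costs one \Call{RandomizedGreedy}{} call plus an $O(n)$ loop, so $T(1,\delta)=O(n+m)$, independent of $\delta$. For $i\ge 2$, the procedure executes a loop $\lceil 1/\delta\rceil$ times, and each iteration does one recursive call \Call{FindPaths}{$H,S',i-1,\delta^2,t$}, two $O(n)$-ish loops over $S'$ and $S$, and one \Call{RandomizedGreedy}{} call; hence
\[
  T(i,\delta) \le \left\lceil \tfrac{1}{\delta}\right\rceil\bigl(T(i-1,\delta^2) + O(n+m)\bigr).
\]
Unrolling this gives a product of reciprocals $\frac{1}{\delta}\cdot\frac{1}{\delta^2}\cdots\frac{1}{\delta^{2^{i-1}}} = \delta^{-(2^{i}-1)}$ worth of leaves, each costing $O(n+m)$, so $T(i,\delta) = O(n+m)\cdot \delta^{-2^{O(i)}}$. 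In \Call{AugmentingPaths}{$G,M,\ell,\delta$} we call \Call{FindPaths}{$H,L_{\ell+1},\ell,\delta,t$} once (plus $O(n+m)$ for building $H$, converting $t$, and applying the paths), so its cost is $O(n+m)\cdot\delta^{-2^{O(\ell)}}$.

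Finally I would substitute the parameters from Algorithm~\ref{alg:randomized-matching}: $k=\lceil \epsilon^{-1}+1\rceil$, $\ell$ ranges up to $k$, the inner loop runs $r = 4k^2(16k+20)(k-1)(2k)^k = k^{O(k)} = (1/\epsilon)^{O(1/\epsilon)}$ times, and $\delta = \frac{1}{r(2k+2)}$, so $\delta^{-1} = (1/\epsilon)^{O(1/\epsilon)}$. Therefore each \Call{AugmentingPaths}{} call costs $O(n+m)\cdot\bigl((1/\epsilon)^{O(1/\epsilon)}\bigr)^{2^{O(k)}} = O(n+m)\cdot (1/\epsilon)^{2^{O(1/\epsilon)}}$, and the total over the $\sum_{\ell=1}^k r = kr = (1/\epsilon)^{O(1/\epsilon)}$ calls, plus the $O(n+m)$ for the initial greedy maximal matching on line~\ref{line:randomized-matching-0} and the $O(n)$ per update on line~\ref{line:randomized-matching-update}, is $O(n+m)\cdot K$ with $K = (1/\epsilon)^{2^{O(1/\epsilon)}}$, which dominates. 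The only delicate point — the step I expect to be the main obstacle — is pinning down the bound $|E(H)|=O(m)$ and $|\tilde L| = O(n)$ rather than the naive $O(n^2)$ coming from the vertex set $V\times V$ of the internal layers; this requires noting that inactive vertices are isolated and that each internal active vertex is one orientation of a distinct matching edge, so the per-call linear-time claim for \Call{RandomizedGreedy}{} on $H$ is justified. Once that is in hand, the rest is the routine recursion unrolling sketched above.
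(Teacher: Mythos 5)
Your proof is correct and follows essentially the same route as the paper's: unroll the \Call{FindPaths}{} recursion to count $\delta^{-2^{O(k)}}$ loop iterations, each costing $O(n+m)$, and multiply by the $kr$ calls to \Call{AugmentingPaths}{}. The only difference is that you explicitly justify the $O(n+m)$ per-iteration cost by bounding the active part of the layered graph ($|\tilde L|=O(n)$, $|E(H)|=O(m)$), a point the paper's proof leaves implicit; this is a worthwhile addition but not a different argument.
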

\begin{proof}
Observe that the outer loop of Algorithm~\ref{alg:randomized-matching} runs for $r$ iterations and the inner loop runs for $k$ iterations, where $k=\lceil \epsilon^{-1}+1\rceil$ and $r=4k^2(16k+20)(k-1){(2k)}^k$.
Each inner loop runs an instance of $\textsc{AugmentingPaths}$ with parameters $\ell\le k$ and $\delta=\frac{1}{r(2k+2)}$, which creates a layered graph with $O(\ell)$ layers in $O((n+m)k)$ time, and then calls $\textsc{FindPaths}$.
For each time that $\textsc{FindPaths}$ is called, the value of $\delta$ is squared and the value of $\ell$ is decremented, starting at $\ell=k$ until $\ell=1$.
Thus, the loop in $\textsc{FindPaths}$ is run at most $\frac{1}{\delta}=O\left({(r(2k+2))}^{2^k}\right)$ times and each loop uses time $O(n+m)$.
Hence, the total runtime is $O(n+m)K$, where $K={(1/\epsilon)}^{2^{O(1/\epsilon)}}$.
\end{proof}

\subsection{Approximation Ratio}\label{subsec:randomized-approximation-ratio}
In this section, we analyze the approximation ratio of Algorithm~\ref{alg:randomized-matching}.

\begin{lemma}\cite{mcgregor2005finding}
Suppose \Call{FindPaths}{$\cdot,\cdot,j,\cdot$} is called $\delta^{-2^{i-j+1}+1}$ times in the recursion for \Call{FindPaths}{$H,S,i,\delta$}.
Then at most $2\delta|L_i|$ paths are removed from consideration as being $(i+1)$-paths.
\end{lemma}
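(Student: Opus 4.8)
The plan is to prove this statement by induction on $i - j$, tracking how many paths are "lost" (removed from consideration as potential $(i+1)$-paths, i.e., vertices of $L_i$ that get tagged $\mathsf{dead\ end}$) at each level of the recursion. The base case is $i = j$: here the claim is that a single call to \Call{FindPaths}{$H,S,i,\delta$} with $S \subseteq L_i$ removes at most $2\delta|L_i|$ paths. When $i = 1$, the only vertices that get tagged $\mathsf{dead\ end}$ are those in $S \setminus \Gamma_{M'}(L_0)$, where $M'$ is a maximal matching in the bipartite graph $H[S \cup (L_0 \cap t^{-1}(\mathsf{untagged}))]$; by maximality, every such vertex has all its $L_0$-neighbors already matched or tagged, so the count is controlled. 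Actually the cleaner statement here is that only $S \setminus \Gamma_{M'}(L_0)$ is lost, and since we want $2\delta |L_i|$, the base case holds trivially once $\delta$ is small enough, but the real content is how losses accumulate across the $\lceil 1/\delta\rceil$ iterations of the main loop.

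The inductive step is the heart of it. Consider \Call{FindPaths}{$H,S,i,\delta$} with $i > j$. The loop runs $\lceil 1/\delta \rceil$ times; in each iteration we recurse with \Call{FindPaths}{$H,S',i-1,\delta^2,t$}. A vertex $v \in S$ ends up tagged $\mathsf{dead\ end}$ only if it is never matched by any of the maximal matchings $M'$ computed in the loop (the final \textbf{for} loop tags all still-\texttt{untagged} vertices of $S$ as dead ends). I want to argue: over the course of the loop, either a given iteration matches a "large" fraction of the remaining untagged vertices of $S$ into good $(i-1)$-paths — in which case it makes progress — or it matches few, which by maximality of the randomized greedy matching means many of those vertices had all their potential continuations blocked, and those continuations were blocked because too many $(i-1)$-paths were lost below. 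The recursion depth-$j$ count of $\delta^{-2^{i-j+1}+1}$ calls arises from unrolling: at level $i$ the loop runs $1/\delta$ times, at level $i-1$ with parameter $\delta^2$ it runs $1/\delta^2$ times, and so on, so by the time we reach level $j$ the parameter is $\delta^{2^{i-j}}$ and the number of invocations is the product $\delta^{-1}\cdot\delta^{-2}\cdots\delta^{-2^{i-j}} = \delta^{-(2^{i-j+1}-1)} = \delta^{-2^{i-j+1}+1}$, matching the hypothesis. Under that hypothesis the inductive claim at level $j$ gives at most $2\delta^{2^{i-j}}|L_j|$ losses there; I need to sum these contributions up the recursion tree and show they telescope to at most $2\delta|L_i|$ total losses at level $i$. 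The key accounting identity is that each lost $j$-path blocks at most a bounded "charge" worth of $i$-paths, and the geometric decay of the $\delta$-parameters dominates the geometric growth of the number of recursive calls.

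The main obstacle I anticipate is the bookkeeping of how a path lost deep in the recursion (at level $j$) "propagates upward" to cause a path to be lost at level $i$ — specifically, setting up a charging argument so that the $2\delta^{2^{i-j}}|L_j|$ losses at level $j$, multiplied by the number of times level $j$ is reached, still sum to something bounded by $2\delta|L_i|$ rather than blowing up. This requires carefully using the maximality of each \Call{RandomizedGreedy}{} call: if few vertices of $S$ get matched in an iteration, then (by maximality) most unmatched vertices of $S$ have all their neighbors in $L_{i-1}$ either already tagged, matched this round, or — crucially — unavailable because the recursive \Call{FindPaths}{} below tagged them $\mathsf{dead\ end}$. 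Translating "few matched" into "many dead ends below" and then invoking the inductive bound is the delicate link; I would follow McGregor's original accounting closely here, being careful that our modification (running the loop exactly $\lceil 1/\delta\rceil$ times rather than until $|M'| \le \delta|M|$) still yields the same bound — indeed it should only help, since running a fixed large number of iterations can only reduce the number of leftover untagged vertices compared to stopping early.
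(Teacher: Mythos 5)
This lemma is not proved in the paper at all: it is imported verbatim from McGregor's paper via the citation, so there is no in-paper argument to compare against. Judged on its own terms, your proposal is a plan rather than a proof, and the plan's central step is missing. You correctly identify the skeleton — induction on recursion depth, the fact that the count $\delta^{-2^{i-j+1}+1}$ comes from unrolling the loop counts $\delta^{-1}\cdot\delta^{-2}\cdots\delta^{-2^{i-j}}=\delta^{-(2^{i-j+1}-1)}$, and the fact that maximality of each \textsc{RandomizedGreedy} matching is the tool that converts "vertex left unmatched" into "all its continuations were blocked." But the heart of the lemma is exactly the charging argument you defer: showing that a vertex of $S$ which is dead-ended at level $i$ despite admitting a genuine extension can be charged either to a loss at a lower level (bounded inductively by $2\delta^{2^{i-j}}|L_j|$ per call) or to one of the at most $\delta|S|$ "unproductive" final iterations, and that the resulting sum over the whole recursion tree telescopes to $2\delta|L_i|$ rather than being multiplied by the $\delta^{-(2^{i-j}-1)}$ calls at level $j$. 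You explicitly write that you would "follow McGregor's original accounting closely here," which is an acknowledgment that the decisive computation has not been done; as it stands, nothing in the proposal rules out the losses compounding geometrically up the tree.

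Two further concrete problems. First, your base case is not established: the claim that it "holds trivially once $\delta$ is small enough" is not a valid reading of the statement, which must hold for the given $\delta$; and the raw count $|S\setminus\Gamma_{M'}(L_0)|$ is not bounded by $2\delta|L_1|$ in general. What must be bounded is the number of vertices that are dead-ended \emph{wrongly}, i.e., that could still have been extended within some fixed maximal collection of $(i+1)$-paths — and distinguishing wrongful dead ends from correct ones is precisely where the maximality argument has to be made quantitative, which you do not do. Second, you never fix the reference object ("a maximal set of $(i+1)$-paths") against which "removed from consideration" is measured; without pinning that down, the quantity you are inducting on is not well defined. The final remark — that running the loop a fixed $\lceil 1/\delta\rceil$ times instead of until $|M'|\le\delta|M|$ "can only help" — is plausible and is indeed the reason the paper can cite McGregor's lemma unchanged, but it too would need a one-line monotonicity argument rather than an assertion.
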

Let $\mathcal{L}_i$ be the set of graphs whose vertices are partitioned into $i+2$ layers, $L_0,\ldots,L_{i+1}$ and whose edges are a subset of $\bigcup_{j=1}^{i+1}(L_{j} \times L_{j-1})$.
Then we immediately have the following lemma, analogous to Lemma 2 in~\cite{mcgregor2005finding}:
\begin{lemma}
 For a graph $G\in\mathcal{L}_i$, \Call{FindPaths}{$H,S,i,\delta$} finds at least $(\gamma-\delta)|M|$ of the $(i+1)$-paths among some maximal set of $(i+1)$-paths of size $\gamma|M|$.
\end{lemma}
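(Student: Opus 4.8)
The plan is to mimic the inductive argument of Lemma 2 in McGregor~\cite{mcgregor2005finding}, adapted to the fact that our \Call{FindPaths}{} runs its loop a fixed $\lceil 1/\delta\rceil$ times rather than until the maximal matching becomes small. Fix a graph $G \in \mathcal{L}_i$ and let $\mathcal{P}^*$ be a maximal collection of vertex-disjoint $(i+1)$-paths, with $|\mathcal{P}^*| = \gamma|M|$; I want to show \Call{FindPaths}{$H,S,i,\delta$} tags enough vertices to recover at least $(\gamma - \delta)|M|$ vertex-disjoint $(i+1)$-paths. The base case $i = 1$ is immediate: a single call to \textsc{RandomizedGreedy} on the bipartite graph $H[S \cup L_0]$ produces a maximal matching, and any maximal matching already realizes a maximal set of $2$-paths (there is nothing to lose at the bottom level), so we obtain the full $\gamma|M|$. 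For the inductive step, I would track how many paths in $\mathcal{P}^*$ get ``killed'' at level $i$ across the $\lceil 1/\delta\rceil$ iterations of the loop.

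The key accounting is as follows. At each iteration of the loop, \textsc{RandomizedGreedy} computes a maximal matching $M'$ between the still-untagged vertices of $S$ and $L_{i-1}$; then we recurse on $S' = \Gamma_{M'}(S)$ with parameter $\delta^2$, and by the induction hypothesis the recursive call fails to tag at most (roughly) $\delta^2 |L_{i-1}|$ — but more usefully, at most a $\delta$-fraction-of-$|M|$ worth of the relevant paths — of the vertices in $S'$ that lie on target paths. Because $M'$ is \emph{maximal}, every vertex of $S$ that still lies on an untagged $(i+1)$-path must be matched by $M'$ to some vertex of $L_{i-1}$ (otherwise that path could be extended, contradicting maximality of $M'$ on the untagged induced subgraph). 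Hence after each loop iteration, the set of ``surviving'' target paths shrinks: a path survives to the next iteration only if its level-$i$ vertex got matched but its continuation into $L_{i-1}$ was among the $O(\delta^2|M|)$ (after rescaling, $O(\delta|M|)$ once we unwind the squared parameter through the $\le 1/\delta$ iterations) vertices the recursion failed on. Summing the losses over the $\lceil 1/\delta \rceil$ iterations, and using that the inner failure parameter is $\delta^2$ so that $\lceil1/\delta\rceil \cdot (\text{loss per round}) = O(\delta|M|)$, gives a total loss of at most $\delta|M|$, which is exactly the slack we are allowed. The previous lemma in the excerpt (``at most $2\delta|L_i|$ paths are removed'') is the quantitative form of this bound and I would invoke it directly rather than re-deriving the constants.

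Concretely the steps are: (1) state the invariant — after $j$ loop iterations at level $i$, the number of $(i+1)$-paths in $\mathcal{P}^*$ all of whose level-$\le i$ vertices are tagged-consistently is at least $(\gamma - j\cdot\delta^2 \cdot c)|M|$ for an appropriate constant, with every not-yet-accounted target path having its $S$-endpoint matched by the current $M'$; (2) prove the invariant holds after one iteration using maximality of \textsc{RandomizedGreedy}'s output together with the inductive guarantee on the recursive \Call{FindPaths}{$\cdot,\cdot,i-1,\delta^2$} call; (3) iterate $\lceil 1/\delta\rceil$ times and sum the per-round losses; (4) handle the final cleanup loop that marks the remaining untagged $S$-vertices as $\mathsf{dead\ end}$, observing it discards only paths already counted as lost; (5) translate the surviving tagged structure back into genuine vertex-disjoint $(i+1)$-paths, which is immediate since the tag function $t$ encodes a partial matching-orientation and disjointness is maintained because \textsc{RandomizedGreedy} only matches untagged vertices. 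I expect the main obstacle to be the bookkeeping in step (2)–(3): reconciling the squared parameter $\delta^2$ passed to the recursion with the $\lceil 1/\delta\rceil$ repetitions so that the geometric-looking sum of losses telescopes to $\le \delta|M|$ rather than blowing up, and making sure the "maximality forces a match" argument is applied to the correct induced subgraph (the one restricted to currently-untagged vertices). Once the previous lemma is plugged in to control the recursion depth and the number of removed paths, the rest is the same bookkeeping as in~\cite{mcgregor2005finding}, so I would keep the write-up short and cite that paper for the parts that are unchanged.
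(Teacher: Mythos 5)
Your proposal follows the same route as the paper: the paper gives no proof of this lemma at all, stating it as ``immediate'' from the preceding (cited) lemma and as the analogue of Lemma~2 of McGregor, and your sketch is a faithful reconstruction of that inductive argument (maximality of the greedy matching forcing surviving paths to be matched, the $\delta^2$ recursion parameter telescoping against the $\lceil 1/\delta\rceil$ iterations, and the previous lemma supplying the quantitative bound on removed paths). The only soft spots --- the constant-factor slack between the $2\delta|L_i|$ bound and the stated $\delta|M|$, and the ``some maximal set'' phrasing that makes the base case work --- are inherited from the paper itself, so your write-up is consistent with the intended proof.
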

We require the following structural property relating maximal and maximum matchings through the set of connected components in the symmetric difference.
\begin{lemma}[Lemma 1 in~\cite{mcgregor2005finding}]\label{lem:alpha:bound}
  Let $M$ be a maximal matching and $M^*$ be a maximum matching.
  Let $\mathcal{C}$ be the set of connected components in $M^*\triangle M$.
  Let $\alpha_{\ell}$ be the constant so that $\alpha_\ell|M|$ is the number of connected components in $\mathcal{C}$ with $\ell$ edges from $M$, excluding those with $\ell$ edges from $M^*$.
  If $\max_{\ell \in[k]}\alpha_\ell \le\frac{1}{2k^2(k+1)}$, then $|M|\ge\frac{|M^*|}{1+1/k}$.
\end{lemma}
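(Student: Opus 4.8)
The plan is to invoke the standard decomposition of $M^* \triangle M$ into vertex-disjoint alternating paths and even alternating cycles, use Berge's theorem to restrict which component types can occur, reduce $|M^*| - |M|$ to a count of augmenting paths, and then bound that count by treating short and long augmenting paths separately.

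First I would recall that in the graph $M^* \triangle M$ every vertex has degree at most $2$ (at most one incident $M$-edge and one incident $M^*$-edge), so each component in $\mathcal{C}$ is a path or an even cycle whose edges alternate between $M$ and $M^*$. Even cycles and paths whose two end-edges are one from $M$ and one from $M^*$ are balanced and contribute nothing to $|M^*| - |M|$. Since $M^*$ is maximum, Berge's theorem rules out any component with strictly more $M$-edges than $M^*$-edges; since $M$ is maximal, there is no length-$1$ augmenting path for $M$ (a lone $M^*\setminus M$ edge with both endpoints free in $M$ would contradict maximality). Hence a component with $\ell$ edges of $M$ has either $\ell$ or $\ell+1$ edges of $M^*$, and those with $\ell+1$ edges of $M^*$ — precisely the components counted by $\alpha_\ell|M|$ once the balanced ones with $\ell$ edges of $M^*$ are excluded — are augmenting paths with $\ell \ge 1$. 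Writing $c_\ell = \alpha_\ell |M|$, each such path contributes exactly $+1$ to $|M^*| - |M|$ and every other component contributes $0$, so $|M^*| - |M| = \sum_{\ell \ge 1} c_\ell$.

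Next I would split this sum at $\ell = k$. For $\ell \in [k]$ the hypothesis $\max_{\ell \in [k]}\alpha_\ell \le \frac{1}{2k^2(k+1)}$ gives $\sum_{\ell=1}^{k} c_\ell \le \frac{k}{2k^2(k+1)}|M| = \frac{|M|}{2k(k+1)}$. For $\ell > k$ I would use an edge-budget argument: the augmenting paths are pairwise vertex-disjoint, hence edge-disjoint, and each one with $\ell > k$ uses at least $k+1$ distinct edges of $M$, so at most $\frac{|M|}{k+1}$ of them can exist, i.e.\ $\sum_{\ell > k} c_\ell \le \frac{|M|}{k+1}$. Adding the two pieces,
\[
  |M^*| - |M| \;\le\; \frac{|M|}{2k(k+1)} + \frac{|M|}{k+1} \;=\; \frac{(2k+1)|M|}{2k(k+1)} \;\le\; \frac{|M|}{k},
\]
where the final inequality is just $2k+1 \le 2k+2$. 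Rearranging gives $|M^*| \le (1+1/k)|M|$, i.e.\ $|M| \ge \frac{|M^*|}{1+1/k}$.

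The main obstacle is that the hypothesis places no restriction whatsoever on augmenting paths longer than $2k+1$; the fix is the edge-budget observation above, and the only mildly delicate point is verifying that the constant $\frac{1}{2k^2(k+1)}$ is exactly enough for the short-path and long-path contributions to sum to at most $|M|/k$.
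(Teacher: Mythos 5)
Your proof is correct. The paper does not reprove this statement---it imports it verbatim as Lemma~1 of McGregor---and your argument is exactly the standard one used there: decompose $M^*\triangle M$ into alternating paths and even cycles, observe that only augmenting paths (which have $\ell\ge 1$ $M$-edges by maximality of $M$ and exactly $\ell+1$ $M^*$-edges by Berge) contribute to $|M^*|-|M|$, bound the short ones by the hypothesis on $\alpha_\ell$ and the long ones by the disjointness budget $\sum_\ell \ell\,c_\ell\le|M|$, and check that $\frac{1}{2k(k+1)}+\frac{1}{k+1}\le\frac{1}{k}$.
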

We also require the following result by \cite{mcgregor2005finding} bounding the number of augmenting paths found by \textsc{AugmentingPaths}.
\begin{lemma}[Theorem 1 in~\cite{mcgregor2005finding}]\label{lem:round:augment}
  If $G$ has $\alpha_\ell|M|$ augmenting paths of length $2\ell+1$, then the number of augmenting paths of length $2\ell+1$ found by \Call{AugmentingPaths}{} is at least $(b_\ell\beta_\ell-\delta)|M|$, where $b_\ell=\frac{1}{2\ell+1}$ and $\beta_\ell\sim \mathcal{B}\left(\alpha_\ell|M|,\frac{1}{2{(2\ell)}^\ell}\right)$.
\end{lemma}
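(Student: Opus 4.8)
The plan is to adapt the proof of Theorem~1 of~\cite{mcgregor2005finding} to the present subroutine, decoupling the randomness used to build the layered graph $H$ from the randomness used inside \Call{FindPaths}{}. Fix a maximal collection $\mathcal{A}$ of vertex-disjoint augmenting paths of length $2\ell+1$ in $G$, so that $|\mathcal{A}| = \alpha_\ell|M|$ by hypothesis. Write a path $P \in \mathcal{A}$ as $(w_0,w_1,\dots,w_{2\ell+1})$ with free endpoints $w_0,w_{2\ell+1}$, matching edges $\{w_{2j-1},w_{2j}\}$ for $j \in [\ell]$, and non-matching edges $\{w_{2j},w_{2j+1}\}$. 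First I would compute the probability, over the random choices that define $H$, that $P$ \emph{survives}, meaning that the copies of its vertices and of its matching edges are active and placed so that $P$ is realized as a genuine $(\ell+1)$-path of $H$ under the three edge-creation rules. Then, conditioning on $H$, I would feed the surviving paths into the deterministic-in-$H$ guarantee already established for \Call{FindPaths}{}.

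For the survival probability, $P$ becomes an $(\ell+1)$-path of $H$ exactly when: (i) the active copies of its two free endpoints occupy the two end layers $L_0$ and $L_{\ell+1}$, one in each, which happens with probability $1/2$ and fixes the direction in which $P$ is traversed through the layers; and (ii) for each of the $\ell$ matching edges of $P$, the active orientation is the one consistent with that traversal (probability $1/2$) and the active copy receives the unique internal layer among $L_1,\dots,L_\ell$ determined by that edge's position on $P$ (probability $1/\ell$). Nothing else is required, since the non-matching edges of $P$ enter the edge rules only through $G$-adjacency and are always inactive; one should check here that when (i) and (ii) hold the listed $H$-vertices are indeed consecutively adjacent in $H$ by the coordinate conditions ``$t$ adjacent to $v$'', ``$v$ adjacent to $u'$'', ``$s$ adjacent to $u$''. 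The events in (ii) are mutually independent and independent of (i), giving $\Pr[P\text{ survives}] = \tfrac12\bigl(\tfrac1{2\ell}\bigr)^\ell = \tfrac1{2(2\ell)^\ell}$. Because distinct paths of $\mathcal{A}$ are vertex-disjoint, they share neither matching edges nor free endpoints, so their survival events are independent; hence the number of survivors $\beta_\ell$ has distribution $\mathcal{B}\bigl(\alpha_\ell|M|,\tfrac1{2(2\ell)^\ell}\bigr)$, and these survivors are pairwise vertex-disjoint $(\ell+1)$-paths in $H$ (their gadgets in $H$ are disjoint).

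It remains to pass from $\beta_\ell$ available disjoint $(\ell+1)$-paths in $H$ to the number of paths actually found. A covering argument shows that every inclusion-maximal family $\mathcal{M}^\star$ of vertex-disjoint $(\ell+1)$-paths in $H$ has $|\mathcal{M}^\star| \ge b_\ell \beta_\ell$ with $b_\ell = \tfrac1{2\ell+1}$: each survivor is either in $\mathcal{M}^\star$ or collides with some path of $\mathcal{M}^\star$, and since the survivors are pairwise disjoint and each $(\ell+1)$-path spans only $O(\ell)$ slots that a disjoint path could reuse, at most $O(\ell)$ survivors collide with any one member of $\mathcal{M}^\star$ (the constant $\tfrac1{2\ell+1}$ is more than enough). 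Now condition on $H$: by the earlier \Call{FindPaths}{} lemma, the call \Call{FindPaths}{$H, L_{\ell+1}, \ell, \delta$}, and hence \Call{AugmentingPaths}{$G, M, \ell, \delta$}, finds at least $(\gamma-\delta)|M|$ of some maximal family of vertex-disjoint $(\ell+1)$-paths, whose size $\gamma|M|$ is therefore $\ge b_\ell\beta_\ell$. Thus the number of length-$(2\ell+1)$ augmenting paths found is at least $b_\ell\beta_\ell - \delta|M|$, the stated bound.

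The step I expect to be the main obstacle is the precise survival analysis: one must verify that ``correctly placed endpoints and matching edges'' really does force the whole $H$-path to exist under the coordinate-level edge rules, and that no further coincidences among the $L_0/L_{\ell+1}$ placements of the \emph{internal} vertices of $P$ can break it — this is what pins down the binomial parameters exactly rather than only up to constants. A second point of care is the independence bookkeeping: the randomness defining $H$ and the randomness inside \Call{FindPaths}{} (the greedy orderings) are independent, so conditioning on $H$ and treating the $\beta_\ell$ survivors as a fixed lower bound on the supply of disjoint $(\ell+1)$-paths is legitimate, and the only probabilistic object left in the final bound is $\beta_\ell$ itself. The covering argument and the invocation of the earlier lemmas are then routine.
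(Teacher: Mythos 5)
The paper does not actually prove this lemma --- it is imported as Theorem~1 of~\cite{mcgregor2005finding} --- and your reconstruction is precisely that standard argument: each vertex-disjoint augmenting path of length $2\ell+1$ survives in the random layered graph with probability exactly $\tfrac{1}{2}(2\ell)^{-\ell}$ (endpoints split across $L_0,L_{\ell+1}$ with probability $\tfrac12$, each matching edge correctly oriented and placed with probability $\tfrac{1}{2\ell}$), survival is independent across disjoint paths giving the binomial, maximality of any disjoint family of $(\ell+1)$-paths costs only the factor $b_\ell$, and the \textsc{FindPaths} guarantee supplies the $-\delta|M|$ loss. Your derivation is correct (your covering factor $\tfrac{1}{\ell+2}$ is in fact slightly stronger than the stated $b_\ell=\tfrac{1}{2\ell+1}$); the only discrepancy is notational, since in the lemma as stated $\beta_\ell$ must be read as the survivor count normalized by $|M|$, i.e.\ $\beta_\ell|M|\sim\mathcal{B}\bigl(\alpha_\ell|M|,\tfrac{1}{2(2\ell)^\ell}\bigr)$, which is exactly the quantity your bound controls.
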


\noindent
We now show that Algorithm~\ref{alg:randomized-matching} outputs a $(1-\epsilon)$-approximation to the maximum matching. 
\begin{theorem}\label{thm:randomized-approximation}
Algorithm~\ref{alg:randomized-matching} finds a $(1-\epsilon)$-approximation to the maximum matching with probability at least $0.99$.
\end{theorem}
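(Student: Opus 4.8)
The plan is to track, through each application of \textsc{AugmentingPaths}, how the parameters $\alpha_\ell$ (the normalized counts of bad connected components from Lemma~\ref{lem:alpha:bound}) evolve, and to show that after $r$ rounds for a given length $2\ell+1$ we have driven $\alpha_\ell$ below the threshold $\frac{1}{2k^2(k+1)}$. The first step is to set up notation: fix the maximum matching $M^*$, and for the current matching $M$ let $\alpha_\ell |M|$ be the number of connected components of $M^* \triangle M$ containing exactly $\ell$ edges of $M$ (and $\ell+1$ edges of $M^*$), as in Lemma~\ref{lem:alpha:bound}; each such component is exactly an augmenting path of length $2\ell+1$. Then by Lemma~\ref{lem:round:augment}, a single call \Call{AugmentingPaths}{$G,M,\ell,\delta$} finds at least $(b_\ell \beta_\ell - \delta)|M|$ vertex-disjoint augmenting paths of length $2\ell+1$, where $b_\ell = \frac{1}{2\ell+1}$ and $\beta_\ell \sim \mathcal{B}(\alpha_\ell|M|, \frac{1}{2(2\ell)^\ell})$.

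Next I would argue the potential decrease. Applying these augmenting paths to $M$ (line~\ref{line:randomized-matching-update}) removes at least that many bad length-$(2\ell+1)$ components: each augmenting path applied converts a bad component into a good one, and since the paths are vertex-disjoint they do not interfere. Using a concentration bound on the binomial $\beta_\ell$, with probability close to $1$ we have $\beta_\ell \geq \frac{1}{2}\,\mathbb{E}[\beta_\ell] = \frac{\alpha_\ell |M|}{4(2\ell)^\ell}$ whenever $\alpha_\ell|M|$ is not too small; choosing $\delta = \frac{1}{r(2k+2)}$ makes the additive $-\delta|M|$ loss negligible compared to the multiplicative gain. So one round multiplies $\alpha_\ell$ by a factor of at most $\bigl(1 - \frac{1}{4(2\ell+1)(2\ell)^\ell}\bigr)$, up to the small additive slack; this is where the constant $r = 4k^2(16k+20)(k-1)(2k)^k$ comes from — it is chosen so that $r$ iterations suffice to shrink $\alpha_\ell$ from its initial value (at most $1$, since $M$ is a maximal matching so $|M| \geq |M^*|/2$) down past $\frac{1}{2k^2(k+1)}$, while the total additive slack $r \cdot \delta = \frac{1}{2k+2}$ stays under control. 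I also need to check that processing length $\ell$ does not blow up $\alpha_{\ell'}$ for other $\ell'$; applying a length-$(2\ell+1)$ augmenting path merges at most a bounded number of components, and a careful accounting (as in McGregor's analysis) shows the relevant $\alpha_{\ell'}$ for $\ell' \le k$ are not increased enough to matter, or one processes $\ell$ from small to large and absorbs the cross terms into the slack.

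Then I would take a union bound over all $k$ choices of $\ell$ and all $r$ inner iterations (that is, $kr$ invocations total), so that the concentration event for every $\beta_\ell$ holds simultaneously with probability at least $0.99$ — this forces the per-round failure probability to be $O(1/(kr))$, which is fine since $\beta_\ell$'s expectation is polynomially large in $n$ whenever $\alpha_\ell|M|$ is above the threshold we care about (and once it drops below, we are already done for that $\ell$ and can ignore failures). Conditioned on this event, after the outer loop finishes, $\alpha_\ell \le \frac{1}{2k^2(k+1)}$ for every $\ell \in [k]$, so Lemma~\ref{lem:alpha:bound} gives $|M| \ge \frac{|M^*|}{1 + 1/k}$, and since $k = \lceil \epsilon^{-1} + 1 \rceil$ we get $|M| \ge (1-\epsilon)|M^*|$.

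The main obstacle I expect is the bookkeeping of how $\alpha_\ell$ changes across rounds and across different values of $\ell$: a single augmenting-path augmentation can split or merge connected components of $M^* \triangle M$, so the clean "one round multiplies $\alpha_\ell$ by a fixed factor" picture needs the observation that only length-$(2\ell+1)$ components are directly consumed and that the created structure is benign, plus the argument that the additive $\delta$-slack from running \textsc{FindPaths} a fixed $\lceil 1/\delta \rceil$ times (rather than adaptively) accumulates to at most a constant over all $kr$ rounds. Getting the constant $r$ to actually be large enough — i.e. verifying $(1 - \frac{1}{4(2\ell+1)(2\ell)^\ell})^r < \frac{1}{4k^2(k+1)}$ after subtracting slack — is the routine-but-delicate calculation that the stated value of $r$ is engineered to make work.
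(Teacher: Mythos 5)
Your overall skeleton (use Lemma~\ref{lem:round:augment} per round, end with Lemma~\ref{lem:alpha:bound}) is right, but the core of your argument --- that each round multiplicatively shrinks $\alpha_\ell$, so that after $r$ rounds all $\alpha_\ell$ are below threshold --- is not what the paper proves and has a genuine gap. The augmenting paths that \textsc{AugmentingPaths} finds and applies are augmenting paths in $G$ with respect to $M$; they are \emph{not} in general connected components of $M^*\triangle M$ for the fixed maximum matching $M^*$. So applying one of them does not ``convert a bad component into a good one'': it can cut across several components of $M^*\triangle M$ and create new odd components of various lengths, including new length-$(2\ell+1)$ ones. This is exactly the obstacle you flag at the end, but the resolution is not ``careful accounting as in McGregor's analysis'' --- McGregor (and the paper) avoid tracking the evolution of $\alpha_\ell$ altogether. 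The paper's proof is a proof by contradiction on \emph{growth}: assume $\max_{i}\alpha_{\ell,i}\ge\alpha_0$ in every phase $\ell$; then by Lemma~\ref{lem:round:augment} each phase contributes, via a stochastic-domination/coupling argument, at least $b_k\cdot\mathcal{B}\bigl(\alpha_0|M_{\ell,1}|,\tfrac{1}{2(2k)^k}\bigr)$ augmentations, and a single Chernoff bound on the aggregated binomial shows the final matching would exceed $2|M_{1,1}|$ with probability $0.99$ --- impossible since a maximal matching is already a $1/2$-approximation. Hence some $\alpha_{\ell,i}$ drops below $\alpha_0$ and Lemma~\ref{lem:alpha:bound} finishes. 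No per-round decrease of $\alpha_\ell$ is ever established or needed.

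A second, independent gap is your concentration step: you union-bound over all $kr$ rounds assuming $\beta_{\ell,i}$ has expectation ``polynomially large in $n$.'' But $\E[\beta_{\ell,i}]=\tfrac{\alpha_{\ell,i}|M_{\ell,i}|}{2(2\ell)^\ell}$ depends on $|M|$, not on $n$; if the maximum matching itself has size $O_\epsilon(1)$ (e.g.\ a small cycle plus isolated vertices), individual $\beta_{\ell,i}$ do not concentrate at all, and your per-round argument fails exactly in the regime where $\alpha_\ell|M|$ is moderate but $|M|$ is small. The paper's single Chernoff bound on $Z\sim\mathcal{B}\bigl(\alpha_0|M_{1,1}|r,\tfrac{1}{2(2k)^k}\bigr)$ with $\E[Z]=(16k+20)|M_{1,1}|$ gives failure probability $e^{-\Omega(k|M_{1,1}|)}\le 0.01$ for every matching size $|M_{1,1}|\ge 1$, which is why $r$ is chosen as it is. To repair your write-up you would essentially have to switch to the paper's contradiction-on-growth structure rather than patch the potential-decrease and union-bound steps.
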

\begin{proof}
  We say the algorithm enters \emph{phase} $\ell$ when the number of layers in the layered graph has been incremented to $\ell$, i.e., each invocation of the outer for loop corresponds to a separate phase.
  We say the algorithm enters \emph{round} $i$ in phase $\ell$ after the subroutine \Call{AugmentingPaths}{} has completed $i-1$ iterations within phase $\ell$.
  Let $M_{\ell,i}$ be the matching $M$ prior to the call to \Call{AugmentingPaths}{} in round $i$ of phase $\ell$.
  Let $\alpha_{\ell,i}|M_{\ell,i}|$ be the number of length $2\ell+1$ augmenting paths of $M_{\ell,i}$.
  Thus by Lemma~\ref{lem:round:augment}, the subroutine \Call{AugmentingPaths}{} augments $M_{\ell,i}$ by at least $(b_{\ell}\beta_{\ell,i}-\delta)|M_{\ell,i}|$ edges in round $i$ of phase $\ell$, where $b_{\ell}=\frac{1}{2\ell+2}$, $\delta$ is a parameter that we choose later, and $\beta_{\ell,i}$ is a random variable distributed according to $\mathcal{B}\left(\alpha_{\ell,i}|M_{\ell,i}|,\frac{1}{2{(2\ell)}^\ell}\right)$.
  Let $M$ be the output matching.
  Then by Bernoulli's inequality, we have
  \begin{align*}
  \Pr\left[|M| \ge2|M_{1,1}|\right]
  & \geq \Pr\left[|M_{1,1}|\prod_{\ell\in[k],i\in[r]}\left(1+\max\left(0,b_{\ell}\beta_{\ell,i}-\delta\right)\right)\ge2|M_{1,1}|\right] \\
  & \ge \Pr\left[\sum_{\ell\in[k]}\max_{i\in[r]}b_{\ell}\beta_{\ell,i}\ge2+r\delta\right].
  \end{align*}

	We would like to analyze $\sum_{\ell\in[k]}\max_{i\in[r]}b_{\ell}\beta_{\ell,i}$, but the analysis is challenging due to dependencies between multiple rounds and phases.
	We thus define independent variables $X_1,\ldots,X_k$ and use a coupling argument.

  We define $A_{\ell}=\max_{i \in [r]}b_{\ell}\beta_{\ell,i}|M_{\ell,i}|$ to be an upper bound on the maximum number of augmented edges during phase $\ell$ of the algorithm.
  Suppose by way of contradiction that $\max_{i \in [r]}\alpha_{\ell,i}<\alpha_0:=\frac{1}{2k^2(k-1)}$ for each of the phases $1\le\ell\le k$.
  Then Lemma~\ref{lem:alpha:bound} would imply that at some point $M_{\ell,i}$ is sufficiently large.
	Thus we have $\max_{i \in [r]}\alpha_{\ell,i}\ge\alpha_0$.

We have $A_{\ell}=\max_{i \in [r]}b_{\ell}\beta_{\ell,i}|M_{\ell,i}|$, $b_{\ell}=\frac{1}{2\ell+2}$, and $\beta_{\ell,i}|M_{\ell,i}|\sim \mathcal{B}\left(\alpha_{\ell,i}|M_{\ell,i}|,\frac{1}{2{(2\ell)}^\ell}\right)$.
Now for $\ell\le k$, we have that $\frac{1}{2{(2\ell)}^\ell}\ge\frac{1}{2{(2k)}^k}$.
Thus $\max_{i \in [r]}\alpha_{\ell,i}\ge\alpha_0$ and $|M_{\ell,i}|\ge|M_{\ell,1}|$ implies that the distribution of $A_{\ell}$ statistically dominates the distribution of $b_{\ell}\cdot\mathcal{B}\left(\alpha_{0}|M_{\ell,1}|,\frac{1}{2{(2k)}^k}\right)$.
  Hence if we define $X_{\ell}$ to be independent random variables distributed as $\mathcal{B}\left(\alpha_{0}|M_{\ell,1}|,\frac{1}{2{(2k)}^k}\right)$ for each $\ell\in[r]$, then the distribution of $A_{\ell}$ statistically dominates the distribution of $b_{\ell}\cdot X_{\ell}$.
	Thus,
  \begin{align*}
  \Pr\left[\sum_{\ell\in[k]}\max_{i\in[r]}b_{\ell}\beta_{\ell,i}\ge2+r\delta\right]&\ge\Pr\left[\sum_{\ell\in[k]}\max_{i\in[r]}b_{\ell}\beta_{\ell,i}|M_{\ell,i}|\ge(2+r\delta)|M_{k,r}|\right]\\
	&=\Pr\left[\sum_{\ell\in[k]}A_{\ell}\ge(2+r\delta)|M_{k,r}|\right]\\
		&\ge\Pr\left[\sum_{\ell\in[k]}X_{\ell}\ge\frac{4+2r\delta}{b_k}\cdot|M_{1,1}|\right],
  \end{align*}
	where the final inequality results from $M_{1,1}$ being a maximal matching and $b_{\ell}\ge b_k$ for $\ell\in[k]$.

	Now the variables $X_{\ell}$ are independent but not identically distributed.
	Nevertheless, we can write $Y=\sum_{\ell\in[k]}X_{\ell}$ and note that the distribution of $Y$ statistically dominates the distribution of $Z\sim\mathcal{B}\left(\alpha_0|M_{1,1}|r,\frac{1}{2{(2k)}^k}\right)$ since $|M_{\ell,i}|\ge|M_{1,1}|$ for all $\ell\in[k]$ and $i\in[r]$.
  Thus for $b_k=\frac{1}{2k+2}$ and $\delta=\frac{b_k}{r}$,
  \begin{align*}
  \Pr\left[\sum_{\ell\in[k]}X_{\ell}\ge\frac{4+2r\delta}{b_k}\cdot|M_{1,1}|\right]&\ge\Pr\left[Z\ge\frac{4+2b_k}{b_k}|M_{1,1}|\right]\\
	&=\Pr[Z\ge|M_{1,1}|(8k+10)].
  \end{align*}
  For $r=\frac{2{(2k)}^k(16k+20)}{\alpha_0}$, we have that $\E[Z]=(16k+20)|M_{1,1}|$.
  Thus from a simple Chernoff bound,
  \begin{align*}
  \Pr[Z\ge|M_{1,1}|(8k+10)]=1-\Pr\left[Z<\frac{\E[Z]}{2}\right]>1-e^{-2(16k+20)|M_{1,1}|}\ge 0.99.
  \end{align*}
  Putting things together, we have
  \begin{align*}
  \Pr\left[|M|\ge2|M_{1,1}|\right]\ge 0.99,
  \end{align*}
  which implies that there exists a maximum matching with more than double the number of edges of a maximal matching and is a contradiction.
  Therefore, our assumption that $\max_{i \in [r]}\alpha_{\ell,i}\ge\alpha_0:=\frac{1}{2k^2(k-1)}$ for each of the phases $1\le\ell\le k$ must have been invalid.
  However, if $\alpha_{\ell,i}<\frac{1}{2k^2(k-1)}$ for some $i\in[r]$ and $\ell\in[k]$, then by Lemma~\ref{lem:alpha:bound}, we have for $k=\lceil\epsilon^{-1}+1\rceil$ and sufficiently small $\epsilon>0$ that
  \[
    |M_{\ell,i}|\ge\frac{|M^*|}{1+1/k}\ge\frac{|M^*|}{1+\epsilon}\ge(1-\epsilon)|M^*|,
  \]
  with probability at least $0.99$.
	Thus, Algorithm~\ref{alg:randomized-matching} outputs a $(1-\epsilon)$ approximation the maximum matching with probability at least $0.99$.
\end{proof}

\paragraph{Boosting the Success Probability.}
To increase the probability of success to $1-p$ for any $p\in(0,1)$, a na\"{\i}ve approach would be to run $O\left(\log\frac{1}{p}\right)$ iterations of Algorithm~\ref{alg:randomized-matching} in parallel.
However, the sensitivity analysis becomes considerably more challenging.
Instead, we note that the $\frac{1}{e}$ probability of failure is actually a significant weakening of the $e^{-2(16k+20)|M_{1,1}|}$ probability of failure.
Thus, increasing $k$ by a factor of $O\left(\log\frac{1}{p}\right)$ increases the probability of success to $1-p$.
However, for subconstant $p$, it also substantially increases the asymptotic sensitivity of Algorithm~\ref{alg:randomized-matching}.

\subsection{Sensitivity of the Randomized Greedy and Algorithm~\ref{alg:randomized-matching}}\label{subsec:randomized-sensitivity}
To analyze the sensitivity of Algorithm~\ref{alg:randomized-matching}, we first analyze the sensitivity of the randomized greedy algorithm.

\subsubsection{Sensitivity of the Randomized Greedy}\label{sec:greedy}
In this section, we study the sensitivity of the randomized greedy with respect to vertex deletions.
Recall that given a graph $G=(V,E)$, the randomized greedy works as follows.
First, it chooses a random ordering $\pi$ over edges.
Then starting with an empty matching $M$, it iteratively adds the $i$-th edge in the ordering $\pi$ to $M$ if it is not adjacent to any edge in $M$.
The main result of this section is the following.
\begin{theorem}\label{thm:randomized-greedy}
  Let $A$ be the randomized greedy for the maximum matching problem.
  Then, for any graph $G=(V,E)$ and a vertex $v \in V$, we have
  \[
    d_{\mathrm{EM}}(A(G),A(G-v)) \leq 1.
  \]
\end{theorem}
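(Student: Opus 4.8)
The plan is to build an explicit coupling of the two executions of $\textsc{RandomizedGreedy}$ and to show that, with respect to it, the two maximal matchings differ by exactly one alternating path emanating from $v$ whose expected length is at most $1$. Concretely, I would run $\textsc{RandomizedGreedy}$ on $G$ with a uniformly random permutation $\pi$ of $E(G)$, obtaining $M$, and run it on $G-v$ with the permutation $\pi'$ obtained from $\pi$ by deleting every edge incident to $v$; since $\pi'$ is uniform on $E(G-v)$, the marginals are correct. As $d_{\mathrm{EM}}(A(G),A(G-v)) \le \E_\pi\, d_{\mathrm{H}}(M,M')$, it suffices to bound $\E_\pi |M \triangle M'|$.

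The first main step is a structural claim. Processing the edges in the common order $\pi$, I would prove by induction on the step count that the two partial matchings stay equal until the first edge incident to $v$ that greedy accepts into $M$ is reached --- call it $\{v,w_1\}$ --- and that from then on $M_t \triangle M_t'$ is always the edge set of a single simple path $v=w_0,w_1,\dots,w_s$ whose edges alternate between $M$ and $M'$, with $w_s$ its current endpoint. The inductive step is a short case analysis: an edge disjoint from both $v$ and the current endpoint $w_s$ is handled identically in both runs and does not change the symmetric difference; an edge at $v$ other than $\{v,w_1\}$, or at an internal path vertex, is rejected in both runs because that vertex is already matched in both; and an edge at $w_s$ is accepted exactly in the run where $w_s$ is currently free, which either extends the path by one vertex (its far endpoint then cannot already lie on the path, else it would have been matched in both runs) or, if all such edges are exhausted, freezes the path. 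Consequently $M \triangle M'$ is one such alternating path $P=(w_0,\dots,w_L)$ and $d_{\mathrm{H}}(M,M')=L$.

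The second main step is a length bound, which I would derive from the monotonicity fact that the $\pi$-ranks of the consecutive path edges $\{w_0,w_1\},\{w_1,w_2\},\dots$ are \emph{strictly increasing}. This again follows from the structural induction: if $\{w_j,w_{j+1}\}$ had smaller rank than its predecessor on $P$, then at the moment it is processed the two runs still agree outside the already-built prefix, so both $w_j$ and $w_{j+1}$ are free in both runs and greedy would accept $\{w_j,w_{j+1}\}$ into $M$ as well, contradicting either that its predecessor lies in $M$ or that $\{w_j,w_{j+1}\}\in M'\setminus M$. Given this, I would bound $L$ by a deferred-decisions argument: reveal the ranks in increasing order; conditioned on the path having reached $w_j$ via an edge of rank $\rho$, it continues only if some edge at $w_j$ of rank exceeding $\rho$ is still ``addable'' when processed, i.e.\ its far endpoint is still free in the relevant run, and because each of the at most $\deg(w_j)-1$ candidate far endpoints independently has a suitably large chance of having already been matched by that time, one controls $\Pr[L\ge j+1\mid L\ge j]$; summing the resulting geometric-type series, while tracking the parity of the frontier, yields $\E[L]\le 1$.

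The hard part is this last step. The coupling and the alternating-path structure are essentially bookkeeping, but the length bound is delicate because the per-step continuation events are not independent: later ranks are conditioned to exceed earlier ones, and ``addability'' of an edge depends on the entire revealed history of the process. This is precisely where I would adapt the deferred-randomness analysis of \cite{censor2016optimal} for maintaining a maximal independent set --- recasting ``earlier vertices in the ordering matter more'' as ``earlier edges matter more'' --- to obtain the clean constant rather than merely an $O(1)$ bound.
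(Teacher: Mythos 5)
Your coupling and the structural analysis (a single alternating path rooted at the first accepted edge $e_\pi$ incident to $v$, with strictly increasing ranks along the path) are correct and consistent with the paper's setup: the paper also runs the greedy on $G$ and $G-v$ with the same permutation restricted to the surviving edges, and its ``potential change set'' $S'_\pi$ is essentially your path computed as if $e_\pi$ were forced into $M$. The genuine gap is in your third step, the bound $\E[L]\le 1$. The mechanism you propose --- controlling $\Pr[L\ge j+1\mid L\ge j]$ by arguing that each of the $\deg(w_j)-1$ candidate far endpoints ``independently has a suitably large chance of having already been matched,'' then summing a geometric-type series --- cannot be made to work. The relevant events are strongly dependent, and more importantly the conditional continuation probabilities are not bounded away from $1$: on the path $v,w_1,w_2,w_3$ a direct enumeration of the six orderings gives $\Pr[L\ge 3\mid L\ge 2]=1$ and $\E[L]=1$ exactly; likewise $\E[L]=1$ exactly for a star centered at $v$ and for the path $v,w_1,w_2$. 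Since the inequality is tight on these instances, any per-step argument carrying slack (union bounds, degree-dependent continuation estimates) is doomed; the proof must be exact.

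The argument of \cite{censor2016optimal} that the paper adapts is not a step-by-step continuation estimate but a global exchangeability computation, and that is the idea missing from your sketch. One defines $S'_\pi$ as the change set that \emph{would} arise if $e_\pi$ were moved to the front of $\pi$, proves that the actual change set is empty unless $e_\pi$ has the minimum rank among the edges of $S'_\pi$, and is contained in $S'_\pi$ otherwise (Lemma~\ref{lem:possible:changes}); one then shows by a permutation-exchange argument that, conditioned on $S'_\pi=F$ (an event determined only by $\pi|_F$ and $\pi|_{E\setminus F}$, not by how the two blocks interleave), the event that $e_\pi$ is the minimum of $F$ has probability exactly $1/|F|$ (Lemma~\ref{lem:prob:min}). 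This yields $\E|S_\pi|\le \sum_F \Pr[S'_\pi=F]\cdot\tfrac{1}{|F|}\cdot|F|=1$. Your increasing-rank observation is a step toward Lemma~\ref{lem:possible:changes}, but without the conditional-$1/|F|$ computation the constant $1$ is out of reach, so the proposal as written does not establish the theorem.
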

We need to consider deleting vertices to analyze the sensitivity of our randomized $(1-\epsilon)$-approximation algorithm for the maximum matching problem in Section~\ref{sec:randomized}.

Our analysis is a slight modification of a similar result for the maximal independent set problem~\cite{censor2016optimal}. 
Hence, we defer the proof to Appendix~\ref{apx:randomized-greedy}.

\subsubsection{Sensitivity of Algorithm~\ref{alg:randomized-matching}}
We first analyze the sensitivity of \Call{AugmentingPaths}{}.
Let us fix the graphs $G=(V,E)$ and $G'=(V,E')$, and matchings $M \subseteq E$ and $M' \subseteq E'$, a positive integer $\ell$, and $\delta > 0$.
Let $H$ and $H'$ be the layered graphs constructed using $G,M,\ell$ and $G',M',\ell$, respectively, and let $\tilde{L}$ and $\tilde{L}'$ be the set of active vertices in $H$ and $H'$, respectively.

\begin{lemma}\label{lem:sensitivity-active-vertex-set}
  We have $d_{\mathrm{H}}(\tilde{L}, \tilde{L}') \leq 3d_{\mathrm{H}}(E, E') + 3d_{\mathrm{H}}(M,M')$.
\end{lemma}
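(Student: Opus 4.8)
The plan is to bound how many active vertices of $H$ and $H'$ differ by tracking, for each vertex and each matching edge, whether its ``activity data'' (the sampled layer assignment) could change between the two instances. Recall the vertex set $L$ of $H$ is fixed and identical to that of $H'$: it is $L_0 = L_{\ell+1} = V$ together with $L_1 = \cdots = L_\ell = V \times V$. So $\tilde L$ and $\tilde L'$ differ only in \emph{which} copies are marked active, and a copy can be active in one but not the other only if the random choice governing it is defined differently in the two instances.

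First I would classify the active copies into two types: (i) copies of original vertices $v \in V$ in layers $L_0$ and $L_{\ell+1}$, governed by the choice of $i_v \in \{0,\ell+1\}$, and (ii) copies of ordered pairs $(u,v)$ in the internal layers $L_1,\ldots,\ell$, governed by the per-edge choice for $\{u,v\} \in M$ (which orientation gets a random internal layer and which gets $\bot$). The choice $i_v$ is the same in $H$ and $H'$ for every $v$ except those incident to an edge in $E \triangle E'$; there are at most $2 d_{\mathrm H}(E,E')$ such vertices, and each contributes at most two differing copies (one in $L_0$, one in $L_{\ell+1}$), but actually each bad $v$ changes the active copy only among its two copies, so it contributes at most $2$ to $d_{\mathrm H}(\tilde L,\tilde L')$ — I will be a little careful with the bookkeeping but this gives an $O(d_{\mathrm H}(E,E'))$ term. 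Wait — I should double-check: is $i_v$ sampled from $E$ at all? Reading the construction, $i_v \in \{0,\ell+1\}$ is sampled independently of the graph, so in a coupling we can take it identical in both instances and this term actually vanishes. The genuinely differing copies of type (i) are then driven only by \emph{which} $v$ have their type-(ii) partner edges in the symmetric difference — I'll fold that into the matching term. For type (ii), a copy $(u,v)$ in an internal layer is active only if $\{u,v\} \in M$ (resp.\ $M'$) and the orientation coin landed the right way. Under the natural coupling that uses the same coins for edges in $M \cap M'$, the only pairs $(u,v)$ whose active/inactive status can differ are those with $\{u,v\} \in M \triangle M'$; each such edge has two orientations, hence contributes at most $2$ differing copies.

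Carrying this out: couple the randomness of $H$ and $H'$ so that $i_v$ agrees for all $v$, and the orientation/internal-layer coin of each $\{u,v\} \in M \cap M'$ agrees. Then every copy that is active in exactly one of $H,H'$ is either (a) an internal copy of a pair from an edge in $M \triangle M'$ — at most $2 d_{\mathrm H}(M,M')$ of these; or (b) a copy in $L_0$ or $L_{\ell+1}$ of a vertex $v$ whose free/non-free status differs, which happens only if $v$ is incident to an edge of $M \triangle M'$ or to an edge of $E \triangle E'$ — but ``free in $M$'' is actually determined by $M$ alone, and $M$ vs $M'$ differ on $\le d_{\mathrm H}(M,M')$ edges touching $\le 2 d_{\mathrm H}(M,M')$ vertices, each with $\le 2$ copies — giving another $\le 4 d_{\mathrm H}(M,M')$; and the edge-set discrepancy affects adjacency structure but \emph{not} which copies are active (activity of $L_0/L_{\ell+1}$ copies depends only on $i_v$ and on being free in $M$). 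Summing the contributions and being generous with constants yields $d_{\mathrm H}(\tilde L,\tilde L') \le 3 d_{\mathrm H}(E,E') + 3 d_{\mathrm H}(M,M')$.

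The main obstacle I anticipate is getting the constant down to $3$ rather than something larger: the naive count (two copies per endpoint, two endpoints per symmetric-difference edge, plus internal copies) threatens a constant like $6$ on the matching term. The fix is to observe that the relevant quantity is the number of \emph{copies} (vertices of $H$) whose active label flips, and for a given bad edge $\{u,v\} \in M \triangle M'$ the flips are confined to the copies $(u,v)$ and $(v,u)$ in internal layers plus the copies of $u,v$ in $L_0,L_{\ell+1}$ — but the latter flip only if no \emph{other} matching edge keeps $u$ or $v$ matched, so a careful amortization (charging each flipped $L_0/L_{\ell+1}$ copy to a distinct symmetric-difference edge) brings the total within $3 d_{\mathrm H}(M,M')$, and the edge-discrepancy term $3 d_{\mathrm H}(E,E')$ is the slack needed for any residual endpoint effects. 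A clean way to present this is to fix the coupling first, then define the ``blame set'' $B \subseteq (E \triangle E') \cup (M \triangle M')$ and exhibit an injection from flipped copies into $B \times \{1,2,3\}$.
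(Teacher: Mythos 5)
Your proposal is correct and follows essentially the same counting argument as the paper, whose one-line proof just observes that each edge modification (in $E$ or in $M$) can flip the activity of at most three copies: the two endpoint copies in $L_0$ and $L_{\ell+1}$ and one oriented copy in a middle layer. The constant-$3$ bookkeeping you worry about resolves exactly as you suspect: each vertex has only one potentially active copy (the one in layer $L_{i_v}$) and each matching edge has only one active orientation, so a bad edge contributes at most $2+1=3$ flips rather than the $4+2=6$ of your naive count.
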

\begin{proof}
  Each edge modification in the graph or the matching may cause activate/inactivate at most three vertices in the layered graph (two of them are in the first and last layers, and the remaining one is in one of the middle layers), and hence the lemma follows.
\end{proof}

For two tag functions $t,t'\colon V(H) \to V(H) \cup \{\mathsf{untagged}, \mathsf{dead end}\}$, we define $d_{\mathrm{H}}(t,t') = |\{v \in V(H) \mid t(v) \neq t'(v)\}|$.
We will use symbols $t$ and $t'$ to denote tag functions for $H$ and $H'$, respectively.
Note that the supposed domain of $t'$ is $V(H')$, but it is equal to $V(H)$.
\begin{lemma}\label{lem:augmenting-paths-sensitivity}
  Let $\mathcal{P} = \Call{AugmentingPaths}{G,M,\ell,\delta}$ and $\mathcal{P}' = \Call{AugmentingPaths}{G',M',\ell,\delta}$.
  Then, we have
  \[
    d_{\mathrm{EM}}\left(\mathcal{P}, \mathcal{P}'\right) \leq (d_{\mathrm{H}}(E, E') + d_{\mathrm{H}}(M, M')) \cdot 3^K,
  \]
  where $K = {(1/\epsilon)}^{2^{O(1/\epsilon)}}$.
\end{lemma}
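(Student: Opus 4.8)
The plan is to couple the executions of \Call{AugmentingPaths}{$G,M,\ell,\delta$} and \Call{AugmentingPaths}{$G',M',\ell,\delta$} so that the recursive calls to \Call{FindPaths}{} proceed in lockstep, and then track how a single discrepancy propagates through the recursion tree. First I would fix a coupling of the random choices defining the two layered graphs $H$ and $H'$: use the same bits $i_v$ for the vertex copies, and the same orientation/internal-layer choices $i_{(u,v)}$ for every edge that lies in $M \cap M'$, so that $H$ and $H'$ differ only on the copies associated with edges in $E \triangle E'$ or $M \triangle M'$. By Lemma~\ref{lem:sensitivity-active-vertex-set}, this makes $d_{\mathrm{H}}(\tilde L,\tilde L') \le 3 d_{\mathrm{H}}(E,E') + 3 d_{\mathrm{H}}(M,M')$, and the edge sets of $H$ and $H'$ differ on a correspondingly bounded number of edges (each modified edge of $G$ or $M$ affects only a constant number of layered-graph edges incident to the $O(1)$ affected copies).

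Next I would set up the recursion. \Call{AugmentingPaths}{} makes one top-level call to \Call{FindPaths}{$H,L_{\ell+1},\ell,\delta,t$}, which internally loops $\lceil 1/\delta\rceil$ times, in each iteration (i) calling \Call{RandomizedGreedy}{} on an induced bipartite subgraph of two adjacent layers, and (ii) recursing into \Call{FindPaths}{} at depth $i-1$ with parameter $\delta^2$. The key single-step estimate is: if the current tag functions $t,t'$ and the induced subgraphs fed to a given \Call{RandomizedGreedy}{} call differ in $D$ places (vertices/edges), then by Theorem~\ref{thm:randomized-greedy} — applied once per differing vertex, composing along a sequence of single-vertex deletions/insertions that morphs one induced subgraph into the other — the resulting matchings $M'$ can be coupled so that their symmetric difference, hence the induced change in $S'=\Gamma_{M'}(\cdot)$ and in the tags written, is $O(D)$ in expectation. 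So each \Call{RandomizedGreedy}{} call multiplies the discrepancy by at most a constant factor $c$, and a constant amount of bookkeeping (the \textbf{for} loops over $S'$ and $S$ that rewrite tags) adds another constant factor.

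The heart of the argument is then a recursion on the discrepancy. Let $f(i,D)$ bound the expected output discrepancy $d_{\mathrm H}(t,t')$ produced by a pair of \Call{FindPaths}{} calls at depth $i$ that start from inputs differing in $D$ places. The body runs $\lceil 1/\delta\rceil$ iterations; in iteration $j$ the discrepancy before the inner call is at most $c\cdot(\text{current discrepancy})$ (from the \Call{RandomizedGreedy}{} step), the inner recursion at depth $i-1$ with parameter $\delta^2$ turns discrepancy $D_j$ into $f(i-1, c D_j)$, and then another $O(1)$ factor for the tag updates; the base case $i=1$ is one \Call{RandomizedGreedy}{} call, so $f(1,D) = O(D)$. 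Unrolling gives $f(i,D) \le D \cdot c^{O(1/\delta + 1/\delta^2 + \cdots)}$ where the exponent is dominated by the number of \Call{RandomizedGreedy}{} invocations in the whole recursion tree — which, as computed in the proof of Lemma~\ref{lem:time-complexity}, is $O((r(2k+2))^{2^k}) = (1/\epsilon)^{2^{O(1/\epsilon)}}$. Since the top-level discrepancy $D$ is $O(d_{\mathrm H}(E,E') + d_{\mathrm H}(M,M'))$ by the coupling of the layered graphs, and converting the tag function to a set of augmenting paths changes the vertex set of $\mathcal P$ by at most $O(d_{\mathrm H}(t,t'))$, we conclude $d_{\mathrm{EM}}(\mathcal P,\mathcal P') \le (d_{\mathrm H}(E,E') + d_{\mathrm H}(M,M'))\cdot 3^K$ with $K = (1/\epsilon)^{2^{O(1/\epsilon)}}$, after absorbing the accumulated constants into the base $3$ and $K$.

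The main obstacle I anticipate is making the coupling between the two recursions genuinely valid at every level: \Call{FindPaths}{} is called with $S'$ and $t^{-1}(\mathsf{untagged})$ that themselves depend on the earlier random choices, so the inputs to the paired inner calls are not identical but only close, and one must argue that the coupling can be maintained so that the discrepancy fed into the next level is exactly (a constant times) the discrepancy measured at the current level — i.e.\ that errors compose multiplicatively rather than, say, the two recursions desynchronizing entirely. This is handled by always coupling via a shortest sequence of single-element modifications and invoking Theorem~\ref{thm:randomized-greedy} (or rather its natural composition over a sequence of such modifications) at each \Call{RandomizedGreedy}{} call, together with the observation that the deterministic tag-rewriting steps are $1$-Lipschitz in $d_{\mathrm H}$; carefully accounting the number of such calls along every root-to-leaf path of the recursion and noting it matches the running-time bound is the remaining technical work.
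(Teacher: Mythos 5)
Your proposal is correct and follows essentially the same route as the paper: the paper also bounds the initial discrepancy via Lemma~\ref{lem:sensitivity-active-vertex-set}, linearizes the recursion into the sequence of $K$ \textsc{RandomizedGreedy} invocations, applies Theorem~\ref{thm:randomized-greedy} once per differing vertex to show each greedy call is $1$-Lipschitz in its input set, and solves the resulting multiplicative recursion (the paper tracks the cumulative sum $\sum_j d_{\mathrm{EM}}(S_j,S'_j)$, which satisfies $\Sigma_i \le 3\Sigma_{i-1}$ and yields the $3^K$ bound). Your recursion-tree formulation $f(i,D)$ and the paper's flat sequence of matchings are the same argument in different bookkeeping.
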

\begin{proof}
  Let $A$ and $A'$ denote $\Call{AugmentingPaths}{G,M,\delta,\ell}$ and $\Call{AugmentingPaths}{G',M',\delta,\ell}$, respectively.
  Let $M_1,M_2,\ldots,M_K$ and $M'_1,M'_2,\ldots,M'_K$ be the sequences of matchings constructed during the process of $A$ and $A'$, respectively.
  Note that $A$ and $A'$ construct the same number of matchings, and that $K \leq {{(1/\epsilon)}^{2^{O(1/\epsilon)}}}$, which follows by a similar argument to that in the proof of Lemma~\ref{lem:time-complexity}.
  For $i \in [K]$, let $S_i$ and $S'_i$ be the vertex sets on which $M_i$ and $M'_i$, respectively, are constructed, that is, the vertex set passed on to $\Call{RandomizedGreedy}{}$, and let $t_i$ and $t'_i$ be the tag functions right before constructing $M_i$ and $M'_i$, respectively.
  By Lemma~\ref{lem:sensitivity-active-vertex-set}, we have $d_{\mathrm{EM}}(S_1, S'_1) \leq c$, where $c = 3d_{\mathrm{H}}(M, M') + 3d_{\mathrm{H}}(E, E')$.
  First, because each difference between $M_{i-1}$ and $M'_{i-1}$ increases the Hamming distance between $t_i$ and $t'_i$ by one, we have
  \begin{align*}
    d_{\mathrm{EM}}(t_i, t'_i) &\leq d_{\mathrm{EM}}(M_{i-1}, M'_{i-1}) + d_{\mathrm{EM}}(t_{i-1}, t'_{i-1}) \leq \cdots \\
		&\leq \sum_{j=1}^{i-1}d_{\mathrm{EM}}(M_{j}, M'_{j}) + d_{\mathrm{EM}}(t_1, t'_1) = \sum_{j=1}^{i-1}d_{\mathrm{EM}}(M_{j}, M'_{j}).
  \end{align*}
  Then we have
  \[
    d_{\mathrm{EM}}(S_{i}, S'_{i}) \leq d_{\mathrm{EM}}(M_{i-1}, M'_{i-1}) + d_{\mathrm{EM}}(t_i, t'_i)
    \leq
    2\sum_{j=1}^{i-1} d_{\mathrm{EM}}(M_{i}, M'_{i}) \leq
    2 \sum_{j=1}^{i-1}d_{\mathrm{EM}}(S_j, S'_j),
  \]
  where the last inequality is due to Theorem~\ref{thm:randomized-greedy}.
  Solving this recursion, we get
  \[
    \sum_{j=1}^i d_{\mathrm{EM}}(S_j, S'_j) \leq c \cdot 3^{i-1},
  \]
  and hence we have $d_{\mathrm{EM}}(S_K, S'_K) = 2 \cdot 3^{K-2}  c$, and the claim follows.
\end{proof}

\noindent
We now show that the sensitivity of Algorithm~\ref{alg:randomized-matching} is $O_{\epsilon}(1)$.
\begin{theorem}\label{thm:randomized:sensitivity}
  The sensitivity of Algorithm~\ref{alg:randomized-matching} is at most $3^K$, where $K = {(1/\epsilon)}^{2^{O(1/\epsilon)}}$.
\end{theorem}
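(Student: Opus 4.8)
The plan is to track how a single edge deletion $e \in E(G)$ propagates through the sequence of operations in Algorithm~\ref{alg:randomized-matching}. Set $G' = G - e$, so $d_{\mathrm{H}}(E(G), E(G')) = 1$. The algorithm first computes a greedy maximal matching $M$ of $G$ (line~\ref{line:randomized-matching-0}) from a random ordering $\pi$, and then performs $kr = O_\epsilon(1)$ rounds, each applying one invocation of \Call{AugmentingPaths}{} and updating $M \leftarrow M \oplus M'_{\ell,i}$. I would first bound the sensitivity of the initial maximal matching: by Theorem~\ref{thm:randomized-greedy} (applied via the standard reduction from edge deletion to vertex deletion, or directly by the same coupling argument), the greedy maximal matching has $d_{\mathrm{EM}}(M(G), M(G')) \le O(1)$ — in fact one should be able to get a bound like $2$ by coupling the shared randomness $\pi$ and observing that removing $e$ changes the greedy matching by at most a short alternating chain whose expected length is $O(1)$.

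The core of the argument is then a round-by-round accumulation. I would let $M_0, M_1, \dots, M_{kr}$ be the matchings produced on $G$ after each update (with $M_0$ the initial maximal matching), and $M_0', \dots, M_{kr}'$ the analogous sequence on $G'$, all under a coupled choice of randomness. Writing $c_j = d_{\mathrm{EM}}(M_j, M_j')$, we have $c_0 = O(1)$, and I want a recursion of the form $c_j \le c_{j-1} + d_{\mathrm{EM}}(M'_{\ell,i}, {M'_{\ell,i}}') \le c_{j-1} + (d_{\mathrm{H}}(E,E') + c_{j-1}) \cdot 3^K$ by invoking Lemma~\ref{lem:augmenting-paths-sensitivity} with the current matchings as input. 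Since $d_{\mathrm{H}}(E,E') = 1$ is fixed and $kr = O_\epsilon(1)$, unrolling this recursion gives $c_{kr} \le 3^{O_\epsilon(1)}$, and after folding the constants into the exponent this is $3^K$ with $K = (1/\epsilon)^{2^{O(1/\epsilon)}}$ — the same form as in Lemma~\ref{lem:time-complexity} and Lemma~\ref{lem:augmenting-paths-sensitivity}. The final output $M$ is exactly $M_{kr}$, so $d_{\mathrm{EM}}(A(G), A(G-e)) = c_{kr} \le 3^K$, which is the claim.

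The main obstacle is handling the coupling carefully so that the bounds actually compose: each call to \Call{AugmentingPaths}{} uses fresh internal randomness (the layered-graph sampling and the random orderings inside \Call{RandomizedGreedy}{}), and Lemma~\ref{lem:augmenting-paths-sensitivity} bounds the earth mover's distance of its \emph{output} as a function of the Hamming distance of its \emph{inputs}, so I need to argue that these per-round earth-mover bounds can be chained — i.e., that there is a joint coupling of the entire execution on $G$ and on $G'$ realizing the sum of the per-round distances. This is essentially the ``sequential composition'' property of sensitivity: if step $j$ maps inputs at distance $d$ to outputs at EM-distance $f(d)$, then composing preserves an additive/multiplicative bound on the cumulative distance. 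I would either cite the composition lemma from~\cite{VY19:sensitivity} or spell out the coupling explicitly: couple $\pi$ for the initial greedy step, then inductively, given a coupling achieving $d_{\mathrm{EM}}(M_{j-1}, M_{j-1}') = c_{j-1}$, extend it using the optimal coupling inside round $j$ from Lemma~\ref{lem:augmenting-paths-sensitivity} conditioned on each pair $(M_{j-1}, M_{j-1}')$ in the support. A secondary point to check is the slightly subtle accounting in the update $M \leftarrow M \oplus M'_{\ell,i}$: since the augmenting paths in $\mathcal P'_{\ell,i}$ are vertex-disjoint and alternate with $M$, applying them changes $M$ by exactly the edges of those paths, so $d_{\mathrm{H}}$ of the symmetric difference is controlled by $d_{\mathrm{H}}(\mathcal P, \mathcal P')$ in the sense defined in the preliminaries, and no extra blow-up is incurred beyond what Lemma~\ref{lem:augmenting-paths-sensitivity} already gives.
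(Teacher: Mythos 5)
Your proposal is correct and follows essentially the same route as the paper: bound $d_{\mathrm{EM}}(M_0,M_0')$ by Theorem~\ref{thm:randomized-greedy}, then apply Lemma~\ref{lem:augmenting-paths-sensitivity} round by round to get the recursion $c_j \le c_{j-1}(3^K+1)+3^K$ over the $kr=O_\epsilon(1)$ rounds, and absorb the resulting $(1+3^K)^{kr}$ factor into the exponent of $K$. Your explicit discussion of chaining the per-round couplings and of the $M\oplus M'_{\ell,i}$ accounting fills in steps the paper leaves implicit, but it is the same argument.
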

\begin{proof}
  Let $G = (V,E)$ be a graph and $G' = (V,E') = G - e$ for some $e \in E$.
  Let $M_0,M_1,\ldots,M_{kr}$ be the sequence of matchings we construct in Algorithm~\ref{alg:randomized-matching} on $G$, where $M_0$ is the matching constructed at Line~\ref{line:randomized-matching-0}, and $M_j$ is the matching constructed at Line~\ref{line:randomized-matching-0} in the round $i$ of the phase $\ell$ such that $j = (\ell-1) r + i$.
  We define $M'_0,M'_1,\ldots,M'_{kr}$ similarly using $G'$.
  Then, we have by Theorem~\ref{thm:randomized-greedy}
  \[
    d_{\mathrm{EM}}(M_0,M'_0) \leq 1,
  \] and we have by Lemma~\ref{lem:augmenting-paths-sensitivity}
  \begin{align*}
    d_{\mathrm{EM}}(M_i,M'_i)
    & \leq d_{\mathrm{EM}}(M_{i-1},M'_{i-1}) + (d_{\mathrm{EM}}(E,E') + d_{\mathrm{EM}}(M_{i-1},M'_{i-1})) \cdot 3^K \\
    & = d_{\mathrm{EM}}(M_{i-1},M'_{i-1}) (3^K + 1) + 3^K
  \end{align*}
  for $i \in [kr]$, where $K = {(1/\epsilon)}^{2^{O(1/\epsilon)}}$.
  Solving the recursion, we get
  \[
    d_{\mathrm{EM}}(M_{kr},M'_{kr}) \leq 2 {\left(1 + 3^K\right)}^{kr} - 1,
  \]
  and we have the desired bound.
\end{proof}

\noindent
The proof of Theorem~\ref{thm:matching} then follows from Theorem~\ref{thm:randomized-approximation} and Theorem~\ref{thm:randomized:sensitivity}.

\paragraph{Sensitivity to Vertex Deletions.}
We remark that Algorithm~\ref{alg:randomized-matching} also has sensitivity $O(3^K)$, for $K = {(1/\epsilon)}^{2^{O(1/\epsilon)}}$, to vertex deletions.
Recall that Lemma~\ref{lem:sensitivity-active-vertex-set} crucially relies on each edge deletion changing at most three vertices in the layered graph.
That is, due to the construction of the layered graph, each edge deletion changes at most two altered vertices in the first and last layers, and at most one altered vertex in one of the middle layers.
This is because the first layer and the last layer encode the vertex set $V$, while each matched edge is assigned to one of the middle layers.

Observe that when we delete a vertex $v$, at most one vertex in the vertex set $V$ is altered, so that the first and last layer of the layered graph each have one change.
Moreover, at most one matched edge is incident to $v$, so at most one vertex in one of the middle layers is altered as well.
Thus, at most three vertices in the layered graph are changed as a result of the vertex deletion, so the sensitivity of Algorithm~\ref{alg:randomized-matching} to vertex deletions is again $O(3^K)$, for $K = {(1/\epsilon)}^{2^{O(1/\epsilon)}}$.

\subsection{Applications to Online Matching with Replacements}\label{subsec:online-matching}
In this section, we show that Algorithm~\ref{alg:randomized-matching} can be repurposed to obtain an algorithm for the online matching problem with replacements.
In the \emph{edge-arrival} model for the online matching problem with replacements, the edges $E$ of the graph $G=(V,E)$ arrive sequentially as a data stream, and the goal is to maintain or approximate a maximum matching across all times, while minimizing the number of total edges that are altered between successive outputs of the algorithm.
Formally, let $E_i=(e_1,\ldots,e_i)$ be the subset of edges of the graph that have arrived by time $i$ and let $M^*_i$ be the maximum matching on $G_i=(V,E_i)$.
Given a constant $c\le 1$, the goal of the online matching problem with replacements is to output a sequence of matchings $M_1,\ldots,M_{|E|}$ that minimizes $\sum_{i=1}^{|E|-1}d_{\mathrm{H}}(M_{i+1}, M_i)$ subject to the constraint $|M_i|\ge c|M^*_i|$, i.e., each matching $M_i$ is a $c$-approximation to the maximum matching at time $i$.
The quantity $d_{\mathrm{H}}(M_{i+1}, M_i)$ is the number of replacements at time $i$ and the quantity $\sum_{i=1}^{|E|-1}d_{\mathrm{H}}(M_{i+1}, M_i)$ is the total number of replacements.
The \emph{vertex-arrival} model is defined analogously, with the exception that the stream updates are a vertex $v_i$, along with all the edges adjacent to $v_i$.

Bernstein~et~al.\cite{BernsteinHR19} gives an algorithm for online \emph{bipartite} matching with replacements in the vertex-arrival model that always outputs a maximum matching but has $O(n\log^2 n)$ total replacements.
We show that our algorithm can be modified to achieve total replacements $O_{\epsilon}(n)$ and $(1-\epsilon)$-approximate maximum matchings for general graphs, i.e., not just bipartite graphs.

The challenge to immediately applying Algorithm~\ref{alg:randomized-matching} to the online matching with replacements setting is that the guarantee of Theorem~\ref{thm:matching} is only in terms of earth-mover's distance.
Thus, we cannot apply a black-box reduction to the online model because each time we call Algorithm~\ref{alg:randomized-matching}, we can obtain a completely different matching, depending on the randomness of the algorithm.
For example, suppose Algorithm~\ref{alg:randomized-matching} guarantees that each time $i$ of the stream, there exist two maximal matchings $M_{i,1}$ and $M_{i,2}$ that are each $(1-\epsilon)$-approximations to the maximum matching, but $d_{\mathrm{H}}(M_{i,1}, M_{i,2})=\Omega(n)$.
Moreover, suppose that at each time, Algorithm~\ref{alg:randomized-matching} outputs $M_{i,1}$ with probability $\frac{1}{2}$ and $M_{i,2}$ with probability $\frac{1}{2}$.
If $d_{\mathrm{H}}(M_{i,1}, M_{i+1,1})=d_{\mathrm{H}}(M_{i,2}, M_{i+1,2})=O(1)$ at all times, then Algorithm~\ref{alg:randomized-matching} has $O(1)$ sensitivity at all times in the stream, but if $\tilde{M_i}$ is the matching output by the algorithm at each time $i$, we could potentially have $d_{\mathrm{H}}(\tilde{M_{i}}, \tilde{M_{i+1}})=\Omega(n)$ replacements, so that the total number of replacements is $\Omega(n^2)$.
Instead, we open up the black-box of Algorithm~\ref{alg:randomized-matching} and show that we can achieve $O_{\epsilon}(n)$ total replacements by fixing components of the internal randomness of the algorithm across the duration of the stream.

\begin{proof}[Proof of Theorem~\ref{thm:mm:recourse}]
Recall that Algorithm~\ref{alg:randomized-matching} first fixes a random permutation $\pi$ of the edges in the subroutine $\textsc{RandomizedGreedy}$.
Equivalently, we can fix a random permutation $\pi$ of $\binom{n}{2}$, which induces a consistent permutation of the edges across the entire stream.
Let $A_{\pi}$ be the deterministic algorithm obtained from our randomized algorithm after sampling $\pi$ uniformly at random at the beginning of the stream and fixing the permutation $\pi$ of edges afterwards.
Then whenever a new vertex arrives, we simply run $A_{\pi}$ on the current graph and return the solution.
Since the expected number of replacements at each time is $O_{\epsilon}(1)$, then the expected total number of replacements is $O_{\epsilon}(n)$.
Thus by Markov's inequality, the total number of replacements will be $O_{\epsilon}(n)$ with probability 0.99.
\end{proof}

\section{Deterministic Maximal Matching for Bounded-Degree Graphs}\label{sec:deterministic}
In this section, we give a deterministic algorithm for computing a maximal matching that has low sensitivity on bounded-degree graphs.
The main idea is to use deterministic local computation algorithms with a small number of probes to find a maximal matching.
Our algorithm uses two main ingredients.
The first ingredient is a deterministic LCA of~\cite{ColeV86} for $6^{\Delta}$-coloring a graph with maximum degree $\Delta$, using $O(\Delta\log^*n)$ probes.
The second ingredient is a framework of~\cite{ParnasR07} that simulates local distributed algorithms using a deterministic LCA\@.
In particular, we use the framework to simulate an algorithm that takes a coloring of a graph and outputs a maximal matching.
We give the details for the local distributed algorithm in Algorithm~\ref{alg:det:mm:color}.

To bound the sensitivity of the algorithm, it suffices to analyze the number of queries for which a deleted edge would be probed.
Crucially, both the deterministic LCA of~\cite{ColeV86} and the framework of~\cite{ParnasR07} only probe edges (incident to vertices) within a small radius of the query.
Thus, only a small number of queries will probe the edge that is altered, so that the output of the algorithm only has a small number of changes.

We first require a deterministic LCA of~\cite{ColeV86} for $6^{\Delta}$-coloring a graph with degree $\Delta$, using a small number of probes within distance $O(\Delta\log^*n)$ of the query.
We give the full details in Algorithm~\ref{alg:coloring}, which has the following guarantee.
\begin{lemma}[\cite{ColeV86}]\label{lem:lca:coloring}
There exists a deterministic LCA \textsc{ColoringLCA} for $6^{\Delta}$-coloring a graph with degree $\Delta$, using $O(\Delta\log^*n)$ probes.
\end{lemma}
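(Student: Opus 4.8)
The plan is to reconstruct the classical Cole--Vishkin ``deterministic coin tossing'' scheme \cite{ColeV86} inside the local-computation model. The backbone is the one-shot \emph{color reduction} step: given a proper $k$-coloring of a \emph{rooted forest}, replace the color $c(v)$ of each non-root vertex $v$ by the pair consisting of (i) the index $b$ of the first bit in which the binary encodings of $c(v)$ and $c(p(v))$ differ, where $p(v)$ is the parent of $v$, and (ii) the value of that bit in $c(v)$; each root keeps a fixed reserved color (equivalently, attach to each root a virtual parent carrying a color outside the palette). One checks in two lines that the result is again a proper coloring of the forest and that the number of colors drops from $k$ to at most $2\lceil \log_2 k\rceil$. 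Starting from the trivial coloring that assigns to each vertex its identifier and iterating this step $O(\log^* n)$ times drives the palette down to and stabilizes it at $6$ colors (a fixed point of $k\mapsto 2\lceil\log_2 k\rceil$).

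To pass from forests to an arbitrary graph of maximum degree $\Delta$, I would orient every edge from the lower-identifier endpoint to the higher one and, for $j=1,\dots,\Delta$, let $F_j$ be the subgraph that sends each vertex $v$ to its $j$-th smallest ``up'' neighbor; since identifiers strictly increase along these pointers, each $F_j$ is acyclic with out-degree at most $1$, i.e., a rooted forest, and together the $F_j$ partition $E(G)$ (so $\Delta$ forests suffice, matching the arboricity bound for degree-$\Delta$ graphs). Running the reduction scheme inside each $F_j$ yields a proper $6$-coloring $c_j$, and the product coloring $c(v)=(c_1(v),\dots,c_\Delta(v))$ uses $6^\Delta$ colors; it is proper for $G$ because every edge of $G$ lies in some $F_j$ and is therefore already bichromatic under $c_j$. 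This gives the claimed number of colors.

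For the probe complexity, the key point is that evaluating $c_j(v)$ does \emph{not} require exploring a radius-$\Theta(\log^* n)$ ball around $v$: a reduction step on a forest makes the new color of a vertex depend only on its own current color and that of its single parent, so unrolling $T=O(\log^* n)$ iterations shows that $c_j(v)$ is a function only of the identifiers along the ancestor chain $v,p_j(v),p_j^2(v),\dots$, of length at most $T$. The LCA therefore walks up these $\Delta$ chains, reading at each visited vertex its incident edges to locate the next pointer; a careful implementation keeps the total at $O(\Delta\log^* n)$ probes. Moreover, since the answer to each query is a fixed deterministic function of this local data, all answers are mutually consistent with one global $6^\Delta$-coloring, so the procedure is a valid LCA.

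I expect the main obstacle to be exactly this ``no-blowup'' accounting: one must verify that the dependency set of $c_j(v)$ really is a single length-$O(\log^* n)$ ancestor chain rather than the full $O(\log^*n)$-neighborhood that a generic message-passing algorithm would touch, that the pointers $p_j$ (and membership of an edge in a given $F_j$) are decidable from the edges incident to a single vertex, and that the virtual-parent treatment of roots reintroduces neither large colors nor monochromatic edges. The remaining ingredients --- the arithmetic that $2\lceil\log_2 k\rceil<k$ once $k\ge 7$, so the iteration terminates at $6$; the fact that $\mathrm{poly}(n)$-bounded identifiers make $O(\log^* n)$ iterations enough; and properness of the product coloring --- are routine.
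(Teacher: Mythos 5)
Your proposal is correct and follows essentially the same route as the paper: the paper's Algorithm~\ref{alg:coloring} is exactly the forest decomposition by vertex identifiers (\textsc{FormForests}) followed by $\Theta(\log^* n)$ rounds of the Cole--Vishkin bit-index reduction on each forest (\textsc{ColorForests}), with the $6^\Delta$ product coloring and the length-$O(\log^* n)$ ancestor-chain probe accounting that you describe. No discrepancy to report.
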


\begin{algorithm}[t!]
  \caption{LCA Algorithm \textsc{ColoringLCA} for $6^\Delta$-coloring with $O(\Delta\log^* n)$ probes}\label{alg:coloring}
  \Procedure{\emph{\Call{FormForests}{$G,\Delta$}}}{
	//Decompose graph into $\Delta$ oriented forests\;
	\For{$i=1$ to $\Delta$}{
		Let $N_u(i)$ denote the $i$-th neighbor of $u$ according to the IDs of the vertices\;
		Let $E_i=\{(u,v):\mathsf{id}(u)<\mathsf{id}(v), v=N_u(i)\}$\;
		Let $G_i = (V, E_i)$ be the oriented tree, where the root has no out-going edges\;
	}
	}
  \Procedure{\emph{\Call{ColorForests}{$G_i,\Delta$}}}{
	\For{$\Theta(\log^* n)$ rounds}{
		\For{each nodes $u$}{
			\If{$u$ is a root node}{
				Set $\phi_u$ to $0$\;}
			\Else{
				Let $v$ be the parent of $u$ in $G_i$\;
				Let $a_u$ be the index of the least significant bit with $\phi_u\neq\phi_v$\;
				Let $b_u$ be the value of the $a_u$-th bit of $u$\;
				$\phi_u\gets a_u\circ b_u$.
				}
			}
		}
	}
\end{algorithm}

We now describe a local distributed algorithm that takes a coloring of a graph and outputs a maximal matching.
The algorithm iterates over all colors and adds any edge adjacent to a vertex of a particular color to the greedy matching if there is no other adjacent edge already present in the matching.
We give the algorithm in full in Algorithm~\ref{alg:det:mm:color}.
\begin{algorithm}[t!]
  \caption{Maximal Matching Algorithm \textsc{Coloring-to-MM}}\label{alg:det:mm:color}
  \Procedure{\emph{\Call{Coloring-to-MM}{$G$ colored with $c$ colors}}}{
	$M\gets\emptyset$\;
	\For{color $i=1$ to $i=c$}{
		\If{edge $(u,v)$ has either $\phi_u=i$ or $\phi_v=i$}{
			Add $(u,v)$ to $M$ if no adjacent edge is in $M$.
			}
		}
	}
\end{algorithm}

Putting things together, we obtain a deterministic maximal matching algorithm in Algorithm~\ref{alg:det:mm:}.
\begin{algorithm}[t!]
  \caption{Maximal Matching Algorithm}\label{alg:det:mm:}
  \Procedure{\emph{\Call{Coloring-to-MM}{Graph $G$}}}{
	Coloring $G_C=\textsc{ColoringLCA}(G,\Delta)$\;
	Output $\textsc{Coloring-to-MM}(G_C)$\;
	}
\end{algorithm}

We next require the following framework of~\cite{ParnasR07} that simulates local distributed algorithms using a deterministic LCA.
In particular, we will implement Algorithm~\ref{alg:det:mm:color}.
\begin{lemma}[\cite{EvenMR15,ParnasR07}]\label{lem:lca:colortomm}
Given access to an oracle that takes vertices of an underlying as queries and outputs a color for the queried vertex, there exists a deterministic LCA that can implement \textsc{Coloring-to-MM} using $\Delta^{O(c)}$ probes.
\end{lemma}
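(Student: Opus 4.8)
The plan is to derive this statement from the Parnas--Ron simulation principle~\cite{ParnasR07}: any distributed algorithm in the $\mathsf{LOCAL}$ model that terminates after $t$ rounds can be converted into a deterministic LCA whose answer on a query (here, whether a given edge lies in the matching) is a function only of the $t$-hop neighborhood of the query, and which therefore makes at most $\sum_{j=0}^{t}\Delta^{j}=\Delta^{O(t)}$ probes. The ``oracle'' in the statement plays the role of the adjacency-list oracle in that reduction, with the only difference that each probed vertex additionally reveals its color; this does not affect the count. Hence it suffices to exhibit a $\mathsf{LOCAL}$ algorithm, running in $t=O(c)$ rounds, that computes the matching produced by Algorithm~\ref{alg:det:mm:color}.

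Such an algorithm processes the colors $1,\ldots,c$ in $c$ phases, where phase $i$ realizes the iteration of Algorithm~\ref{alg:det:mm:color} for color $i$. The key point is that, since the input coloring is proper, the color-$i$ vertices form an independent set, so in phase $i$ every still-unmatched color-$i$ vertex that has an incident edge not yet covered by $M$ may attempt to match simultaneously; the only conflicts are between two color-$i$ vertices competing for a common (higher-color) neighbor, and these are resolved by a fixed local rule such as lower vertex ID. Letting each such vertex scan its incident edges in turn takes $O(\Delta)$ sub-rounds, which is $O(1)$ rounds in the bounded-degree setting, so the whole procedure runs in $O(c)$ rounds. After phase $i$, every vertex of color at most $i$ is either matched or has all of its incident edges covered by $M$, so after phase $c$ the matching is maximal, and with the tie-breaking rule fixed above it coincides edge-for-edge with the output of Algorithm~\ref{alg:det:mm:color}. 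Plugging $t=O(c)$ into the Parnas--Ron bound gives an LCA making $\Delta^{O(c)}$ probes, as claimed; composed with Lemma~\ref{lem:lca:coloring} this yields the LCA used in Algorithm~\ref{alg:det:mm:}.

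The step that needs care -- and the reason we quote it from~\cite{EvenMR15,ParnasR07} rather than reprove it -- is the phase-$i$ conflict resolution: making the ``greedy inside one color class'' step \emph{local}, i.e.\ computable within a bounded number of $\mathsf{LOCAL}$ sub-rounds, while still matching the exact sequential order used by Algorithm~\ref{alg:det:mm:color}. A naive sequential greedy over all edges can have dependency chains of length $\Theta(n)$ (e.g.\ on a path), so the coloring is essential: it confines each greedy decision to an independent set, breaking the long chains into pieces of length $O(\Delta)$. This is precisely where the bounded-degree assumption enters, turning the raw $O(c\cdot\Delta)$ sub-round count into the clean exponent $\Delta^{O(c)}$.
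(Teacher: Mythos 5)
Your proposal is correct and follows essentially the same route as the paper: the paper likewise invokes the Parnas--Ron reduction~\cite{ParnasR07}, running Algorithm~\ref{alg:det:mm:color} locally on the radius-$O(c)$ ball around the queried edge, so that the probe count is simply the size of that ball, $\Delta^{O(c)}$. The one point to note is that your claim of edge-for-edge agreement with Algorithm~\ref{alg:det:mm:color} presupposes fixing its (unspecified) within-color edge-processing order to match your ID-based tie-breaking rule; since any maximal matching produced by an $O(c)$-round local procedure suffices for Theorem~\ref{thm:det:matching}, this is a definitional choice rather than a gap.
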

Given a \textsc{ColoringLCA} for $6^\Delta$-coloring, the LCA for maximal matching in Lemma~\ref{lem:lca:colortomm} uses the following idea.
For a query edge $e$, we first call \textsc{ColoringLCA} for every vertex with distance roughly $6^\Delta$ from $e$.
Parnas and Ron~\cite{ParnasR07} then shows it suffices to run Algorithm~\ref{alg:det:mm:color} locally on the graph of radius roughly $6^\Delta$ from $e$.

We now show that our deterministic LCA based algorithm outputs a maximal matching with low worst case sensitivity for low-degree graphs.

\begin{proof}[Proof of Theorem~\ref{thm:det:matching}]
	Consider running the deterministic LCA from Lemma~\ref{lem:lca:colortomm} that simulates \textsc{Coloring-to-MM} on a graph $G$ and a graph $G':=G-e$, for some $e \in E$.
	Let $S$ be the set of vertices that are assigned different colors in $G$ and $G'$ by \textsc{ColoringLCA}.
	First observe that $e$ is within distance $\Theta(\log^*n)$ from at most $\Delta^{\Theta(\log^*n)}$ other vertices.
	Hence, from Lemma~\ref{lem:lca:coloring} we have $|S|\le\Delta^{\Theta(\log^*n)}$.
	Moreover, each vertex $u\in S$ is within distance $O\left(6^\Delta\right)$ from at most $\Delta^{O\left(6^\Delta\right)}$ other vertices.
	Thus the total number of edges that differ between the matchings $M$ and $M'$ output by \textsc{Coloring-to-MM} for $G$ and $G'$ respectively is at most
	\[
		\Delta^{\Theta(\log^*n)}\cdot\Delta^{O\left(6^\Delta\right)}=\Delta^{O\left(6^\Delta+\log^* n\right)}.
		\qedhere
	\]
\end{proof}

\noindent
It is clear that an almost identical analysis goes through for vertex sensitivity.


\section{Lower Bounds for Maximum Matching}\label{sec:lb}
In this section, we show lower bounds for deterministic and randomized algorithms for the maximum matching problem. 
\subsection{Deterministic Lower Bound}\label{sec:det:lb}
In this section, we prove Theorem~\ref{thm:lb-deterministic}, which claims that \emph{any} deterministic algorithm for the maximum matching problem has edge sensitivity $\Omega(\log^*n)$.
Our proof relies on Ramsey's theorem. First we introduce some definitions.
Let $Y$ be a finite set.
We say that $X$ is a \emph{$k$-subset} of $Y$ if $X \subseteq Y$ and $|X|=k$.
Let $Y^{(k)} =\{X \subseteq Y \mid |X|=k\}$ be the collection of all $k$-subsets of $Y$.
A $c$-labeling of $Y^{(k)}$ is an arbitrary function $f : Y^{(k)} \to [c]$.
Then we say that $X \subseteq Y$ is \emph{monochromatic} in $f$ if $f(A)= f(B)$ for all $A,B \in X^{(k)}$.
Let $R_c(n; k)$ be the smallest integer $N$ such that the following holds:
for any set $Y$ with at least $N$ elements, and for any $c$-labeling $f$ of $Y^{(k)}$, there is an $n$-subset of $Y$ that is monochromatic in $f$.
If no such $N$ exists, $R_c(n; k) = \infty$.
Define $\mathrm{twr}(k)$ as the tower of twos of height $k$, that is, $\mathrm{twr}(1) = 2$ and $\mathrm{twr}(k+1) = 2^{\mathrm{twr}(k)}$.
We will use the following formulation of Ramsey's theorem.
\begin{theorem}[Special case of Theorem~1 in~\cite{Erdos1952}]\label{thm:ramsey}
  For any positive integer $t$, $R_{2^t}(t+1; t) \leq \mathrm{twr}(O(t))$.
\end{theorem}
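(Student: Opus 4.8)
The plan is to prove the classical Erd\H{o}s--Rado upper bound for hypergraph Ramsey numbers in a form that keeps the parameters $n$ and $c$ confined to the topmost level of an iterated exponential, and then to substitute $c = 2^t$, $k = t$, $n = t+1$. I would argue by induction on the uniformity $k$, establishing a recursion of the shape $R_c(n;k) \le c^{O(R_c(n;k-1)^{\,k})}$. Writing $T_1(x) = x$ and $T_{j+1}(x) = 2^{T_j(x)}$ for the tower of height $j$ topped by $x$ (so that $\mathrm{twr}(j)$ is essentially $T_j(2)$), unfolding this recursion down to the base case adds exactly one exponentiation per decrement of $k$ and yields $R_c(n;k) \le T_k(\mathrm{poly}(cn))$.

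For the base case $k = 1$, any $c$-labeling of the singletons of a set of size $c(n-1)+1$ has a monochromatic $n$-subset by pigeonhole, so $R_c(n;1) \le c(n-1)+1 = \mathrm{poly}(cn)$. For the inductive step $k \ge 2$, given a $c$-labeling $f$ of $Y^{(k)}$ with $|Y| = N$, I would greedily build a sequence $a_1, a_2, \ldots, a_m$ in $Y$ together with nested reservoirs $Y = R_0 \supseteq R_1 \supseteq \cdots$ as follows: at stage $i$ pick $a_{i+1}$ to be the element of $R_i$ of smallest ID, and then pass to a subset $R_{i+1} \subseteq R_i \setminus \{a_{i+1}\}$ chosen so that for every $(k-1)$-subset $T$ of $\{a_1,\ldots,a_{i+1}\}$ the value $f(T \cup \{x\})$ is independent of $x$ over all $x \in R_{i+1}$ (all of which lie, by construction, after every element of $T$). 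Each of the at most $\binom{i+1}{k-1}$ subsets $T$ costs one pigeonhole step that shrinks the reservoir by a factor $c$, so the construction survives $m$ stages whenever $N \ge c^{\sum_{i \le m}\binom{i}{k-1}}$, and $\sum_{i \le m}\binom{i}{k-1} = \binom{m+1}{k} = O(m^k)$. Once the sequence is built, the color of any $k$-subset $\{a_{i_1},\ldots,a_{i_k}\}$ with $i_1 < \cdots < i_k$ depends only on $\{i_1,\ldots,i_{k-1}\}$, so it induces a $c$-labeling of $[m]^{(k-1)}$; choosing $m \approx R_c(n;k-1)$ then produces an $n$-subset monochromatic for the induced labeling, whose preimage is monochromatic for $f$. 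This gives $R_c(n;k) \le c^{O(R_c(n;k-1)^{\,k})}$.

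Finally I would substitute $c = 2^t$, $k = t$, $n = t+1$, so that $\mathrm{poly}(cn) = \mathrm{poly}(2^t\cdot t) = 2^{O(t)}$, and conclude $R_{2^t}(t+1;t) \le T_t\!\left(2^{O(t)}\right)$. Since $2^{O(t)} \le T_j(O(1))$ already for $j = O(\log^* t)$, this is at most $T_{t + O(\log^* t)}(O(1)) = \mathrm{twr}(O(t))$, which is exactly the claimed bound. The delicate part --- the main obstacle --- is the bookkeeping that keeps the tower height at $O(k)$: one must check that at each level the power $R_c(n;k-1)^{\,k}$ and the factor $\mathrm{poly}(\log c)$ only perturb the second-from-top exponent of the tower by an additive $O(\log k + \log\log c)$, which is absorbed into the $\mathrm{poly}(cn)$ sitting at the very top, rather than creating a fresh level; and that the reservoir size is governed by a single application of the $k=1$ pigeonhole estimate with exponent $O(m^k)$, not by an $m$-fold iteration of the $(k-1)$-uniform bound. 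Everything else is a routine induction and an elementary parameter substitution.
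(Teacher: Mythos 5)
Your proposal is correct, but it takes a different route from the paper only in the sense that the paper does not prove this statement at all: Theorem~\ref{thm:ramsey} is imported as a black box, being a special case of Theorem~1 of Erd\H{o}s and Rado~\cite{Erdos1952}, whereas you re-derive that bound from scratch. What you give is essentially the classical Erd\H{o}s--Rado ``stepping-up'' argument: the pigeonhole base case $R_c(n;1)\le c(n-1)+1$, the greedy construction of an end-homogeneous sequence $a_1,\dots,a_m$ with nested reservoirs so that the color of a $k$-set depends only on its first $k-1$ elements, the induced $(k-1)$-uniform labeling on the indices, and the resulting recursion $R_c(n;k)\le c^{O(R_c(n;k-1)^k)}$, followed by the substitution $c=2^t$, $k=t$, $n=t+1$. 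The details check out: your per-stage count $\binom{i+1}{k-1}$ re-refines subsets that are already homogeneous (the sharper count is $\binom{i}{k-2}$ new subsets containing $a_{i+1}$), but this only inflates the exponent from $\binom{m}{k-1}$ to $\binom{m+1}{k}=O(m^k)$, which is harmless; likewise the off-by-one in ``$m\approx R_c(n;k-1)$'' (the induced labeling is only needed on $(k-1)$-subsets of $[m-1]$) is absorbed. The one place that genuinely needs the care you flag is the tower bookkeeping: the factors $k$ and $\log c$ appearing at each level cannot literally be absorbed into a fixed $\mathrm{poly}(cn)$ at the top without letting that polynomial (or an additive term one level below the top) grow mildly with $k$; since $k\le n$ and $\log\log c\le cn$, this costs only a $\mathrm{poly}(cn)$ top entry and at most $O(1)$ extra tower levels per step, so the height stays $O(k)+\log^*(cn)$, which after substitution is $\mathrm{twr}(O(t))$ as claimed --- and the target bound has enough slack that even a cruder accounting (height $O(k)$ rather than $k+O(1)$) suffices. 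In short, your proof is a valid self-contained replacement for the citation; the paper's choice buys brevity, yours buys self-containedness at the cost of the parameter bookkeeping you correctly identify as the only delicate point.
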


We now show that any deterministic constant-factor approximation algorithm for the maximum matching problem has edge sensitivity $\Omega(\log^* n)$.
\begin{proof}[Proof of Theorem~\ref{thm:lb-deterministic}]
  Let $A$ be an arbitrary deterministic algorithm that outputs a maximal matching, let $t$ be a positive integer, which will be determined later.
  Let $\mathcal{G}$ be a class of graphs on the vertex set $[n]$ consisting of a cycle $v_1,\ldots,v_t$ with $v_1 < v_2 < \cdots < v_t$ and $n-t$ isolated vertices.
  Given a matching $M$ on the cycle $v_1,\ldots,v_t$, we encode it to an integer $0 \leq k \leq 2^t-1$ so that the $i$-th bit of $k$ is $1$ if and only if the edge $\{v_i, v_{i+1}\}$ belongs to $M$, where we regard $v_{t+1} = v_1$.
  Then, we can regard the algorithm $A$ as a function $f: \binom{[n]}{t} \to \{0,1,\ldots,2^t-1\}$, that is, given a set $\{v_1,\ldots,v_t\} \subseteq [n]$ with $v_1<v_2<\cdots<v_t$, we compute a matching on the cycle $v_1,\ldots,v_t$, and encode it to an integer.
  Then if $n \geq R_{2^t}(t+1; t)$, which holds when $t = O(\log^*n)$ by Theorem~\ref{thm:ramsey}, there exists a set $S = \{s_0,s_1,\ldots,s_{t}\} \subseteq [n]$ with $s_0 < s_1 < \cdots <s_t$ such that $f(T)$ is constant whenever $T \subseteq S$ with $|T|=t$.
  Let $G,G' \in \mathcal{G}$ be the graph with cycles $s_0,\ldots,s_{t-1}$ and $s_1,\ldots,s_{t}$, respectively, and let $M$ and $M'$ be the matching output by $A$ on $G$ and $G'$, respectively.
  As $M$ and $M'$ have the same encoding, $\{s_i,s_{i+1 \bmod t} \} \in M$ if and only if $\{s_{i+1},s_{i+2}\} \in M'$, where we regard $s_{t+2}=s_1$.
  Note that, however, if $\{s_i,s_{i+1 \bmod t} \} \in M$ then $\{s_{i+1 \bmod t},s_{i+2 \bmod t} \} \not \in M'$.
  It follows that $d_{\mathrm{H}}(M,M') = \Omega(|M|) = \Omega(t)$, where the last equality holds because $A$ has a constant approximation ratio.
\end{proof}

\subsection{Lower Bounds for Deterministic Greedy Algorithm}\label{subsec:lb-deterministic-greedy}
As we have seen in Section~\ref{sec:greedy}, the randomized greedy algorithm has $O(1)$ sensitivity even for vertex deletion.
Can we derandomize it without increasing the sensitivity?
To make the question more precise, let $V$ be a set of $n$ vertices and $\pi$ be a permutation over $\binom{V}{2}$.
Then, let $A_\pi$ denote the greedy algorithm such that, starting with an empty matching $M$, it iteratively adds the $i$-th edge with respect to $\pi$ to $M$ if and only if the edge does not share an endpoint with any edge in $M$.
We now show that the answer to the question is negative.
\begin{theorem}\label{thm:lb-greedy}
For any permutation $\pi$ over $\binom{V}{2}$, the algorithm $A_\pi$ has sensitivity $\Omega(n)$.
\end{theorem}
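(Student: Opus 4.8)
My plan is to force a long \emph{alternating-path cascade} in the greedy process. I would fix an arbitrary $\pi$ and build a graph $G$ on $\Theta(n)$ vertices whose ``spine'' is a long path $x_0-x_1-\cdots-x_{2m}$ with $m=\Omega(n)$, together with a trigger edge $e^\star=\{x_0,x_1\}$, chosen so that $e^\star$ is the \emph{globally smallest}-rank edge of $G$ (hence $e^\star$ is processed first and is certainly in $M:=A_\pi(G)$). The goal is to arrange matters so that $M$ restricted to the spine is the ``even'' matching $\{x_0x_1,\,x_2x_3,\,\dots\}$, while $M':=A_\pi(G-e^\star)$ restricted to the spine is the complementary ``odd'' matching $\{x_1x_2,\,x_3x_4,\,\dots\}$. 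Since these two matchings are disjoint and their union is all $2m$ spine edges, we would get $d_{\mathrm H}(M,M')\ge 2m=\Omega(n)$; note also that it is essential to remove a matching edge, since deleting a non-matching edge leaves the greedy run (hence the output) unchanged.

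The core of the analysis is a coupling of the greedy run on $G$ with the run on $G-e^\star$. These runs are identical until the step at which $e^\star$ would be processed; at that step they diverge ($e^\star$ enters $M$ but is absent in the second run), leaving $x_0$ and $x_1$ ``freed'' in the second run. I would then track the induced discrepancy: in the second run the next edge incident to the freed endpoint $x_1$ is grabbed, which (provided its other endpoint is still free at that time) displaces the spine edge that was matching $x_2$ in the first run, which frees $x_3$, and so on down the spine. Formalizing this shows the symmetric difference contains the whole spine, as desired; the bookkeeping is essentially the ``discrepancy set can only grow by one per incident edge, but here it grows at every spine vertex'' argument.

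The main obstacle is realizing this mechanism \emph{for every} $\pi$. The naive attempt---take $G$ to be a $\pi$-monotone path inside $K_n$ (so that each internal $x_i$ has degree $2$ and the ranks increase along the spine, which makes the cascade propagate automatically)---fails: an adversarial $\pi$ need not contain a long monotone path; indeed one can order $E(K_4)$ so that every increasing path has length only $2$, and this behavior persists for larger $n$. So the spine cannot be required to be $\pi$-monotone. Instead I would construct $G$ adaptively while scanning $\pi$: at each spine vertex attach small auxiliary gadgets whose edges get matched early in \emph{both} runs (hence are inert to the coupling), engineered so that at the precise moment the cascade reaches a spine vertex the only still-relevant incident edge is the next spine edge, forcing the cascade forward regardless of how $\pi$ orders things. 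The delicate points---which I expect to be where most of the work lies---are (i) showing such gadgets can be chosen so that the spine still has length $\Omega(n)$ while the whole construction fits in $n$ vertices, and (ii) verifying rigorously, via the coupling, that the gadgets really are inert and that $M$ and $M'$ restrict to the claimed even/odd matchings.
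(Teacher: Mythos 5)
Your mechanism---the even/odd flip of the greedy matching along a rank-increasing path once its first edge is deleted---is the right one, and your observation that one cannot simply take a $\pi$-monotone path inside $K_n$ is correct: the ordering $\{1,2\}<\{3,4\}<\{1,3\}<\{2,4\}<\{1,4\}<\{2,3\}$ of $E(K_4)$ indeed has no increasing path with three edges, so a monotone spine with $n-1$ edges is not always available, and for large $n$ the longest increasing path can be forced well below $n-1$. The difficulty is that everything after this observation is a plan rather than a proof. The adaptive gadget construction is the entire content of the theorem and is not carried out: you give no rule for selecting the spine edges, no argument that the spine can be kept of length $\Omega(n)$ inside $n$ vertices once gadgets are attached, and no proof of ``inertness.'' The last point is not a formality. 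Once the two runs diverge at $e^\star$, a freed spine vertex $x_i$ may have several incident edges of $G$ whose ranks lie between the two relevant spine edges; in the run on $G-e^\star$ the greedy process can grab one of these and match $x_i$ into a gadget rather than along the spine, at which point the cascade stalls and the symmetric difference stops growing. Guaranteeing, for \emph{every} $\pi$ simultaneously at $\Omega(n)$ spine vertices, that the next spine edge is the lowest-ranked still-available edge at the moment the discrepancy arrives is precisely the hard part, and the proposal contains no idea for it.

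The paper resolves the obstruction by a different and much shorter route: it builds a poset on $\binom{V}{2}$ in which two intersecting pairs are comparable according to their $\pi$-ranks, observes that every antichain is a matching and hence has at most $n/2$ elements, and invokes Mirsky's theorem to extract a chain with $n-1$ edges, i.e., a rank-increasing sequence in which consecutive edges share an endpoint; it then takes $G$ to be the path formed by this chain and runs exactly your even/odd argument on $G$ and $G-e_1$. This covering-by-antichains idea is absent from your proposal and is what lets the paper avoid gadgets entirely. That said, your $K_4$ example deserves to be held up against the paper's assertion that the extracted chain ``forms a path'': a rank-increasing sequence with consecutive intersections can a priori form a star or revisit vertices (on which greedy changes by only $O(1)$ edges after the deletion), and an increasing \emph{simple} path with $n-1$ edges does not exist for every $\pi$, as your ordering shows already for $n=4$. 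So while your construction is far from complete, the obstruction you identified is genuine and must be addressed by whichever argument one adopts.
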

\begin{proof}
  We say that an element $e$ of a poset \emph{covers} another element $e'$ if $e>e'$, where $>$ is the order relation of the poset, and there is no other element $e''$ such that $e>e''>e'$.
  Then, we construct a poset $P$ on the element set $\binom{V}{2}$ in which a pair $e \in \binom{V}{2}$ covers another pair $e' \in \binom{V}{2}$ if $\pi(e) > \pi(e')$ and $|e \cap e'| \geq 1$.
  Note that the size of any antichain in $P$ is at most $n/2$: A set of elements of size more than $n/2$ must have two elements $e,e'$ with $|e \cap e'| \geq 1$, which form a chain of length two.
  Hence, we need at least $\binom{n}{2} / (n/2) = n-1$ antichains to cover all the elements in $P$.
  Then by Mirsky's theorem, there exists a chain, say, $e_1,\ldots,e_{n-1}$, of size $n-1$ in $P$.
  From the construction of $P$, $e_1,\ldots,e_{n-1}$ forms a path of length $n-1$.
  Then, $A_\pi$ on the path $e_1,\ldots,e_{n-1}$ outputs edges with odd indices, whereas $A_\pi$ on the path $e_2,\ldots,e_{n-1}$ outputs edges with even indices, and hence the sensitivity of $A_\pi$ is $\Omega(n)$.
\end{proof}

\subsection{Lower Bounds for Randomized Algorithms}\label{subsec:lb-randomized}
The following shows that sensitivity must increase as approximation ratio goes to one.
\begin{theorem}\label{thm:lb-randomized}
 Let $\epsilon > 0$.
 Any (possibly randomized) $(1-\epsilon)$-approximation algorithm for the maximum matching problem has sensitivity $\Omega(1/\epsilon)$.
\end{theorem}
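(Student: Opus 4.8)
The plan is to exhibit, for each sufficiently small $\epsilon>0$, a single graph on which any $(1-\epsilon)$-approximate matching algorithm (even a randomized one) is forced to change $\Omega(1/\epsilon)$ edges in its output when one edge is deleted. The natural candidate is a long even cycle, since a cycle of length $2\ell$ has exactly two maximum matchings — the ``odd'' edges and the ``even'' edges — and they are completely disjoint. More precisely I would take $C$ to be a cycle on $N$ vertices where $N \approx 1/\epsilon$ (say $N = 2\lceil 1/(2\epsilon)\rceil$, an even number of order $1/\epsilon$), plus possibly isolated vertices to pad up to $n$. On this cycle the maximum matching has size $N/2$, and any matching that is a $(1-\epsilon)$-approximation must have size at least $(1-\epsilon)N/2 > N/2 - 1$, i.e.\ must be a \emph{perfect} matching of the cycle, hence must be one of the two alternating matchings $M_1$ (odd edges) or $M_2$ (even edges).

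**Next** I would set up the perturbation argument. Let $A$ be the algorithm, let $p$ be the probability that $A(C)$ equals $M_1$ and $1-p$ the probability it equals $M_2$ (these are the only two outputs of positive probability, up to the $0.99$-type failure probability, which I would fold into the constant — or alternatively state the bound for exact $(1-\epsilon)$-approximation algorithms and remark that a failure probability bounded away from $1$ only changes constants). Now delete one edge $e$ of the cycle, say $e \in M_1$. The graph $C - e$ is a path on $N$ vertices (an even number), whose unique maximum matching is again a perfect matching of size $N/2$, and since $e\in M_1$ was removed, the only perfect matching of the path $C-e$ is $M_2 \cap E(C-e) = M_2$. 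So on $C-e$, any $(1-\epsilon)$-approximate algorithm must output $M_2$ (with probability $\ge 1 - $ small). Symmetrically, deleting an edge of $M_2$ forces the output to be $M_1$. The point is that $A(C)$ is sometimes $M_1$ and $A(C-e)$ is forced to be $M_2$ for a suitable choice of $e$: if $p \ge 1/2$ pick $e\in M_1$, otherwise pick $e\in M_2$; in either case the two output distributions put mass at least $1/2$ on matchings at Hamming distance $N$ from each other, because $M_1 \triangle M_2$ has $N$ edges.

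**Then** I would lower-bound the earth mover's distance. Any coupling $\mathcal{D}$ of $A(C)$ and $A(C-e)$ must, with probability at least $1/2$ (the mass that $A(C)$ places on $M_1$, assuming $p\ge 1/2$), pair a sample equal to $M_1$ with the forced output $M_2$; such a pair has $d_{\mathrm H}(M_1, M_2) = N = \Omega(1/\epsilon)$. Hence $d_{\mathrm{EM}}(A(C), A(C-e)) \ge \tfrac12 \cdot N = \Omega(1/\epsilon)$, which is the claimed sensitivity bound. (Here I am using the definition $d_{\mathrm{EM}}(X,X') = \min_{\mathcal D} \E_{(S,S')\sim \mathcal D} d_{\mathrm H}(S,S')$ from the preliminaries, and the fact that in any coupling the first marginal is $A(C)$.)

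**The main obstacle** — really the only subtlety — is handling the gap between ``$(1-\epsilon)$-approximation'' as a distributional/high-probability guarantee and the clean two-outcome picture above: an adversarial algorithm could, with some small probability, output a non-maximum or even non-alternating matching, or a matching on the cycle together with some spurious isolated-vertex ``edges'' (there are none, but one should be careful about padding). I would deal with this by choosing $N$ slightly larger, e.g.\ $N = \lceil 2/\epsilon \rceil$ rounded to even, so that a $(1-\epsilon)$-approximation on the $N$-cycle still must have size $> N/2 - 1$ and hence be exactly one of $M_1, M_2$; and by absorbing any failure probability $q$ (from the ``probability $0.99$'' style guarantee) into the constant, noting that it only reduces the mass on $\{M_1,M_2\}$ from $1$ to $1-q$ and the relevant coupling mass from $1/2$ to $1/2 - q$, still $\Omega(1)$. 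So the final bound reads: any (possibly randomized) algorithm that outputs a $(1-\epsilon)$-approximate maximum matching with probability at least a constant bounded away from $1/2$ has sensitivity $\Omega(1/\epsilon)$, matching the $K = (1/\epsilon)^{2^{O(1/\epsilon)}}$ dependence of Theorem~\ref{thm:matching} in the qualitative sense that some dependence on $\epsilon$ is unavoidable.
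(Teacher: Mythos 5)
Your proposal is correct and follows essentially the same route as the paper: a cycle of length $\Theta(1/\epsilon)$ (padded with isolated vertices) whose only near-maximum matchings are the two disjoint alternating matchings $M_1,M_2$, concentration of the output distribution on one of them, deletion of an edge of that matching to force the other, and a coupling lower bound on $d_{\mathrm{EM}}$ via $d_{\mathrm H}(M_1,M_2)=\Omega(1/\epsilon)$; the paper merely runs the approximation guarantee in expectation (obtaining $p_1+p_2\ge 19/20$ for a cycle of length $1/(10\epsilon)$) rather than as a high-probability event. One small caution: in your final paragraph the fix should be to take $N$ strictly \emph{smaller} than $2/\epsilon$ (as in your original choice $N\approx 1/\epsilon$), since enlarging $N$ toward $2/\epsilon$ is exactly what would allow a non-perfect matching to remain a $(1-\epsilon)$-approximation.
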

\begin{proof}
  For simplicity, we assume $1/10\epsilon$ is an even integer.
  Let $A$ be an arbitrary $(1-\epsilon)$-approximation algorithm for the maximum matching problem, and let $G$ be a graph consisting of a cycle of length $1/10\epsilon$ and $n-1/10\epsilon$ isolated vertices.
  Clearly $G$ has two disjoint maximum matchings, say, $M_1,M_2$, of size $1/20\epsilon$.
  Let $p_1$ and $p_2$ be the probability that $A$ on $G$ outputs $M_1$ and $M_2$, respectively.
  Then as $A$ has approximation ratio $1-\epsilon$, we have
  \[
    p_1 \cdot \frac{1}{20\epsilon} +
    p_2 \cdot \frac{1}{20\epsilon} +
    (1-p_1-p_2) \cdot \left(\frac{1}{20\epsilon}-1\right)
    \geq  \frac{1-\epsilon}{20\epsilon}.
  \]
  Hence, we have $p_1+p_2 \geq 19/20$, and it follows that at least one of $p_1 \geq 19/40$ and $p_2 \geq 19/40$ hold.
  Without loss of generality, we assume $p_1 \geq 19/40$.

  Let $G'$ be the graph obtained from $G$ by removing one edge in $M_1$.
  Then, $G'$ has a unique maximum matching $M_2$.
  Let $p'_2$ be the probability that $A$ on $G'$ outputs $M_2$.
  As $A$ has approximation ratio $1-\epsilon$, we have
  \[
    p_2 \cdot \frac{1}{20\epsilon} + (1-p_2) \cdot \left( \frac{1}{20\epsilon}-1 \right) \geq \frac{1-\epsilon}{20\epsilon},
  \]
  which implies
  $p'_2 \geq 19/20$.
  Hence, the sensitivity of $A$ is at least
  \[
    \max\Bigl(\Pr[A(G) = M_1] - \Pr[A(G') \neq M_2],0\Bigr) \cdot d_{\mathrm{H}}(M_1,M_2)
    \geq \left(\frac{19}{40} - \frac{1}{20}\right) \cdot \frac{1}{10\epsilon}
    = \Omega\left(\frac{1}{\epsilon}\right).
    \qedhere
  \]
\end{proof}


\section{Weighted Sensitivity and Maximum Weighted Matching}
\label{sec:weighted}
In this section, we consider a generalization of sensitivity to weighted graphs, and show an approximation algorithm with low sensitivity for the maximum weighted matching problem.

\subsection{Weighted Sensitivity}
Given a weight function over the edges $w:E\to\mathbb{R}$ of a graph $G=(V,E)$ and two edge sets $S$ and $S'$, let
\[d_{\mathrm{H}}^w(S,S') = \sum_{e\in S \triangle S'}w(E),\]
where $\triangle$ again denotes the symmetric difference.
For random edge sets $X$ and $X'$, we use $d_{\mathrm{EM}}^w(X,X')$ to denote the weighted earth mover's distance between $X$ and $X$ with respect to $w$, so that
\[d_{\mathrm{EM}}^w(X,X') = \min_{\mathcal{D}} \E_{(S,S') \sim \mathcal{D}}d_{\mathrm{H}}^w(S,S'),\]
be the weighted Hamming distance between $S$ and $S'$ with respect to $w$, where $\mathcal{D}$ is a distribution such that its marginal distributions on the first and second coordinates are $X$ and $X'$, respectively.
For a real-valued function $\beta$ on graphs, we say that the \emph{weighted sensitivity} of an algorithm $A$ that outputs a set of edges is at most $\beta$ if for every graph $G=(V,E)$, a weight function $w:E \to \mathbb{R}$, and an edge $e\in E$,
\[d_{\mathrm{EM}}^w(A(G),A(G-e)) \leq \beta(G).\]
A priori, it is not clear whether the weighted sensitivity of an algorithm should correlate with the weight of removed edges.
Thus we say that the \emph{normalized weighted sensitivity} is at most $\beta$ if
\[ \frac{d_{\mathrm{EM}}^w(A(G),A(G-e))}{w(e)} \leq \beta(G).\]

\subsection{Algorithm Description}
We use a simple approach of partitioning the input by weight, finding a maximal matching on each partition, and finally forming a weighted matching by greedily adding edges from the maximal matchings, beginning with the matchings in the largest weight classes.
The approach is known to give a $(4+\epsilon)$-approximation~\cite{BuryGMMSVZ19,CrouchS14} to the maximum weighted matching.
However to bound the weighted sensitivity of our algorithm, we must choose $\epsilon$ carefully.

Formal description of our algorithm is given in Algorithm~\ref{alg:weighted-matching}.
It first defines subsets of edges $E_i$, where we assume the weight of each edge is polynomially bounded in $n$, so that $1\le w(e)\le n^c$ for some constant $c$.
For a parameter $\alpha>1$, we define $E_i$ to be the subset of edges in $E$ with weight at least $\alpha^i$.
Algorithm~\ref{alg:weighted-matching} first draws a random permutation $\pi$ of the edges and greedily forms a maximal matching $M_i$ on each set $E_i$ induced by $\pi$.
It then greedily adds edges to a maximal matching, starting from the matching of the heaviest weight class and moving downward.
That is, we initialize $M$ to be the empty set and greedily add edges of $M_i$ to $M$, starting with $i=O(\log_{\alpha} n^c)$ and decrementing $i$ after each iteration.

\begin{algorithm}[t!]
  \caption{Algorithm for maximum weighted matching}\label{alg:weighted-matching}
  \Procedure{\emph{\Call{Matching}{$G,\epsilon,w$}}}{
		Let $C$ be a sufficiently large constant and $\alpha>1$ be a trade-off parameter.\;
		Let $\pi$ be a random ordering of the edges of $G$\;
		For each $i=0$ to $i=C\log n$, let $E_i$ be the set of edges with weight at least $\alpha^i$\;
    Let $M_i$ be the greedy maximal matching of $E_i$ induced by $\pi$\;
		$M\gets\emptyset$\;
    \For{$i=C\log n$ to $0$}{
			\For{$e\in M_i$}{
				\If{$e$ is not adjacent to $M$}{
					$M\gets M\cup\{e\}$\;
					}
				}
			}
    \Return $M$.
  }
\end{algorithm}

\begin{theorem}[\cite{BuryGMMSVZ19,CrouchS14}]\label{thm:weighted:approx}
Algorithm~\ref{alg:weighted-matching} gives an $\frac{1}{4\alpha}$-approximation to the maximum weighted matching and uses runtime $O\left(m\log_{\alpha} n\right)$ on a graph with $m$ edges and $n$ vertices.
\end{theorem}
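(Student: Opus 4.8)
The plan is to prove the two assertions of Theorem~\ref{thm:weighted:approx} separately: the running time by directly accounting for the weight classes, and the $\tfrac{1}{4\alpha}$-approximation by the charging argument underlying~\cite{CrouchS14,BuryGMMSVZ19}, which I would adapt to the \emph{nested} classes $E_0 \supseteq E_1 \supseteq \cdots$ used in Algorithm~\ref{alg:weighted-matching} (the common random order $\pi$ is irrelevant for this theorem and matters only for the sensitivity bound proved afterwards).

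For the running time, note that since $1 \le w(e) \le n^c$, an edge $e$ lies in $E_i$ precisely for $0 \le i \le \lfloor \log_\alpha w(e)\rfloor \le c\log_\alpha n$; hence only $O(\log_\alpha n)$ of the classes are nonempty and $\sum_i |E_i| = O(m\log_\alpha n)$. Each greedy maximal matching $M_i$ is produced by a single scan of $E_i$ in time $O(|E_i|)$ using a boolean array of covered vertices, and the combination loop inspects each edge of each $M_i$ once; so the total is $O\big(\sum_i |E_i|\big) = O(m\log_\alpha n)$, and computing $\pi$ costs one extra sort that is subsumed.

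For the approximation ratio, fix a maximum weight matching $M^*$. For $e^* \in M^*$ put $k = \lfloor\log_\alpha w(e^*)\rfloor$, so $e^* \in E_k$ and $\alpha^k \le w(e^*) < \alpha^{k+1}$; since $M_k$ is maximal in $E_k$, some $f = \phi(e^*) \in M_k$ shares an endpoint with $e^*$, with $w(f) \ge \alpha^k$. When the combination loop reaches level $k$, either $f \in M$ or $f$ is blocked by an edge sharing an endpoint with $f$ that already entered $M$ at a strictly larger level and therefore has weight $\ge \alpha^{k+1}$; let $\psi(e^*) \in M$ be that edge. In both cases $w(\psi(e^*)) \ge \alpha^k > w(e^*)/\alpha$, so charging $w(e^*)$ to $\psi(e^*)$ accounts for all of $w(M^*)$. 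It remains to bound the charge on a fixed $g \in M$; let $\ell$ be the level at which $g$ entered $M$, so $w(g) \ge \alpha^\ell$. At most two edges $e^* \in M^*$ are incident to $g$, and each has $w(e^*) < \alpha\,w(g)$, contributing $< 2\alpha\,w(g)$. Every remaining $e^*$ charged to $g$ has $\phi(e^*) \ne g$, so $\phi(e^*)$ is an edge of some $M_j$ with $j < \ell$ incident to an endpoint of $g$, and $e^*$ is incident to the far endpoint of $\phi(e^*)$; for each of the two endpoints of $g$ and each level $j < \ell$ there is at most one such $\phi(e^*)$ and hence at most one such $e^*$, of weight $< \alpha^{j+1}$. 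Summing the geometric series, $g$ receives total charge $< 2\alpha\,w(g) + \sum_{j<\ell} 2\alpha^{j+1} < \big(2\alpha + \tfrac{2\alpha}{\alpha-1}\big)w(g) = \tfrac{2\alpha^2}{\alpha-1}\,w(g)$, whence $w(M^*) \le \tfrac{2\alpha^2}{\alpha-1}\,w(M) \le 4\alpha\,w(M)$ for every $\alpha \ge 2$, i.e.\ $M$ is a $\tfrac{1}{4\alpha}$-approximation.

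The main obstacle is exactly this last bound. In the disjoint-class analysis of~\cite{CrouchS14} each output edge blocks candidates only from its own class, but nesting lets a single $g \in M$ block candidate edges $\phi(e^*)$ coming from \emph{every} level below its own, so one must exploit the geometric spacing $\alpha^j$ of the thresholds to keep the sum finite; the subtle point is to partition the charge so that the $e^*$ incident to $g$ are counted once rather than twice, and the balance between the constant $2$ (two endpoints) and the geometric factor $\tfrac{\alpha}{\alpha-1}$ is what forces $\alpha > 2$.
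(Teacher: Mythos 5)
Your proof is correct. The paper does not prove this theorem itself---it is imported from \cite{CrouchS14,BuryGMMSVZ19}---and your argument is a faithful, self-contained reconstruction of the standard charging analysis for the nested threshold classes used there: charge each $e^*\in M^*$ to the output edge that (directly, or via the blocked edge of $M_{\lfloor\log_\alpha w(e^*)\rfloor}$) excludes it, and bound the charge on each $g\in M$ by $2\alpha\,w(g)$ for the two $M^*$-edges touching $g$ plus a geometric series over the lower levels whose matchings $g$ can block, yielding ratio $\frac{\alpha-1}{2\alpha^2}\ge\frac{1}{4\alpha}$; the only caveat, which you correctly flag, is that this last inequality requires $\alpha\ge 2$, consistent with how the paper actually invokes the theorem.
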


\subsection{Sensitivity Analysis}
We first require the following key structural lemma that we use to prove Theorem~\ref{thm:randomized-greedy} in Appendix~\ref{apx:randomized-greedy}.
\begin{lemma}
\label{lem:deleted:change}
In expectation, the deletion of an edge $e$ alters at most one edge in $M_i$, i.e., at most one edge in $M_i$ is inserted or deleted in expectation.  
(Informal, see Lemma~\ref{lem:expected:change}.)
\end{lemma}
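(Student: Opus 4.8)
The plan is to adapt the coupling argument used for the sensitivity of the randomized greedy in Theorem~\ref{thm:randomized-greedy}, restricted to a single weight class $E_i$. Fix $i$ and write $F = E_i$ for the edge set and let $e \in E(G)$ be the deleted edge. If $e \notin F$ then the deletion has no effect on $M_i$ whatsoever, so assume $e \in F$. I would run the randomized greedy on $F$ and on $F - e$ using the \emph{same} permutation $\pi$ restricted to $F$, and track the set of edges on which the two executions disagree. The key is that the greedy on $F - e$ behaves identically to the greedy on $F$ up until the step at which $e$ would have been added; if $e$ is never added to $M_i$ (because a neighboring edge was already matched), the two runs are identical and nothing changes. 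So condition on $e$ being added to $M_i(F)$ at its position in $\pi$.

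First I would set up the standard ``influence chain'' or ``witness tree'' picture: once $e$ is added in the $F$-execution but not the $(F-e)$-execution, exactly the two endpoints of $e$ are ``freed'' in the second run relative to the first, so some later edge incident to those endpoints may now be added in the $(F-e)$-run that was blocked in the $F$-run; that newly added edge then blocks its other neighbor, and so on. This produces a single alternating path of changes, and the expected length of this path is what we must bound. The second step is to argue that this path is short in expectation: each time the discrepancy propagates from an edge $f$ to a later edge $f'$, we need $f'$ to be the \emph{first} among the currently-unblocked neighbors of $f$'s freed endpoint in $\pi$, and a standard argument (exactly as in the maximal independent set analysis of~\cite{censor2016optimal}, which the paper already invokes for Theorem~\ref{thm:randomized-greedy}) shows the probability that the chain extends by one more step, conditioned on reaching the current step, is bounded away from $1$ — in fact the expected total number of affected edges in $M_i$ is $O(1)$, and with the more careful accounting it is at most $1$. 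I would carry this out by considering the random order in which the relevant edges appear and showing that the expected number of ``new matches'' triggered telescopes to at most the probability that $e \in M_i(F)$ in the first place.

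The main obstacle I anticipate is making the ``at most one edge in expectation'' bound tight rather than merely $O(1)$: the crude influence-chain argument gives a geometric-type bound with expectation a constant possibly larger than $1$, so the careful version (the formal Lemma~\ref{lem:expected:change}) presumably exploits that $e$ itself is removed — so the symmetric difference on $M_i$ already ``spends'' one unit on $e$ — and that the propagated chain, when it exists, is in bijection with a structure whose expected size is controlled by the same randomness that put $e$ into the matching, so the two contributions combine to exactly $1$ rather than $1 + O(1)$. I would handle this by a direct expectation computation over $\pi$ restricted to $F$, summing over candidate edges $f$ the probability that $f$ lies on the discrepancy path, and showing this sum telescopes. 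The weight class structure and the parameter $\alpha$ play no role in this particular lemma — they only matter when these per-class bounds are combined across the $O(\log_\alpha n)$ classes to bound the (normalized) weighted sensitivity of Algorithm~\ref{alg:weighted-matching} — so I would keep the statement and proof purely about one invocation of randomized greedy on an arbitrary edge subset.
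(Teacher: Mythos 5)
Your high-level approach is the right one and matches the paper's: couple the two greedy runs on $E_i$ and on $E_i - e$ under the same permutation $\pi$, observe that nothing changes unless $e$ enters the matching, and track the set of edges on which the two executions disagree, following~\cite{censor2016optimal}. However, there is a genuine gap at exactly the point you flag as "the main obstacle": your mechanism for bounding the discrepancy set --- arguing that the influence chain extends at each step with probability bounded away from $1$, giving $O(1)$, and then hoping a "telescoping" computation brings this down to $1$ --- is not how the bound is actually obtained, and the step-by-step chain-extension probability is not something you can cleanly condition on (the events at successive steps of the chain are not independent of the conditioning that got you there). The paper's proof (Lemma~\ref{lem:expected:change}, via Lemmas~\ref{lem:possible:changes} and~\ref{lem:prob:min}) avoids any per-step analysis. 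It defines a \emph{potential} influence set $S'_\pi$ by rerunning the recursion with the deleted edge $e_\pi$ artificially moved to the front of $\pi$, so that $S'_{\pi,0}=\{e_\pi\}$ unconditionally. Lemma~\ref{lem:possible:changes} then shows that the \emph{actual} change set $S_\pi$ is empty unless $e_\pi$ happens to have the minimum rank among all of $S'_\pi$, and is contained in $S'_\pi$ in that case. Lemma~\ref{lem:prob:min} shows, via an exchangeability argument (the event $\{S'_\pi = F\}$ is determined by $\pi$ restricted to $F\setminus\{e_\pi\}$ and to $E\setminus F$, so the rank of $e_\pi$ within $F$ is uniform given that event), that
\[
\Pr\bigl[\pi(e_\pi)=\min\{\pi(f) \mid f\in F\} \,\big|\, S'_\pi = F\bigr]=\frac{1}{|F|},
\]
whence $\E|S_\pi| \le \sum_F \Pr[S'_\pi = F]\cdot |F| \cdot \frac{1}{|F|} = 1$. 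This "pay $|F|$ only with probability $1/|F|$" cancellation is the idea your proposal is missing; without it you get at best an unspecified constant, not the bound of $1$ that the induction in Theorem~\ref{thm:weighted:sensitivity} relies on.

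One further small point: the formal Lemma~\ref{lem:expected:change} in the appendix is stated for vertex deletion (with $e_\pi$ the lowest-rank edge incident to the deleted vertex); for the edge-deletion use in the weighted setting one takes $e_\pi = e$ itself, and your reduction to a single weight class ($e\notin E_i$ implies $M_i$ is unchanged, and the class structure plays no role in the per-class bound) is correct.
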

The sensitivity analysis follows from the observation that the deletion of an edge $e$ can only affect the matchings $E_i$ for which $\alpha^i\le w(e)$.
Moreover by Lemma~\ref{lem:deleted:change}, the deletion of edge $e$ affects at most one edge in $E_i$, in expectation. 
Thus in expectation, the deletion of $e$ affects at most two edges in $E_{i-1}$ in expectation and inductively, the deletion of $e$ affects at most $2^j$ edges in $E_{i-j}$ in expectation.
On the other hand, the weight of each edge in $E_{i-j}$ is at most $\alpha^{i-j}$ so the weighted sensitivity is $\sum_j \alpha^{i-j}2^j$.
Hence for $\alpha=2$, the weighted sensitivity is $O(2^i\log n)$ and for $\alpha>2$, the weighted sensitivity is $O(2^i)$.
Similarly for normalized weighted sensitivity, we rescale by $\frac{1}{2^i}$ so that the normalized weighted sensitivity is $O(\log n)$ for $\alpha=2$ and $O(1)$ for $\alpha>2$.
We now formalize this intuition. 

\begin{theorem}\label{thm:weighted:sensitivity}
Suppose $1\le w(e)\le W\le n^c$ for some absolute constants $W,c>0$ for all $e\in E$.
The weighted sensitivity of Algorithm~\ref{alg:weighted-matching} is $O(W\log n)$ for $\alpha=2$ and $O(W)$ for $\alpha>2$. 
The normalized weighted sensitivity of Algorithm~\ref{alg:weighted-matching} is $O(\log n)$ for $\alpha=2$ and $O(1)$ for $\alpha>2$.
\end{theorem}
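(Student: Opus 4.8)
\textbf{Proof plan for Theorem~\ref{thm:weighted:sensitivity}.}
The plan is to make the informal inductive argument above rigorous by carefully tracking how a single edge deletion propagates through the weight classes. First I would observe that deleting an edge $e$ with $w(e) = W'$ can only change the sets $E_i$ for which $\alpha^i \le W'$, since $e \notin E_i$ when $\alpha^i > W'$; let $i^*$ be the largest such index, so $\alpha^{i^*} \le W' \le w(e)$. For classes $i > i^*$ the matchings $M_i$ are computed on identical edge sets with the identical permutation $\pi$, hence are unchanged, and so the first iterations of the greedy merge loop (from $C\log n$ down to $i^*+1$) behave identically on $G$ and $G-e$. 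Thus the symmetric difference $M(G) \triangle M(G-e)$ is supported entirely on edges of weight at most $\alpha^{i^*} \le w(e)$, which already gives the normalization: any bound of the form $d_{\mathrm{EM}}^w \le c\cdot\alpha^{i^*}$ yields normalized weighted sensitivity $O(c)$.

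Next I would set up the propagation recursion. Let $X_j$ denote (a bound on) the expected number of edges in the symmetric difference of $M_{i^*-j}$ between the two runs, and let $Y_j$ denote the expected number of edges in the symmetric difference of the partial matching $M$ after the merge loop has processed classes $i^*, i^*-1, \dots, i^*-j$. By Lemma~\ref{lem:deleted:change} (the precise version, Lemma~\ref{lem:expected:change}, applied with the edge sets $E_i$ and the common permutation $\pi$), each ``seed'' perturbation into class $E_{i^*-j}$ changes at most one edge of $M_{i^*-j}$ in expectation; but the perturbations feeding into class $i^*-j$ come both from the $O(1)$-per-class intrinsic change and from the $Y_{j-1}$ already-altered edges of $M$ carried down from heavier classes, each of which can act as a fresh ``deleted/inserted edge'' stimulus for the greedy on $E_{i^*-j}$. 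This gives a recursion of the shape $X_j = O(Y_{j-1})$ and $Y_j \le Y_{j-1} + X_j = O(Y_{j-1})$, with $Y_0 = O(1)$, hence $Y_j = 2^{O(j)}$ and more precisely $X_j \le 2^{j}$ up to constants. Summing the weighted contributions, $d_{\mathrm{EM}}^w(M(G),M(G-e)) \le \sum_{j\ge 0} \alpha^{i^*-j} X_j = \alpha^{i^*}\sum_{j\ge 0} (2/\alpha)^j$. For $\alpha > 2$ the geometric series converges to $O(1)$, giving $O(\alpha^{i^*}) = O(W)$ weighted and $O(1)$ normalized; for $\alpha = 2$ the sum is over $j = 0, \dots, O(\log n)$ terms each equal to $\alpha^{i^*}$, giving $O(W\log n)$ weighted and $O(\log n)$ normalized.

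The step I expect to be the main obstacle is justifying the recursion $X_j = O(Y_{j-1})$ rigorously, i.e., arguing that when the partial matching $M$ arriving from the heavier classes differs in $Y_{j-1}$ edges between the two runs, the greedy merge into class $E_{i^*-j}$ (which only adds edges of $M_{i^*-j}$ not adjacent to the current $M$) produces a symmetric difference of only $O(Y_{j-1})$ expected new discrepancies rather than something unbounded. This requires viewing each discrepant edge of $M$ as toggling the ``availability'' of at most two vertices, and then invoking a stability statement for the greedy-under-a-fixed-permutation process analogous to Theorem~\ref{thm:randomized-greedy}/Lemma~\ref{lem:expected:change}: toggling the presence of one already-placed edge perturbs the downstream greedy by $O(1)$ edges in expectation. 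I would handle the compounding across the $Y_{j-1}$ seeds by a union-bound/linearity argument over the seeds, together with a coupling that keeps $\pi$ fixed across the two runs so that only the altered edges drive the divergence; care is needed because these seed perturbations are not independent, so I would phrase the bound as an expectation over the coupling and appeal to linearity rather than to concentration. Once this per-class stability lemma is in place, assembling the geometric sum and splitting into the $\alpha = 2$ and $\alpha > 2$ cases is routine, and the normalized bounds follow immediately by dividing through by $w(e) \ge \alpha^{i^*}$.
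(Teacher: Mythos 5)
Your proposal is correct and follows essentially the same route as the paper: the deletion only perturbs classes with $\alpha^j \le w(e)$, Lemma~\ref{lem:deleted:change} is applied inductively to get an expected $2^{i-j}$ changed edges at level $j$ (your $X_j \le 2^j$ recursion is exactly the paper's $\mathbb{E}[|S_{i-j}|]\le 2^{j}$ induction), and the weighted bound is the same geometric sum $\sum_j 2^{i-j}\alpha^j$ split into the $\alpha=2$ and $\alpha>2$ cases, with normalization by $w(e)\ge\alpha^i$. The obstacle you flag—rigorously applying the single-deletion stability lemma to a set of $Y_{j-1}$ accumulated discrepancies via linearity over a coupling—is real but is treated at the same (informal) level in the paper, which simply feeds the set $\{e\}\cup\bigcup_k S_k$ back into Lemma~\ref{lem:deleted:change}.
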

\begin{proof}
Let $e$ be an edge of weight $w(e)\in[2^i,2^{i+1}]$ for some integer $i\ge 0$ and suppose $e$ is removed from $G$.
For $j\le i$, let $S_j$ be the set of edges in $E_j$ affected by the deletion of edge $e$, so that by Lemma~\ref{lem:deleted:change}, $\mathbb{E}[|S_i|]\le 1$.
Then we have $\mathbb{E}[|S_i|\cup\{e\}]\le 2$ so that Lemma~\ref{lem:deleted:change} implies that $\mathbb{E}[|S_{i-1}|]\le 2$.
Now suppose that for a fixed $j\le i$, we have $\mathbb{E}\left[\left|\{e\}\cup\bigcup_{k=j}^{i}S_{k}\right|\right]\le 2^{i-j}$.
Then Lemma~\ref{lem:deleted:change} implies that $\mathbb{E}[|S_{j-1}|]\le 2^{i-j}$ so that
\[\mathbb{E}\left[\left|\{e\}\cup\bigcup_{k=j-1}^{i}S_{k}\right|\right]\le 2^{i-j+1}.\]
Hence by induction, we have $\mathbb{E}[|S_{j}|]\le 2^{i-j}$.

Since each edge of $E_i$ has weight at most $\alpha^i$, then we have
\[d_{\mathrm{EM}}(A(G),A(G - e))=\sum_{j=0}^i 2^{i-j}\alpha^j,\]
where $A(G)$ represents the output of Algorithm~\ref{alg:weighted-matching} on $G$.
Under the assumption that $1\le w(e)\le n^c$ for some absolute constant $c>0$ for all $e\in E$, then $i=O(\log n)$. 
Hence for $\alpha=2$, we have $\sum_{j=0}^i 2^{i-j}\alpha^j=O(2^i\log n)=O(W\log n)$ and for $\alpha>2$, we have $\sum_{j=0}^i 2^{i-j}\alpha^j=O(2^i)=O(W)$.
Moreover, we have
\[\overline{d_{\mathrm{EM}}}(A(G),A(G - e))\le\frac{1}{2^{i+1}}\sum_{j=0}^i 2^{i-j}\alpha^j,\]
so that $\overline{d_{\mathrm{EM}}}(A(G),A(G - e))\le O(\log n)$ for $\alpha=2$ and $\overline{d_{\mathrm{EM}}}(A(G),A(G - e))=O(1)$ for $\alpha>2$.
\end{proof}

\noindent
Together, Theorem~\ref{thm:weighted:approx} and Theorem~\ref{thm:weighted:sensitivity} give the full guarantees of Algorithm~\ref{alg:weighted-matching}.
\begin{theorem}
Let $G=(V,E)$ be a weighted graph with $w(e)\le W\le n^c$ for some constant $c>0$ and all $e\in E$.
For a trade-off parameter $\alpha$, there exists an algorithm that outputs a $\frac{1}{4\alpha}$-approximation to the maximum weighted matching in $O(m\log_{\alpha} n)$ time.
For $\alpha=2$, the algorithm has weighted sensitivity $O(W\log n)$ and normalized weighted sensitivity $O(\log n)$.
For $\alpha>2$, the algorithm has weighted sensitivity $O(W)$ and normalized weighted sensitivity $O(1)$.
\end{theorem}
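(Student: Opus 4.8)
The plan is to obtain the statement as a direct corollary of the two results already established for Algorithm~\ref{alg:weighted-matching}. For the approximation ratio and running time I would simply cite Theorem~\ref{thm:weighted:approx}: partitioning $E$ into the nested classes $E_i = \{e : w(e) \ge \alpha^i\}$, computing a greedy maximal matching $M_i$ on each $E_i$ under the common random order $\pi$, and then greedily merging the $M_i$ from the heaviest class downward, is exactly the $(4+\epsilon)$-type construction of \cite{BuryGMMSVZ19,CrouchS14}; it yields a $\frac{1}{4\alpha}$-approximation, and since there are $O(\log_\alpha n)$ weight classes and each maximal matching costs $O(m)$ once $\pi$ is fixed, the total time is $O(m\log_\alpha n)$.

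For the sensitivity bounds I would invoke Theorem~\ref{thm:weighted:sensitivity}, whose proof carries the only real content. Fix a deleted edge $e$ with $w(e)\in[2^i,2^{i+1}]$ and note that removing $e$ can only perturb the matchings $M_j$ with $\alpha^j \le w(e)$, i.e.\ $j \le i$. Writing $S_j$ for the set of edges of $M_j$ altered by the deletion, Lemma~\ref{lem:deleted:change} (formally Lemma~\ref{lem:expected:change}) gives $\E|S_i|\le 1$; treating $\{e\}\cup S_i$ as the perturbation fed into the next class and iterating downward yields, by induction, $\E|S_{i-k}|\le 2^{k}$, the factor of two per level absorbing the deleted edge together with the propagated changes. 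Since each edge in class $j$ has weight at most $\alpha^j$, the weighted earth mover's distance $d_{\mathrm{EM}}^w(A(G),A(G-e))$ is at most $\sum_{j=0}^{i} 2^{\,i-j}\alpha^{j}$. Under the polynomial-weight assumption $i=O(\log n)$, so for $\alpha=2$ the geometric sum is balanced and equals $O(2^{i}\log n)=O(W\log n)$, while for $\alpha>2$ it is dominated by its top term and equals $O(2^{i})=O(W)$; dividing through by $w(e)\ge 2^{i}$ gives the normalized bounds $O(\log n)$ and $O(1)$, respectively. The two case splits on $\alpha$ are just bookkeeping on this one geometric sum.

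The genuine obstacle in this whole chain — already discharged before the statement — is Lemma~\ref{lem:deleted:change}, equivalently Theorem~\ref{thm:randomized-greedy}, namely that \textsc{RandomizedGreedy} has expected sensitivity at most $1$ to a single edge (or vertex) modification. This is where the coupling argument in the spirit of \cite{censor2016optimal} is needed: one must argue that under a uniformly random edge ordering the early edges in $\pi$ dominate the outcome, so that a single modification cannot cascade through the greedy process in expectation. Granting that lemma, the theorem follows immediately by combining Theorem~\ref{thm:weighted:approx} with Theorem~\ref{thm:weighted:sensitivity}, and nothing further beyond the elementary geometric-sum estimate above is required.
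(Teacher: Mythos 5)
Your proposal is correct and follows essentially the same route as the paper: the theorem is obtained by combining Theorem~\ref{thm:weighted:approx} (approximation and runtime) with Theorem~\ref{thm:weighted:sensitivity} (sensitivity), and your sketch of the latter — propagating $\E|S_j|\le 2^{i-j}$ downward via Lemma~\ref{lem:deleted:change} and summing the geometric series $\sum_j 2^{i-j}\alpha^j$ — is exactly the paper's argument, including the reliance on Theorem~\ref{thm:randomized-greedy} as the one substantive ingredient. No gaps.
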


\noindent
We again emphasize that the worst case weighted sensitivity of \emph{any} constant factor approximation algorithm to the maximum weighted matching problem is at least $\Omega(W)$. 
Recall that if an edge of weight $W=n^c$ is altered in a graph whose remaining edges have weight $1$, then any constant factor approximation to the maximum weighted matching must include the heavy edge for sufficiently large $c>0$, which incurs cost $\Omega(W)$ in the weighted sensitivity. 
Thus for $\alpha>2$, Algorithm~\ref{alg:weighted-matching} performs well with respect to both weighted sensitivity and normalized weighted sensitivity. 
\section{Conclusion and Open Questions}
In this paper, we study the worst-case sensitivity for approximation algorithms for the maximum matching problem. 
We give a randomized $(1-\epsilon)$-approximation algorithm with worst-case sensitivity $O_{\epsilon}(1)$, which improves an algorithm of Varma and Yoshida that offers the same approximation guarantee, but only \emph{average} sensitivity $n^{O(1/(1+\epsilon^2))}$. 
We also give a deterministic $1/2$-approximation algorithm with sensitivity $\exp(O(\log^*n))$ for bounded-degree graphs. 
We introduced the concept of normalized weighted sensitivity for the maximum weighted matching problem and gave an algorithm with $O(1)$ normalized weighted sensitivity that outputs a $\frac{1}{4\alpha}$-approximation to the maximum weighted matching in $O(m\log_{\alpha} n)$ time, for a trade-off parameter $\alpha>2$. 

We believe there are many interesting open questions for future exploration. 
Since our work focuses on the maximum matching problem, we have not considered normalized weighted sensitivity for other graph problems. 
Even for maximum matching, there remains a large number of potential directions for future research. 
For example, there remains a large gap in the understanding of the behavior of the worst-case sensitivity of deterministic algorithms. 
Another line of study is constant factor approximation algorithms for maximum weighted matching with low sensitivity, rather than low normalized weighted sensitivity. 
Can we achieve $(1-\epsilon)$-approximation to the maximum weighted matching problem while still having low normalized weighted sensitivity?

\bibliographystyle{plainurl}
\bibliography{main}

\appendix

\section{Proof of Theorem~\ref{thm:randomized-greedy}}\label{apx:randomized-greedy}
In this section, we formalize the proof of Theorem~\ref{thm:randomized-greedy}.
The approach follows exactly the same structure as the~\cite{censor2016optimal}, who give an algorithm for maximal independent set in the dynamic distributed model.
The only difference is that we maintain a maximal matching rather than a maximal independent set, so we must track the order of the edges rather than the order of the vertices in a given permutation.
We offer the proof for completeness.

In what follows, we fix a graph $G=(V,E)$, $v \in V$, and let $G'=G-v$.
For an permutation $\pi$ over edges in $G$, let $A_\pi$ be the deterministic algorithm that, starting with an empty matching $M$, iteratively add edges to $M$ in the order $\pi$ if they do not intersect with $M$.
Then, the randomized greedy $A$ can be seen as an algorithm that chooses a random permutation $\pi$ and then runs $A_\pi$.
Let $M_\pi$ and $M'_\pi$ be the maximal matchings obtained by running $A_\pi$ on $G$ and $G'$, respectively. (Here we used $\pi$ as an permutation over edges in $G'$ by ignoring $e$ in $\pi$.)
Let $M$ and $M'$ be the maximal matchings obtained by running $A$ on $G$ and $G'$, respectively.

For an edge $e \in E$, we call $\pi(e)$ the \emph{rank} of $e$, and let $I_\pi(e)$ be the set of edges sharing endpoints with $e$ with smaller rank, that is, $I_\pi(e) = \{e' \in N_G(e) \mid \pi(e') < \pi(e)\}$.
Note that the matching $M_\pi$ can be described by the following invariant:
\begin{quote}
  An edge $e$ is in $M_\pi$ if and only if all of its neighbors $e' \in N_G(e) \cap I_\pi(e)$ are not in $M_\pi$.
\end{quote}

Our goal is to show that, in expectation over $\pi$, we need to modify at most one edge in $M_\pi$ so that the invariant is satisfied for the graph $G'$.

Let $e_\pi \in E$ be the edge incident to $v$ with the smallest rank with respect to $\pi$.
We define $S_\pi \subseteq E$ to intuitively be the set of edges in $G$ that need to be changed to maintain the invariant.
Formally, we set $S_{\pi,0} = \{e_\pi\}$ if $e_\pi \in M$ and $S_{\pi,0} = \emptyset$ otherwise.
Then for $i>0$, recursively set
\[
  S_{\pi,i} = \{e \in M_\pi \mid S_{\pi, i-1}\cap I_\pi(e)\neq\emptyset\}\cup \left\{e \not \in M_\pi \mid  I_\pi(e)\cap M_\pi \subseteq\bigcup_{j=0}^{i-1} S_{\pi, j}\right\}.
\]
We then define $S_\pi = \bigcup S_{\pi, i}$ and show the following, from which Theorem~\ref{thm:randomized-greedy} immediately follows.
\begin{lemma}\label{lem:expected:change}
\[\E_\pi|S_\pi| \leq 1.\]
\end{lemma}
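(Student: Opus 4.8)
The strategy is to show that the expected size of $S_\pi$ is at most $1$ by a potential/charging argument that follows the analysis of Censor-Hillel, Haramaty, and Karnin~\cite{censor2016optimal}. The key observation is that each level $S_{\pi,i}$ is generated from $S_{\pi,i-1}$ by a \emph{branching process}-like expansion, and the branching factor is governed by how the random ranks $\pi$ compare on edges sharing an endpoint. First I would reformulate the recursion: an edge $e \in M_\pi$ enters $S_\pi$ only because some lower-ranked neighbor already left $M_\pi$, and a non-matching edge $e \notin M_\pi$ enters only when \emph{all} of its lower-ranked matching-neighbors have already been removed. So the ``influence'' propagates along edges in order of increasing rank, and a given edge $e$ can be reached only through a path $e_\pi = f_0, f_1, \dots, f_\ell = e$ in the line graph of $G$ along which the ranks are monotonically increasing (each $f_{j}$ is a neighbor of $f_{j-1}$ with $\pi(f_{j-1}) < \pi(f_j)$).

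**Main steps.** (1) Establish the monotone-path characterization: $e \in S_\pi$ implies the existence of an increasing-rank alternating-style walk from $e_\pi$ to $e$ in the line graph, where the parity of membership in $M_\pi$ alternates appropriately. (2) For a fixed such candidate path of length $\ell$, bound the probability over $\pi$ that it actually ``fires'' all the way to $e$: because the ranks are i.i.d.\ uniform, the probability that a fixed length-$\ell$ sequence of line-graph edges has strictly increasing ranks is at most $1/\ell!$ (after conditioning on the relative order of those $\ell+1$ ranks). (3) Sum over all candidate paths: the number of length-$\ell$ walks in the line graph emanating from $e_\pi$ is at most $(2(\Delta-1))^\ell$ or, more carefully, at most the product of degrees along the walk; combined with the $1/\ell!$ factor this gives a convergent series. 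The cleanest way, mirroring~\cite{censor2016optimal}, is to show $\E_\pi|S_{\pi,i}|$ decays geometrically — specifically that $\E_\pi\big[\,|S_{\pi,i}| \;\big|\; S_{\pi,0},\dots,S_{\pi,i-1}\,\big]$ is controlled so that $\sum_i \E_\pi|S_{\pi,i}| \le 1$. Here one uses that a matching edge $e$ put into $S_{\pi,i}$ ``removes'' its two endpoints from availability, so the non-matching edges forced in at the next level and the further matching edges they trigger can be charged back, with the random rank comparisons supplying the shrinkage factor at each step.

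**The hard part.** The main obstacle is controlling the expansion at the step from a removed matching edge to the \emph{new} matching edges that replace it (the ``$e \in M_\pi$ with $S_{\pi,i-1}\cap I_\pi(e)\neq\emptyset$'' together with the ``$e\notin M_\pi$ with $I_\pi(e)\cap M_\pi \subseteq \bigcup_{j<i}S_{\pi,j}$'' clauses acting in tandem): a single removal can in principle free up several endpoints, each of which may now pull in a previously-excluded edge, which in turn may evict another matching edge, and one must argue this two-phase cascade contracts in expectation rather than explodes. The resolution, exactly as in the maximal-independent-set argument, is that the newly-inserted matching edge must have \emph{higher} rank than the edge it displaces (otherwise it would already have been in $M_\pi$), so every step of the cascade strictly increases rank; integrating over the uniform random permutation, the probability that a cascade of depth $k$ occurs is at most $1/k!$ times a crude count of the relevant edge-walks, and $\sum_k (\text{poly growth})/k! $ is dominated — with the constant working out to exactly $1$ because the process is seeded by the single edge $e_\pi$. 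Once this geometric-decay (equivalently, $\sum_i \E_\pi|S_{\pi,i}| \le 1$) estimate is in place, Lemma~\ref{lem:expected:change} follows immediately, and since $|M_\pi \triangle M'_\pi| \le |S_\pi|$ by construction, Theorem~\ref{thm:randomized-greedy} follows by taking the coupling that uses the same permutation $\pi$ on $G$ and $G'$.
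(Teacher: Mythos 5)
Your proposal goes down a genuinely different road from the paper, and the critical step does not close. You reduce $\E_\pi|S_\pi|$ to a union bound over increasing-rank walks in the line graph emanating from $e_\pi$: a walk of length $\ell$ ``fires'' with probability about $1/\ell!$, and you multiply by a count of candidate walks. But the number of length-$\ell$ walks from $e_\pi$ in the line graph is of order ${(2\Delta)}^\ell$, so the resulting series is $\sum_\ell {(2\Delta)}^\ell/\ell! = e^{2\Delta}-1$, which is exponential in the maximum degree, not $1$. You flag this as the hard part and then assert that the constant works out to exactly $1$ because the process is seeded by the single edge $e_\pi$, but being seeded by a single edge only fixes the root of the branching process, not its growth rate; no amount of care in counting walks turns $e^{2\Delta}-1$ into $1$. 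A degree-dependent bound would not suffice: Theorem~\ref{thm:randomized-greedy} needs sensitivity at most $1$ for arbitrary graphs, and this bound is composed many times in Lemma~\ref{lem:augmenting-paths-sensitivity}, so even a constant larger than $1$ would be costly. A secondary issue is that membership of a non-matching edge in $S_{\pi,i}$ requires \emph{all} of its lower-ranked matching neighbors to have been removed --- a conjunction that is not witnessed by a single increasing walk --- so the events you union-bound over do not decompose into independent path events in the way your step (2) assumes.

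The paper's argument avoids path counting entirely. It introduces the auxiliary set $S'_\pi$, the change set obtained if $e_\pi$ were moved to the front of the permutation, and proves the dichotomy (Lemma~\ref{lem:possible:changes}) that $S_\pi=\emptyset$ unless $e_\pi$ is the minimum-rank element of $S'_\pi$, in which case $S_\pi\subseteq S'_\pi$. The heart of the proof is a symmetry statement (Claims~\ref{claim:mm:state} and~\ref{claim:mm:subset}, feeding into Lemma~\ref{lem:prob:min}): conditioned on $S'_\pi=F$, the rank of $e_\pi$ is uniform among the ranks of the elements of $F$, so the firing event has conditional probability exactly $1/|F|$. Then $\E_\pi|S_\pi|\le\sum_F\Pr[S'_\pi=F]\cdot\frac{1}{|F|}\cdot|F|=1$: the factor $1/|F|$ exactly cancels the size $|F|$ of the potential damage, which is what yields the clean bound of $1$ with no dependence on the degree. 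This conditioning step is the idea missing from your write-up.
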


We define $S'_\pi \subseteq E$ to intuitively be the set of edges that must be changed to maintain the invariant if $e_\pi$ is moved to the beginning of $\pi$.
That is, $S'_{\pi,0}=\{e_\pi\}$ and we define $S'_{\pi,i}$ using the same recursion as $S_{\pi,i}$, though the underlying permutation is now $\pi$ with $e_\pi$ moved to the beginning.
We then define $S'_\pi=\bigcup S'_{\pi,i}$.
The following is a counterpart of Lemma~2 in~\cite{censor2016optimal}.
\begin{lemma}\label{lem:possible:changes}
  If $\pi(e_\pi ) \neq \min\{\pi(e) \mid e \in S'_\pi\}$, then $S_\pi = \emptyset$.
  Otherwise, $S_\pi \subseteq S'_\pi$.
\end{lemma}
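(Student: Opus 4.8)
The plan is to compare the two recursive processes that generate $S_\pi$ and $S'_\pi$ level by level. Recall that $S'_\pi$ is built with respect to the permutation $\pi$ where $e_\pi$ has been moved to the front, so $S'_{\pi,0}=\{e_\pi\}$ unconditionally, whereas $S_{\pi,0}=\{e_\pi\}$ only if $e_\pi\in M_\pi$ and $S_{\pi,0}=\emptyset$ otherwise. First I would isolate the trivial case: if $e_\pi\notin M_\pi$, then $S_{\pi,0}=\emptyset$, and since the recursion defining $S_{\pi,i}$ grows $S_\pi$ only from a nonempty seed, an easy induction gives $S_{\pi,i}=\emptyset$ for all $i$, hence $S_\pi=\emptyset$; moreover in this case $e_\pi$ has at least one neighbor in $M_\pi$ with smaller $\pi$-rank, so moving $e_\pi$ to the front genuinely changes which edges are blocked, and one checks that $\min\{\pi(e)\mid e\in S'_\pi\}\neq\pi(e_\pi)$ is not the relevant quantity here — rather, the hypothesis ``$\pi(e_\pi)\neq\min\{\pi(e)\mid e\in S'_\pi\}$'' must be shown to force $S_\pi=\emptyset$, which is exactly the $e_\pi\notin M_\pi$ regime plus one more subcase. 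So the real content is: when $e_\pi\in M_\pi$, we have $S_{\pi,0}=S'_{\pi,0}=\{e_\pi\}$, and I claim $S_{\pi,i}\subseteq S'_{\pi,i}$ for every $i$ by induction.

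For the inductive step I would argue that the only way the two recursions diverge is through the set $I_\pi(\cdot)$ of lower-ranked neighbors, which differs between $\pi$ and the front-shifted permutation only in how $e_\pi$ is counted: in the shifted permutation $e_\pi$ precedes everything, so $I$-sets can only gain $e_\pi$, never lose other elements. Concretely, for an edge $e$ with $e_\pi\notin N_G(e)$ the sets $I_\pi(e)$ agree under both permutations, and for $e$ with $e_\pi\in N_G(e)$ the shifted $I$-set is $I_\pi(e)\cup\{e_\pi\}$ or $I_\pi(e)$ depending on whether $\pi(e_\pi)<\pi(e)$ already. Since $e_\pi\in S_{\pi,0}\cap S'_{\pi,0}$, the extra membership of $e_\pi$ in a shifted $I$-set can only \emph{help} an edge enter $S'_{\pi,i}$ (it makes the ``$I_\pi(e)\cap M_\pi\subseteq\bigcup_{j<i}S_{\pi,j}$'' condition easier and the ``$S_{\pi,i-1}\cap I_\pi(e)\neq\emptyset$'' condition easier once $e_\pi$ is in), so the inductive hypothesis $\bigcup_{j<i}S_{\pi,j}\subseteq\bigcup_{j<i}S'_{\pi,j}$ propagates to level $i$. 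Care is needed with the two clauses of the recursion separately — the ``$e\in M_\pi$'' clause and the ``$e\notin M_\pi$'' clause — but in both, monotonicity in the accumulated set $\bigcup_{j<i}S_{\pi,j}$ and the one-directional change in $I$-sets give the containment. Then $S_\pi=\bigcup_i S_{\pi,i}\subseteq\bigcup_i S'_{\pi,i}=S'_\pi$.

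Finally I would dispatch the hypothesis of the first assertion: if $\pi(e_\pi)\neq\min\{\pi(e)\mid e\in S'_\pi\}$, then since $e_\pi\in S'_\pi$ always, this says some edge of $S'_\pi$ has smaller $\pi$-rank than $e_\pi$; I claim this can happen only when $e_\pi\notin M_\pi$. Indeed, if $e_\pi\in M_\pi$ then by the matching invariant all lower-ranked neighbors of $e_\pi$ are outside $M_\pi$, and tracing the recursion for $S'_\pi$ from the seed $\{e_\pi\}$, every edge that enters $S'_{\pi,i}$ for $i\geq1$ is reached through a chain of $I$-relations emanating from $e_\pi$ in the shifted permutation, where $e_\pi$ is first — so every such edge has rank larger than $e_\pi$ in the shifted order, and one checks this also forces larger $\pi$-rank for the relevant edges. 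Hence $e_\pi\in M_\pi$ would give $\pi(e_\pi)=\min\{\pi(e)\mid e\in S'_\pi\}$, contradiction; so $e_\pi\notin M_\pi$, and by the trivial case above $S_\pi=\emptyset$. I expect the main obstacle to be the bookkeeping around the ``$e\notin M_\pi$'' clause of the recursion, where the condition involves a containment $I_\pi(e)\cap M_\pi\subseteq\bigcup_{j<i}S_{\pi,j}$ that is monotone in the wrong-looking direction; making precise that enlarging the accumulated set and enlarging $I$-sets (only by $e_\pi$) jointly preserve the clause — rather than accidentally removing edges from $S'$ — is the delicate point, and mirrors Lemma~2 of~\cite{censor2016optimal}.
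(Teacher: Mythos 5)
Your proposal is correct and follows essentially the same route as the paper's proof: both hinge on the observation that $e_\pi\notin M_\pi$ forces $S_{\pi,0}=\emptyset$ and hence $S_\pi=\emptyset$, that $\pi(e_\pi)\neq\min\{\pi(e)\mid e\in S'_\pi\}$ can only occur in that regime, and on a level-by-level induction comparing the two recursions via the fact that the shifted $I$-sets differ from the original ones only by possibly gaining $e_\pi$. The only cosmetic differences are that you argue the first assertion by contrapositive (the paper works with the minimizer $e_{\min}$ directly) and you establish the containment $S_{\pi,i}\subseteq S'_{\pi,i}$ where the paper shows equality, which is equally sufficient.
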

\begin{proof}
  First, suppose that $\pi(e_\pi )\neq \min \{\pi(e) \mid e \in S'_\pi\}$.
  We show that the invariant still holds after the vertex deletion, and thus $S_\pi = \emptyset$.
  Consider the edge $e_{\min} \in E$, for which $\pi(e_{\min}) = \min \{\pi(e) \mid e \in S'_\pi\}$.
  Recall that $e_\pi \in S'_\pi$ by construction.
  Hence if $e_{\min}\neq e_\pi $, then $\pi(e_{\min})<\pi(e_\pi )$, which then implies $e_{\min}$ is not affected by $e_\pi $ in the original permutation $\pi$ and thus $e_{\min}\notin S_\pi$.

  Now suppose by way of contradiction that $e_{\min}\notin M_\pi$, then $e_{\min}$ has a neighboring edge $e'$ in $M_\pi$ such that $\pi(e')<\pi(e)$ since $M_\pi$ is maximal and was constructed greedily.
  Due to the minimality of $\pi(e_{\min})$, we must also have $e'\notin S'_\pi$, in which case $e_{\min}$ would not have been added to $S'_\pi$ at any step in the recursion, contradicting the definition of $e_{\min}$.
  Thus it follows that $e_\pi \in S'_\pi$.

  Due to the minimality of $\pi(e_{\min})$, it must be that $e_{\min} \in S'_{\pi,1}$, which implies that $e_{\min}$ intersects with $e_\pi $.
  But since $\pi(e_{\min})<\pi(e_\pi )$ and $e_{\min}$ intersects with $e_\pi $, then $e_\pi $ was not in $M_\pi$, and hence $S_\pi=\emptyset$.

  Suppose that $\pi(e_\pi )=\min \{\pi(e) \mid e \in S'_\pi\}$.
  We have nothing to show when $S_{\pi,0}=\emptyset$ because then $S_\pi =\emptyset\subseteq S'_\pi$.
  Suppose $S_{\pi,0}=\{e_\pi \}$.
  Then for each edge $e\in S'_{\pi,1}$, we have $\pi(e_\pi )<\pi(e)$ and thus $e\in S_{\pi,1}$.
  Moreover, each edge $e\notin S'_{\pi,1}$ has some neighboring edge $e'\in M_\pi$ such that $\pi(e')<\pi(e)$ and thus $e\notin S_{\pi,1}$.
  Hence, $S'_{\pi,1}=S_{\pi,1}$ and by induction, we have $S'_\pi=S_\pi$.
\end{proof}

For a permutation $\tau$ over $E$, we define $S'(\tau)=S'(G,G',\tau,e_\pi )$ as the set corresponding to $S'$ through the order of the edges induced by $\tau$.
We denote by $\Pi_F$ the set of all permutations $\tau$ for which it holds that $S'(\tau) = F$.

The proofs of Claims 4 and 5 in~\cite{censor2016optimal} can be directly used to show the following claims.
Again, the only difference is that we track permutations over edges rather than vertices.
Nevertheless, we give the proofs for completeness.
\begin{claim}\label{claim:mm:state}
  Let $F \subseteq E$ be a set of edges, and let $\pi$ and $\sigma$ be two permutations such that $\pi|_F = \sigma|_F$ and $\pi|_{E \setminus F} = \sigma|_{E \setminus F}$.
  Assume $\pi \in \Pi_F$.
  We have that $E \setminus F \subseteq E \setminus S'(\sigma)$ and every $e \in E \setminus F$ has the same state, i.e., whether or not $E\in M$, according to $\pi$ and $\sigma$.
\end{claim}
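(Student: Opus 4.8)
The claim is a \emph{non-interference} statement: shuffling the edges of $F$ relative to those of $E\setminus F$, while keeping the internal order of each block fixed, changes neither the greedy decision on any edge outside $F$ nor the cascade set beyond $F$. The plan is to follow the proof of Claim~4 of~\cite{censor2016optimal} essentially verbatim, translating it from vertices to edges. There are two ingredients: a structural description of the block $F=S'(\pi)$, and an induction along $\sigma$.

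The first ingredient is read off from the fixed-point descriptions already in place. Recall that an edge lies in $M_\tau$ iff none of its $\tau$-earlier neighbours does, and that $S'$ is the \emph{least} set containing the distinguished edge $e_\pi$ and closed under the cascade rule: a matched edge with an earlier neighbour already in the set joins it, and an unmatched edge all of whose earlier matched neighbours are already in the set joins it. Combining closure with $S'(\pi)=F$, every $e\in E\setminus F$ falls into one of two cases: it is \emph{stably matched} --- $e\in M_\pi$ and no $\pi$-earlier neighbour of $e$ lies in $F$ --- or \emph{stably unmatched} --- $e\notin M_\pi$ and $e$ has a $\pi$-earlier neighbour $g\in M_\pi\setminus F$. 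These are the certificates that make an edge's state robust to the shuffle.

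The core step is an induction on rank in $\sigma$ (after the harmless relocation of $e_\pi$ to the front, which moves only a single edge of $F$ and hence preserves both agreement hypotheses). Processing edges in $\sigma$-order, I would maintain: (i) every processed $e\in E\setminus F$ has the same matching state under $\sigma$ as under $\pi$; and (ii) every processed $e\in F$ that $\sigma$ puts into the matching has no neighbour in $E\setminus F$ that is matched under $\pi$. For a new $e\in E\setminus F$: in the stably-matched case, invariant~(i) kills its earlier $E\setminus F$-neighbours and invariant~(ii) kills any earlier $F$-neighbour, so $e$ gets matched exactly when it should; in the stably-unmatched case the certifying neighbour $g\in M_\pi\setminus F$ is $\sigma$-earlier than $e$ and, by~(i), matched under $\sigma$, so it blocks $e$. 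For a new $e\in F$, invariant~(ii) is re-established by contradiction: a neighbour $f\in E\setminus F$ with $f\in M_\pi$ that $\sigma$-precedes $e$ would block $e$ by~(i), while one that $\sigma$-follows $e$ forces $e\notin M_\pi$, and then the cascade rule defining $F$ would place the $\pi$-earlier matched neighbour $f$ of $e$ into $F$, which is absurd. Since greedy never unmatches an edge, these invariants persist, and at termination $M_\sigma$ and $M_\pi$ agree on $E\setminus F$: this is the ``same state'' half of the claim.

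For the other half, $E\setminus F\subseteq E\setminus S'(\sigma)$, I would show that $F$ is \emph{already} closed under the cascade rule computed with respect to $\sigma$ and $M_\sigma$, so that minimality of $S'(\sigma)$ together with $e_\pi\in F$ forces $S'(\sigma)\subseteq F$. Concretely: a $\sigma$-matched edge with a $\sigma$-earlier neighbour in $F$, and a $\sigma$-unmatched edge all of whose $\sigma$-earlier matched neighbours lie in $F$, cannot lie in $E\setminus F$, since assuming otherwise contradicts that edge's stably-matched or stably-unmatched certificate --- using that $\pi$ and $\sigma$ agree on $E\setminus F$, plus a short case split on the relative $\pi$-order of the two adjacent edges when one is in $F$ and the other is not, which the cascade closure resolves. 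The main obstacle throughout is exactly this bookkeeping: $\pi$ and $\sigma$ order $F$ and $E\setminus F$ arbitrarily against each other, so every step needs a case split on which of two adjacent edges comes first, and the argument closes only because the set $F$ that defines the cascade is ``downward closed'' in precisely the direction needed to rule out the bad orderings.
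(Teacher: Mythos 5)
Your proposal is correct and takes essentially the same route as the paper's proof (both are transcriptions of Claim~4 of~\cite{censor2016optimal} from vertices to edges): an induction along the shared ordering, driven by the two closure properties of $F=S'(\pi)$ that you call the stably-matched and stably-unmatched certificates, with your extra invariant~(ii) and final closure argument being only bookkeeping variants of what the paper does inline. The one imprecise spot is the re-establishment of invariant~(ii) when the offending neighbour $f\in M_\pi\setminus F$ is $\pi$-\emph{later} than $e$: there $f$ is not a ``$\pi$-earlier matched neighbour of $e$,'' and the contradiction instead comes from applying the matched branch of the cascade rule to $f$ itself (since $e\in F$ is then a $\pi$-earlier neighbour of $f$, forcing $f\in F$) --- exactly the $\pi$-order case split you flag at the end, so this is a phrasing slip rather than a gap.
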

\begin{proof}
For $e\in E\setminus F$, we show that $e\in E\setminus S'(\sigma)$.
Moreover, we prove by induction that the order of the edges in $E\setminus F$ induces the same state under $\pi$ and under $\sigma$.

We first consider the base case, where $e\in E\setminus F$ has the smallest order, according to $\pi$ and $\sigma$.
Suppose by way of contradiction, that $e$ intersects with some edge $e'\in F$.
Since $e'\in F$ and $\pi\in \Pi_F$, then either before or after the graph update, we have $e'\in M_{\pi}$.
But $e\notin F$, so then $e$ cannot be in $M_{\pi}$, which implies the existence of some edge in $I_{\pi}(e)\cap E\setminus F$ that is in $M_{\pi}$.
However, this contradicts the minimality of $e\in E\setminus F$.
Hence, all neighbors of $e$ are in $E\setminus F$.

Because $\pi|_{E\setminus F}=\sigma|_{E\setminus F}$, then $e$ has smaller rank than all of its neighbors, according to $\sigma$.
Since $e$ also has smaller rank than all of its neighbors, according to $\pi$, then the matching $M$ has the same state upon the edge $e$ in both $\pi$ and $\sigma$.
Thus, $e\notin S'(\sigma)$ because $e\notin S'(\pi)$, which completes the base case.

To show the inductive step, consider a fixed edge $e\in E\setminus F$ and suppose the statement holds for all edges $e'\in E\setminus F\cap I_{\pi}(e)$.
We separate the analysis into cases, depending on whether $e$ is incident to any edges in $F$.

If $e$ is not incident to any edges in $F$, then either $e\in M_{\pi}$ or $e\notin M_{\pi}$.
First suppose $e\notin M_{\pi}$, so that some edge $z\in I_{\pi}\cap E\setminus F$ is in the matching $M_{\pi}$ induced by $\pi$.
Since $z\in E\setminus S'(\sigma)$ by the inductive hypothesis, then $z$ is also in the matching $M_{\sigma}$ induced by $\sigma$.
Because $\pi|_{E\setminus F}=\sigma|_{E\setminus F}$, then $e\notin M_{\sigma}$ as well so that $e$ has the same state according to $\pi$ and $\sigma$.
Similarly, if $e\in M_{\pi}$, then $w\notin M_{\pi}$ for any $w\in I_{\pi}(e)$.
Moreover, for any $w\in I_{\sigma}(e)$, we have by assumption that as a neighbor of $e$, $w\notin F$.
Thus $I_{\sigma}(e)\subseteq I_{\pi}(e)$ since $\pi_{E\setminus F}=\sigma_{E\setminus F}$.
By the inductive hypothesis, $w\in E\setminus S'(\sigma)$, so that $w\in M_{\sigma}$.
Hence, $e\in E\setminus S'(\sigma)$ and $e\in M_{\pi}$, so that its state is the same under $\pi$ and $\sigma$, as desired.

On the other hand, if $e$ is incident to some $w\in F$, then either before or after the update, we have $w\in M_{\pi}$, since $\pi\in\Pi_F$.
But $e\notin F$, so then $e\notin M_{\pi}$ and thus there exists $z\in I_{\pi}(e)\setminus F$ with $z\in M_{\pi}$.
By the inductive hypothesis, we have $z\in E\setminus S'(\sigma)$ and $z\in M_{\sigma}$.
Thus $\pi|_{E\setminus F}=\sigma|_{E\setminus F}$ implies $e\notin M_{\sigma}$ and $e\in E\setminus S'(\sigma)$, which completes the induction.
\end{proof}

\begin{claim}\label{claim:mm:subset}
  Let $F \subseteq E$ be a set of edges, and let $\pi$ and $\sigma$ be two permutations such that $\pi|_F = \sigma|_F$ and $\pi|_{V \setminus F} = \sigma|_{V \setminus F}$.
  Assume $\pi\in \Pi_F$.
  We have that $F\subseteq S'(\sigma)$.
\end{claim}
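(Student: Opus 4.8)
The plan is to prove $F \subseteq S'(\sigma)$ by induction on the rank (according to $\sigma$, equivalently $\pi$, since $\pi|_F = \sigma|_F$) of the edges of $F$. The key structural input is Claim~\ref{claim:mm:state}, which tells us that every edge in $E \setminus F$ has the same matching state under $\pi$ and $\sigma$, and that $E \setminus F \subseteq E \setminus S'(\sigma)$; so all the ``action'' is confined to $F$ itself, and we only need to track how edges of $F$ interact with each other under $\sigma$. Recall also that $S'(\pi) = F$ by hypothesis, so each edge $e \in F$ was placed into some recursion layer $S'_{\pi,i}$; I would induct on this layer index $i$ together with the $\sigma$-rank, mirroring the definition of $S'$.

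First I would handle the base layer. The edge $e_\pi$ (the lowest-ranked edge incident to $v$) lies in $S'_{\sigma,0}$ by definition of $S'$, regardless of the permutation, so $e_\pi \in S'(\sigma)$. For the inductive step, fix $e \in F$ and suppose every edge of $F$ that enters $S'(\pi)$ at an earlier layer, and every lower-$\sigma$-ranked edge of $F$, already lies in $S'(\sigma)$. I split into the two cases from the recursive definition of $S'_{\pi,i}$: either $e \in M_\pi$ and some neighbor $e' \in I_\pi(e)$ lies in $S'_\pi$ at an earlier layer, or $e \notin M_\pi$ and every neighbor in $I_\pi(e) \cap M_\pi$ is already in an earlier $S'$-layer. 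In the first case, $e'$ is a neighbor of $e$; if $e' \in F$ then by induction $e' \in S'(\sigma)$, and since $\pi|_F = \sigma|_F$ the rank comparison $\sigma(e') < \sigma(e)$ still holds, so $e$ enters $S'(\sigma)$ as well. If instead $e' \in E \setminus F$, Claim~\ref{claim:mm:state} says $e' \notin S'(\sigma)$ and that $e'$ has the same state under $\sigma$, which forces $e \notin M_\sigma$ — but then I need to re-examine why $e \in S'(\sigma)$; in this situation I would argue that $e$ must instead be captured by the ``$e \notin M$'' branch of the $S'(\sigma)$ recursion, using that all its lower-ranked matched neighbors are either in $E\setminus F$ with the same state or in $F$ and hence (by induction) already in $S'(\sigma)$. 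The second case is symmetric: every lower-ranked matched neighbor of $e$ is, by Claim~\ref{claim:mm:state} plus the inductive hypothesis, already in $\bigcup_{j<i} S'(\sigma)$-layers, so $e$ qualifies for inclusion in $S'(\sigma)$ by the corresponding clause of the recursion.

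The main obstacle I anticipate is bookkeeping the interaction between the two indexing schemes — the $S'(\pi)$-recursion layer versus the $\sigma$-rank — and making sure the induction is well-founded in both simultaneously (a neighbor $e'$ with smaller $\pi$-rank need not have a smaller $S'(\pi)$-layer, and vice versa). The clean fix is to induct purely on $\sigma$-rank (which equals $\pi$-rank on $F$) and, at each edge, appeal to the \emph{definition} of $S'(\sigma)$ as the closure under the recursion: I just need every relevant lower-ranked neighbor to already be known to lie in (or outside, with fixed state) $S'(\sigma)$, which the combination of the inductive hypothesis on $F$ and Claim~\ref{claim:mm:state} on $E \setminus F$ supplies. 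A secondary subtlety is the edge case where $e$ has a neighbor in $F$ that is \emph{not} lower-ranked; here I rely on the fact that such an edge cannot have prevented $e$'s inclusion in $S'(\pi)$ (since only $I_\pi(e)$ matters), so it is equally irrelevant under $\sigma$. Once these are organized, the proof is a direct translation of Claim~5 in~\cite{censor2016optimal} from vertices to edges, exactly as the surrounding text promises.
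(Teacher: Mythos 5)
Your proposal is correct and follows essentially the same route as the paper's proof: induct on the rank of the edges of $F$ (with $e_\pi$ as the base case), use the inductive hypothesis for lower-ranked neighbors inside $F$, and invoke Claim~\ref{claim:mm:state} to conclude that neighbors outside $F$ keep their state and stay outside $S'(\sigma)$, so that $e$ qualifies for $S'(\sigma)$ under the appropriate branch of the recursion. The sub-case you worry about (witness $e' \in E\setminus F$ for an edge $e \in M_\pi$) in fact cannot occur, since the first branch of the $S'$-recursion requires its witness to lie in $S'(\pi)=F$, so your hedging there is unnecessary but harmless.
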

\begin{proof}
Let $e_\pi \in F$ be some fixed edge.
We use a similar strategy to show by induction on the order of the edges in $F$ according to $\pi$, with the modification that $e_\pi $ is the first edge in the permutation, that for each edge $e\in F$, we also have $e\in S'(\sigma)$.
We first consider $e_\pi $ as the base case.

Since $e_\pi \in F$ and $\pi\in\Pi_F$, then $e_\pi \in S'(\pi)$ and similarly, $e_\pi \in S'(\sigma)$.
Now for the inductive step, let $e\in F$ be an edge such that the statement holds for all edges in $F$ with smaller rank than $e$, according to $\pi$.
Because $e\in F$ and $e\neq e_\pi $, then there exists $w\in I_{\pi}(e)\cap F$.
Since $\pi|_F=\sigma|_F$ and $e\in F$, then by the inductive hypothesis, we have that $w\in S'(\sigma)$ and in particular $I_{\sigma}(e)\cap S'(\sigma)\neq\emptyset$.

Let $\phi\in I_{\sigma}(e)$, which is non-empty since $w\in I_{\pi}(e)$ and $\pi|_F=\sigma|_F$.
Now if $\phi\in F$, then since $e\in F$, we must have from our inductive hypothesis that $\phi\in S'(\sigma)$. 
If $\phi\notin F$, then $\phi\notin M_{\pi}$ in order for $e\in F$.
By Claim~\ref{claim:mm:state}, we thus have $\phi\in E\setminus S'(\sigma)$ and $\phi\notin M_{\sigma}$.
Hence, all neighbors of $e$ in $I_{\sigma}(e)$ are either in $S'(\sigma)$ or not in $M_{\sigma}$.
Since $I_{\sigma}\cap S'(\sigma)\neq\emptyset$, then $e\in S'(\sigma)$.
\end{proof}

These claims combined imply that if $\pi|_F = \sigma|_F$ and $\pi|_{V\setminus F} = \sigma|_{V\setminus F}$ then $\sigma \in \Pi_F$ if and only if $\pi \in \Pi_F$.

The following proof of Lemma~\ref{lem:prob:min} is almost exactly the same as that of Lemma 3 in~\cite{censor2016optimal}, with the focus on edges in a maximal matching rather than vertices in a maximal independent set.
\begin{lemma}\label{lem:prob:min}
  For any set of edges $F \subseteq E$, it holds that
  \[
    \Pr\Bigl[\pi(e_\pi) = \min\{\pi(e) \mid e \in F\} \mid S' = F\Bigr] = \frac{1}{|F|}.
  \]
\end{lemma}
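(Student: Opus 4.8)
The plan is to condition on the event $S' = F$ and then argue, by a symmetry argument on the relative order of the edges \emph{within} $F$, that the minimum among $F$ is equally likely to be any of the $|F|$ edges of $F$ (and in particular $e_\pi$, which always lies in $F$ whenever $F \neq \emptyset$, since $e_\pi \in S'_{\pi,0} \subseteq S'_\pi$). The key structural input is the pair of Claims~\ref{claim:mm:state} and~\ref{claim:mm:subset}: together they say that the event $\{S' = F\}$, i.e. $\pi \in \Pi_F$, is determined by the restrictions $\pi|_F$ and $\pi|_{E \setminus F}$ \emph{separately}, in the sense that if $\pi|_F = \sigma|_F$ and $\pi|_{E\setminus F} = \sigma|_{E\setminus F}$ then $\pi \in \Pi_F \iff \sigma \in \Pi_F$. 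Hence membership in $\Pi_F$ does not depend on how the ranks of $F$ are interleaved with the ranks of $E \setminus F$, only on the internal order of $F$ and the internal order of $E \setminus F$.

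Concretely, first I would fix an arbitrary assignment of the $|F|$ smallest-relative positions to $E$—more precisely, partition the set of all permutations of $E$ according to (i) the set of \emph{positions} occupied by $F$, (ii) the internal order $\pi|_{E\setminus F}$, and leave (iii) the internal order $\pi|_F$ free. Within each such block, the $|F|!$ permutations obtained by permuting the ranks assigned to $F$ among themselves are in bijection, and by the claims the property ``$\pi \in \Pi_F$'' is constant on the block (it depends only on $\pi|_F$ up to order-isomorphism and on $\pi|_{E\setminus F}$—here one has to be slightly careful: the claims as stated assume $\pi|_F = \sigma|_F$, so really I should further group by the \emph{order type} of $\pi|_F$, but the standard move is that $\Pi_F$ is a union of blocks each closed under relabeling positions of $F$). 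Then, restricted to the event $S' = F$, within any such block the edge of $F$ receiving the smallest rank is uniformly distributed over the $|F|$ elements of $F$, by the obvious symmetry of permuting $F$'s ranks. Averaging over blocks gives the claim.

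The step I expect to be the main obstacle is making the ``depends only on the internal orders'' statement precise enough to license the symmetrization. Claims~\ref{claim:mm:state} and~\ref{claim:mm:subset} are phrased for two permutations that agree \emph{exactly} on $F$ and exactly on $E\setminus F$; to get the full conditioning argument I need the stronger-sounding but equivalent fact that $\Pi_F$ is a union of ``interleaving classes'' — sets of permutations sharing a fixed pair $(\pi|_F, \pi|_{E\setminus F})$ as abstract linear orders and differing only in how the two chains are merged. This follows from the claims by a short argument: any two permutations with the same pair of restrictions can be connected, but more cleanly one observes that the claims already establish that $\pi\in\Pi_F$ is invariant under any rank relabeling that preserves $\pi|_F$ and $\pi|_{E\setminus F}$ as orders, which is exactly invariance under re-interleaving; I would state this as a one-line corollary of the two claims (``these claims combined imply that if $\pi|_F = \sigma|_F$ and $\pi|_{E\setminus F} = \sigma|_{E\setminus F}$ then $\sigma \in \Pi_F \iff \pi \in \Pi_F$,'' which the excerpt already records) and then apply it to the $|F|!$ permutations that permute $F$'s ranks. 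Once that invariance is in hand, the uniform-distribution conclusion $\Pr[\pi(e_\pi)=\min\{\pi(e):e\in F\} \mid S'=F] = 1/|F|$ is immediate from symmetry, and Lemma~\ref{lem:expected:change} then follows by combining with Lemma~\ref{lem:possible:changes} and summing $\sum_F \Pr[S'=F]\cdot |F| \cdot \tfrac1{|F|} = \sum_F \Pr[S'=F] = 1$.
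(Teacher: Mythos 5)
Your high-level plan (decompose the event $\{S'=F\}$ into blocks on which a symmetry argument applies) is the same as the paper's, but the symmetry group you invoke is the wrong one, and this is a genuine gap rather than a presentational issue. You symmetrize over the $|F|!$ permutations obtained by permuting the ranks of $F$ among themselves, and assert that $\Pi_F$ is ``closed under relabeling positions of $F$.'' Claims~\ref{claim:mm:state} and~\ref{claim:mm:subset} do not give this: they give invariance under re-interleaving $\pi|_F$ with $\pi|_{E\setminus F}$ while keeping \emph{both internal orders fixed}. The set $S'$ genuinely depends on the internal order of $F$ (which edges of $F$ enter the matching in the cascade is determined by their relative ranks), so $\Pi_F$ is not invariant under permuting $F$'s ranks. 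You notice this tension yourself, but your proposed fix --- further grouping by the order type of $\pi|_F$ --- destroys the argument: once the internal order of $F$ is fixed within a block, the edge of $F$ achieving the minimum rank is \emph{deterministic} on that block, so there is no residual symmetry from which to extract the factor $1/|F|$. Moreover, the stronger statement you aim for (that the minimum of $F$ is uniform over all $|F|$ edges of $F$ conditioned on $S'=F$) need not be true; only the position of $e_\pi$ is ``free.''

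The missing idea is the special role of $e_\pi$ in the definition of $S'$: by construction, $S'(\tau)$ is computed with $e_\pi$ moved to the \emph{beginning} of $\tau$, so $S'$ does not depend on the actual rank of $e_\pi$ in $\pi$ at all. Hence $\Pi_F$ is a union of blocks of the form $\{\pi : \pi|_{F\setminus\{e_\pi\}}=\sigma^{+},\ \pi|_{E\setminus F}=\sigma^{-}\}$, in which the internal order of $F\setminus\{e_\pi\}$ and of $E\setminus F$ are pinned down but the rank of $e_\pi$ ranges freely. Within each such block the relative position of $e_\pi$ among the $|F|$ edges of $F$ is uniform over $[|F|]$ by the symmetry of a uniformly random permutation, so
\[
  \Pr\bigl[\pi(e_\pi)=\min\{\pi(e)\mid e\in F\}\ \big|\ \pi|_{F\setminus\{e_\pi\}}=\sigma^{+},\ \pi|_{E\setminus F}=\sigma^{-}\bigr]=\frac{1}{|F|},
\]
and averaging over the blocks comprising $\Pi_F$ gives the lemma. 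This is exactly the paper's argument (its quantities $p_{\sigma^{+},\sigma^{-}}$); your write-up needs to replace the $|F|!$-fold symmetrization by this $|F|$-fold one and to justify the block decomposition of $\Pi_F$ using both the claims \emph{and} the rank-independence of $S'$ in $e_\pi$.
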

\begin{proof}
Let $\tau$ be a fixed permutation.
Let $\sigma^+$ be a permutation on $F\setminus\{e_\pi\}$ and $\sigma^-$ be a permutation on $E\setminus F$.
Let $p_{\sigma^+,\sigma^-}:=\Pr[\pi(e_\pi)\le\pi(e)\; \forall \phi\in F\mid\pi|_{F\setminus\{e_\pi\}}=\sigma^+\,\wedge\,\pi|_{E\setminus F}=\sigma^-]$ denote the probability that $e_\pi$ has smaller rank than all edges $e \in F$ induced by a permutation $\pi$ that preserves $\sigma^+$ and $\sigma^-$.

We first claim that for any pair of permutations $\sigma^+_1,\sigma^-_1$ and $\sigma^+_2,\sigma^-_2$ on $F\setminus\{e_\pi\}$ and $E\setminus F$, respectively, then the permutation ${(\sigma^+_1)}^{-1}\sigma^+_2$ on $F\setminus\{e_\pi\}$ and the permutation ${(\sigma^-_1)}^{-1}\sigma^-_2$ on $E\setminus F$ are invariant on the property $\pi(e_\pi)\le\pi(e)$ for all $e \in F$.
Thus, $p_{\sigma^+_1,\sigma^-_1}=p_{\sigma^+_2,\sigma^-_2}$.
Since $e_\pi\in F$, then we have $\Pr[\pi(e_\pi)\le\pi(e)\,\forall e \in F]=\frac{1}{|F|}$ and hence,
\begin{align*}
\frac{1}{|F|}&=\Pr[\pi(e_\pi)\le\pi(e)\,\forall e\in F]=\sum_{\tau^+,\tau^-}p_{\tau^+,\tau^-}\Pr\left[\pi|_{F\setminus\{e_\pi\}}=\tau^+\,\wedge\pi|_{E\setminus F}=\tau^-\right]\\
&=\sum_{\tau^+,\tau^-}p_{\sigma^+,\sigma^-}\Pr\left[\pi|_{F\setminus\{e_\pi\}}=\tau^+\,\wedge\pi|_{E\setminus F}=\tau^-\right]=p_{\sigma^+,\sigma^-}.
\end{align*}
By Claim~\ref{claim:mm:state} and Claim~\ref{claim:mm:subset}, there exists a set of $t$ pairs of permutations $\{(\sigma^+_1,\sigma^-_1),\ldots,(\sigma^+_t,\sigma^-_t)\}$ on $F\setminus\{e_\pi\}$ and $E\setminus F$, respectively, such that $\Pi_F=\{\pi\mid\exists i, \pi|_{F\setminus\{e_\pi\}}=\sigma^+_i\wedge\pi|_{E\setminus F}=\sigma^-_i\}$ for every set $F\subseteq E$.
Thus,
\begin{align*}
\Pr[\pi(e_\pi)&\le\pi(e)\,\forall e\in F]=\sum_{i=1}^t p_{\sigma^+_i,\sigma^-_i}\Pr\left[\pi|_{F\setminus\{e_\pi\}}=\sigma^+_i\wedge \pi|_{E\setminus F}=\sigma^-_i \mid \pi\in\Pi_F\right]\\
&= \frac{1}{|F|}\sum_{i=1}^t\Pr\left[\pi|_{F\setminus\{e_\pi\}}=\sigma^+_i\wedge \pi|_{E\setminus F}=\sigma^-_i \mid \pi\in\Pi_F\right]=\frac{1}{|F|}.
\end{align*}
In other words, $\Pr[\pi(e_\pi) = \min\{\pi(e) \mid e \in F\} \mid S' = F] =\frac{1}{|F|}$, since  $\Pi_F$ is the set of all permutations $\tau$ for which it holds that $S'(\tau) = F$.
\end{proof}

The proof of Lemma~\ref{lem:expected:change} follows from Lemma~\ref{lem:possible:changes} and Lemma~\ref{lem:prob:min}.

\end{document}